\theoremstyle{plain}
\newtheorem{theorem}{Theorem}[section]
\newtheorem*{theorem*}{Theorem}
\newtheorem{corollary}[theorem]{Corollary}
\newtheorem{lemma}[theorem]{Lemma}
\newtheorem{definition}[theorem]{Definition}
\newcommand{\E}{\mathbb{E}}
\newcommand{\calP}{\mathcal{P}}
\newcommand{\poly}{\mathrm{poly}}
\newcommand{\spec}{\mathcal{S}_{G}(\alpha)}
\newcommand{\ind}[1]{\mathbb{I}\left[#1\right]}
\newcommand{\prob}[1]{\mathbb{P}\left[#1\right]}
\newcommand{\expec}[1]{\mathbb{E}\left[#1\right]}
\newcommand{\bigO}[1]{\mathcal{O}\!\left(#1\right)}
\newcommand{\hop}{\text{hop}}
\newcommand{\obl}{\text{cong}}
\newcommand{\oblreal}{\text{cong}_{\mathbb{R}}}
\newcommand{\oblint}{\text{cong}_{\mathbb{Z}}}
\newcommand{\siz}{\text{siz}}
\newcommand{\cut}{\text{cut}_{G}}
\newcommand{\cou}{\text{cnt}_{G}}
\newcommand{\opt}{\text{opt}_{G, \mathbb{R}}}
\newcommand{\optint}{\text{opt}_{G, \mathbb{Z}}}
\newcommand{\supp}{\text{supp}}
\newcommand{\dil}{\text{dil}}
\newcommand{\zeroonedemand}{$\{0, 1\}$-demand\xspace}
\newcommand{\zeroonedemands}{$\{0, 1\}$-demands\xspace}
\title{Sparse Semi-Oblivious Routing:\\{\large Few Random Paths Suffice}\footnote{Supported in part by NSF grants CCF-1814603, CCF-1910588, NSF CAREER award CCF-1750808, a Sloan Research Fellowship, and funding from the European Research Council (ERC) under the European Union's Horizon 2020 research and innovation program (grant agreement No. 949272).}}  % ARXIV title!!!
\author{
   Goran Zuzic\\
   \small ETH Zurich\\
   \small goran.zuzic@inf.ethz.ch
   \and\textcircled{r}\footnote{The author ordering was randomized using \url{https://www.aeaweb.org/journals/policies/random-author-order/generator}. We kindly ask that citations of this work list the authors separated by \texttt{\textbackslash textcircled\{r\}} instead of commas: Zuzic \textcircled{r} Haeupler \textcircled{r} Roeyskoe.}\and
   Bernhard Haeupler\\  
   \small ETH Z\"urich and\\
   \small Carnegie Mellon University \\
   \small bernhard.haeupler@inf.ethz.ch
   \and\textcircled{r}\and
   Antti Roeyskoe\\
   \small ETH Z\"urich\\
   \small aroeyskoe@ethz.ch
 } 
\date{}
\begin{document}

%New draft
%\begin{abstract}
%The Packet routing problem in a network $G=(V,E)$ consists of a demand given by a set of $(s_i,t_i)$ node pairs, where

%% arxiv requires this .. comment out for PODC
\maketitle

\begin{abstract}
  The packet routing problem asks to select routing paths that minimize the maximum edge congestion for a set of packets specified by source-destination vertex pairs. We revisit a semi-oblivious approach to this problem: each source-destination pair is assigned a small set of well-chosen predefined paths before the demand is revealed, while the sending rates along the paths can be optimally adapted to the demand. This approach has been considered in practice in network traffic engineering due to its superior robustness and performance as compared to both oblivious routing and traditional traffic engineering approaches.
  
  \medskip

  We show the existence of sparse semi-oblivious routings: only $\mathcal{O}(\log n)$ paths are selected between each pair of vertices. The routing is $(\poly \log n)$-competitive \emph{for all demands} against the offline-optimal congestion objective, even on worst-case graphs. Even for the well-studied case of hypercubes, no such result was known: our deterministic and oblivious selection of $\mathcal{O}(\log n)$ paths is the first simple construction of a deterministic oblivious structure that near-optimally assigns source-destination pairs to few routes. Prior work shows that a deterministic selection of a single path in a hypercube yields unacceptable performance; our results contrast the current solely-negative landscape of results for semi-oblivious routing. We give the sparsity-competitiveness trade-off for lower sparsities and nearly match it with a lower bound.

  \medskip

  Our construction is extremely simple: Sample the few paths from any competitive oblivious routing. Indeed, this natural construction was used in traffic engineering as an unproven heuristic. We give a satisfactory theoretical justification for their empirical effectiveness: the competitiveness of the construction improves exponentially with the number of paths. In other words, semi-oblivious routing benefits from the power of random choices. Finally, when combined with the recent hop-constrained oblivious routing, we also obtain sparse and competitive structures for the completion-time objective.
\end{abstract}

%%%%%% PODC requires this after .. comment out for arxiv  %%%%%%%%%%%%%%%%%%%%%%
%\maketitle
%%%%%%%%%%%%%%%%%%%%%%%%%%%%%%%%%%%%%%%%%%%%%%%%%%%%%%%%%%%%%%%%%%%%%%%%%%%%%%%%

%%%%% ARXIV stuff .. comment out for PODC %%%%%%%%%%%%%%%%%%%%%%%%%%%%%%%
\thispagestyle{empty}
\setcounter{page}{0}

\newpage

\tableofcontents
\thispagestyle{empty} % in case you don't want to number this page
\setcounter{page}{0} % reset the page counter

\newpage
%%%%%%%%%%%%%%%%%%%%%%%%%%%%%%%%%%%%%%%%%%%%%%%%%%%%%%%%%%%%%%%%%%%%%%%%%

\section{Introduction}

Packet routing through a communication network is a fundamental task that is well-studied in both theoretical and practical contexts. We consider the following version of the task.

\begin{center}
\begin{minipage}{0.85\textwidth}
  \emph{The network is abstracted as an $n$-vertex undirected graph $G = (V, E)$. Initially, the network receives several packet delivery requests, where the $i^{th}$ packet should be transmitted from source $s_i \in V$ to destination $t_i \in V$. The goal is to select a path for each packet in a way that minimizes the maximum edge congestion, i.e., minimizes the maximum number of packets passing over any one edge.}
\end{minipage}
\end{center}

An offline version of this task is known as the multicommodity flow (MCF) problem, with different packets representing different commodities. However, solving MCF typically assumes knowledge of the entire set of packets upfront, a requirement that is often very restrictive. For this reason, a particularly appealing routing strategy is the so-called \emph{oblivious routing}, where each packet is routed independently (i.e., obliviously) from other packets using a predefined policy while at the same time requiring that the lump of all traffic near-optimally utilizes the network. The line of work on oblivious routings culminated with the celebrated result of Raecke~\cite{Racke08}, which proves that in every graph one can obliviously route the packets while guaranteeing the maximum congestion (of the most used edge) is $\bigO{\log n}$-competitive with the globally-and-offline-optimal maximum congestion. %The mere existence of such graph-theoretic combinatorial structures have led to numerous advancements in approximation algorithms on graphs~\cite{Racke08,gupta2006oblivious}.

Motivated by this success, a prominent extension called \emph{semi-oblivious routing} was suggested by Hajiaghayi, Kleinberg, and Leighton~\cite{HajiaghayiKL07} for VLSI routing and network traffic engineering in the hope of surpassing the competitiveness of (significantly more stringent) oblivious routing. A semi-obliviously-routed packet is required to obliviously specify a small set of \emph{candidate paths} it can possibly traverse. However, the final choice of over which path the packet is routed is made in a globally-optimal manner \emph{after} the entire set of packets (i.e., the so-called demand matrix) is revealed.

Unfortunately, the theoretical inquiries into the effectiveness of semi-oblivious routing have only yielded negative results: \cite{HajiaghayiKL07} showed that any such routing with polynomially-many candidate paths cannot be $o(\frac{\log n}{\log \log n})$-competitive in terms of edge congestion --- essentially no better than standard oblivious routing.

In contrast to the theoretical barriers, semi-oblivious solutions to packet routing have found notable success in traffic engineering since installing a new candidate path takes considerable effort that involves updating forwarding tables on geographically-distributed switches; on the other hand, the sending rates over candidate paths can be updated quickly (e.g., using a small snapshot of the global traffic every 15 seconds~\cite{KumarYYFKLLS18}). Semi-oblivious routing solutions offer superior performance as compared to traditional traffic-engineering approaches and they offer robustness over standard oblivious routing as the set of candidate paths can be chosen more diversely~\cite{KumarYYFKLLS18,kumar18semi}.

\subsection{Our Results and Consequences}

We provide the first theoretical evidence that semi-oblivious routing is significantly stronger than oblivious routing. As a simple subcase of our results, we show that each graph has a $\bigO{\log n}$-sparse semi-oblivious routing that is $(\poly \log n)$-competitive with the offline optimum on all permutation demands. Here, $\alpha$-sparse means that $\alpha$ paths are chosen between each pair of vertices (for a total of $\alpha n^2$ paths). We note that no standard oblivious routing can be $(\poly \log n)$-sparse, at least without a near-linear competitive ratio. This has compelling consequences.

\textbf{Consequence: Deterministic Routing.} One cannot in general \emph{deterministically} assign source-destination pairs to a single path without compromising on either obliviousness (dependencies between pairs) or competitiveness. Indeed, even on the widely-studied case of hypercubes, %(e.g., \cite{harary1988survey})
it is known that the best deterministic oblivious routing has competitiveness $\tilde{\Theta}(\sqrt{n})$~\cite{KaklamanisKT91}\footnote{Tilde (e.g., $\tilde{\Theta}$ or $\tilde{\mathcal{O}}$) hides $\poly \log n$ factors.}. We contribute one way to bypass this barrier: \emph{deterministically select a few paths} instead of one. Other methods of deterministic routing of hypercubes were developed, but they mostly involved complicated sorting networks~\cite{BorodinH85,ajtai19830} or related ideas~\cite{kuszmaul1990fast,grammatikakis1998packet}. Our deterministic strategy of selecting paths is far simpler, and moreover, our approach works for any graph. It is the first deterministic and oblivious strategy for general graphs that is $(\poly \log n)$-competitive.

\textbf{Consequence: Power of Random Choices in Semi-Oblivious Routing.} Consider defining $\alpha$-sparse classic oblivious routings in an analogous way, where the support size of paths between each pair of vertices is at most $\alpha$. Then, $1$-sparse oblivious routing corresponds to the deterministic case and the $\tilde{\Omega}(\sqrt{n})$ barrier~\cite{KaklamanisKT91} applies. As a simple corollary, any $\alpha$-sparse oblivious routing can at best be $\tilde{\Omega}(\sqrt{n} / \alpha)$-competitive. On the other hand, our results show that the competitiveness of an $\alpha$-sparse semi-oblivious routing is $\tilde{\mathcal{O}}(n^{\mathcal{O}(1/\alpha)})$; the competitiveness improves exponentially with $\alpha$. In other words, each additional path leads to a polynomial improvement in competitiveness; semi-oblivious routing benefits from the ``power of a few random choices'', an analog to the fundamental classic result where providing two random choices drastically improves performance~\cite{richa2001power}. This gives the first compelling theoretical separation between oblivious and semi-oblivious routings.

\textbf{Consequence: A Natural Construction and Its Traffic Engineering Applications.} Our $\alpha$-sparse semi-oblivious construction is extremely simple: \emph{for each pair of vertices, sample $\alpha$ random paths from any good oblivious routing distribution}. Indeed, this natural approach was considered for network traffic engineering~\cite{KumarYYFKLLS18,kumar18semi}: they sample from an oblivious routing distribution, due to domain constraints only sampling a small $\alpha$ number of paths between each pair of nodes (e.g., they choose $\alpha = 4$), and adapt the sending rates on these paths at real-time (semi-obliviousness). While they empirically find this approach to be very effective, the approach was an unproven heuristic with no a priori reason to work well. The observed ``power of a few random choices'' offers a compelling theoretical justification for why choosing a small constant sparsity like $\alpha = 4$ is a practical sweet spot that offers both adequate competitiveness and sparsity. Moreover, our paper not only explains why the approach works well for networks that occur in practice \cite{KumarYYFKLLS18}, but shows competitiveness for worst-case networks as well.

\textbf{Technical Challenges.} While the construction is conceptually extremely simple, its analysis in the context of semi-oblivious routings is deeply involved. It is easy to show using simple randomized rounding~\cite{raghavan1987randomized} that a random $\mathcal{O}(\log n)$-sparse routing is competitive on a fixed demand with high probability. The main challenge arises because the number of possible demands is exponential in $n$, hence the approach fails to be competitive \emph{on all demands}. Indeed, randomized rounding is entirely oblivious, while our results provably require us to exploit semi-obliviousness. A much more intricate analysis is required. In a nutshell, we prove the sampling strategy works using the probabilistic method~\cite{alon2016probabilistic}. We set up a randomized dynamic process: For a fixed demand, pretend to send packets on all candidate paths at once, and delete the edges that get \emph{overcongested} (together with all candidate paths crossing that edge). The goal is to show that, with exponential concentration, many candidate paths remain in the end, hence we can route along them without overcongestions. This enables the union bound over exponentially many demands. The main challenge in formalizing this argument is the dependence between path deletions, and resolving them requires a lot of care.

%The main challenge arises because the number of possible demands is exponential in $n$, while a random $\mathcal{O}(\log n)$-sparse solution can typically handle only polynomially-many demands. Indeed, simple randomized rounding~\cite{raghavan1987randomized} (i.e., a random $\mathcal{O}(\log n)$-sparse solution sampled from oblivious routings) without any semi-obliviousness will be competitive on a fixed polynomially-large set of demands. 

\textbf{Consequence: Optimizing Completion Time.} So far, we focused on minimizing the maximum edge congestion. However, the true objective in traffic engineering~\cite{KumarYYFKLLS18,kumar18semi} was to minimize the \emph{completion time}, i.e., the time until all packets arrive at their destinations (also known as the makespan, or minimizing the delay). The papers~\cite{KumarYYFKLLS18,kumar18semi} only explicitly optimize for congestion, but empirically conclude that this choice implicitly also adequately optimizes the completion time, at least on their benchmark networks.

Unfortunately, this inherently fails on worst-case instances: there exist graphs where optimizing congestion yields non-competitive completion-time guarantees~\cite{GhaffariHZ21}. We show that using the natural sampling strategy with the very recent hop-constrained oblivious routings~\cite{GhaffariHZ21}, we can obtain $\mathcal{O}(\log n)$-sparse semi-oblivious routings that are $(\poly\log n)$-competitive in terms of completion time, even on worst-case instances.

\textbf{Paper organization.} We first overview important concepts in \Cref{sec:our-concepts} followed by an overview of our results in \Cref{sec:formal-results}. Related work is presented in \Cref{sec:related-work} and the notation used throughout this paper in \Cref{sec:notation}. In \Cref{sec:sparse-semi-oblivious-routing}, we formally define and construct (fractional) semi-oblivious routings by sampling the candidate paths from an oblivious routing. The section is divided into a technical overview (\Cref{sec:technical-overview}), a proof structure overview (\Cref{sec:proof-structure}), the proof of the main lemma (\Cref{sec:proving-the-main-lemma}) and the reduction from the main Theorem to the Main Lemma (\Cref{sec:finishing-the-reduction}). The results for integral semi-oblivious routings are presented in \Cref{sec:integral-oblivious}, and the results for completion-time-competitive semi-oblivious routings in \Cref{sec:hopconstrainedsemiobl}. The lower bound that shows our construction is near-optimal is presented in \Cref{sec:lower-bound}. For improved readability, some proofs are only sketched out, with the full version deferred to \Cref{sec:deferred}. \Cref{sec:negativeassociation} covers the results on negatively associated random variables needed for the proof of the main Lemma.

\section{Technical Discussion}\label{sec:technical-discussion}
In this section, we give a technical overview of our contribution. However, before stating them we discuss some important concepts (that clarify the technical choices), and then state our results.

\subsection{Overview of Concepts}\label{sec:our-concepts}

A semi-oblivious routing over $G$ is a simple combinatorial object: a \emph{path system} $\{ P(s,t) \}_{s, t \in V}$, where each pair of vertices $(s, t)$ is associated with a collection $P(s, t)$ representing the candidate paths between $s, t$.

\begin{definition}[Path System]\label{def:path-system}
A \textit{path system} $\mathcal{P} = \{P(s, t)\}_{s, t \in V}$ is a collection of sets $P(s, t)$ of simple paths with endpoints $s$ and $t$, for every vertex pair $(s, t)$. We say a path system $\mathcal{P}$ is \textit{$\alpha$-sparse} if $|P(s, t)| \leq \alpha$ for all $(s, t)$. With slight abuse of notation, we say a path system $\mathcal{P}$ is \textit{$(\alpha + \cut)$-sparse} if $|P(s, t)| \leq \alpha + \cut(s, t)$, where $\cut(s, t)$ is the minimum cut between $s$ and $t$.
\end{definition}

\textbf{Competitive Ratio.} To evaluate the quality of our semi-oblivious routing, we perform the following sequence of stages.
\begin{enumerate}[label=\textit{(Stage \arabic*)}, leftmargin=*]\setlength\itemsep{0.15em}  % uses package {enumitem}
\item A graph $G = (V, E)$ is given as input.
\item We design a path system $\mathcal{P} = \{P(s, t)\}_{s, t \in V}$.
\item \label{stage:revelation} An arbitrary (possibly adversarially-chosen) demand is revealed (i.e., the multiset of source-destination pairs corresponding to packets).
\item \label{stage:assignment} For each packet $s \to t$ (with source $s$ and destination $t$) we choose which of the candidate paths $P(s, t)$ the packet $s \to t$ uses. We are allowed to adaptively use all available global information in a way to minimize the maximum edge congestion (i.e., number of paths going over any single edge).
\item \label{stage:comparions} Finally, the maximum edge congestion obtained in this way is compared to the offline optimal one and the ratio is called the \emph{competitive ratio}, the primary quantity we aim to minimize. If $\mathcal{P}$ has competitive ratio at most $C$ against all demands, we say $\mathcal{P}$ is a $C$-competitive semi-oblivious routing.
\end{enumerate}

%\alert{double-check!} This leads us to make a choice: whether we consider \emph{integral} or \emph{fractional} routings. In this section, we will present results for integral routings, while the results and proofs in the technical sections will focus on the (more general) fractional routings.

One can consider many types of (semi)oblivious routings. For example, we can construct routings that are either integral or fractional routings, we can compare ourselves to the optimal integral or fractional solution, we can look at demands that are either arbitrary or $\{0, 1\}$ or permutation (see below), consider $\alpha$-sparsity vs. $(\alpha + \cut)$-sparsity (\Cref{def:path-system}), etc. Most of these choices are inconsequential: one can typically inter-reduce results between them with negligible losses. However, some choices are incompatible and keeping track of them introduces technical complexities in the formal statements. Due to this, we explain these choices below and describe which combinations are meaningful.

\textbf{Fractional vs. Integral Routing.} When talking about fractional routings, in \ref{stage:assignment}, for each packet $s \to t$ we adaptively assign to each candidate path $p \in P(s, t)$ a nonnegative weight $w(p) \ge 0$ such that $\sum_{p \in P(s, t)} w(p) = 1$ (equivalently, we choose a distribution over $P(s, t)$). These weights define a \emph{fractional unit flow} from $s$ to $t$ which we use to route the packet. Similarly, the optimal fractional solution, to which we compare ourselves in \ref{stage:comparions}, routes each packet $s \to t$ using a (fractional unit) flow (i.e., via a convex combination of paths from $s$ to $t$). On the other hand, in integral routings, each packet $s \to t$ is routed on exactly one path from $P(s, t)$. In this case the optimal solution is restricted to be integral (as otherwise no sublinear competitive ratio would be obtainable).

\textbf{Types of Demands.} We also define several kinds of demands.
\begin{definition}\label{def:demands}
  A \textit{demand} is a function $d : V \times V \mapsto \mathbb{R}_{\geq 0}$ from vertex pairs to nonnegative real numbers, such that $d(v, v) = 0$ for all $v \in V$. The demand is \textit{integral} if $d(s, t)$ is an integer for every $(s, t)$. The demand is a \textit{\zeroonedemand} if $d(s, t) \in \{0, 1\}$. The demand is a \textit{permutation demand} if it is a \zeroonedemand and $\sum_{s} d(s, t) \leq 1$, $\sum_{t} d(s, t) \leq 1$. We denote the \textit{support} of a demand by $\supp(d) = \{(s, t) : d(s, t) > 0\}$ and define the \textit{size} $\siz(d)$ of a demand as $\siz(d) := \sum_{s \neq t} d(s, t)$.
\end{definition}
Permutation demands will mostly be used to give a technical overview and in the lower bound (since it makes the statement stronger).

\textbf{Meaningful Combinations.} In this paper we mostly consider two settings:
\begin{enumerate}[label=(\arabic*)]
\item Fractional routings, arbitrary demand, and $(\alpha + \cut)$-sparse path systems.
\item Integral routings, \zeroonedemands, and $\alpha$-sparse path systems.
\end{enumerate}
Other combinations are either weaker or not meaningful (in the sense that no sublinear competitive ratio is achievable).
For example, using $\alpha$-sparsity in Setting (1) is not meaningful as we need at least $\cut(s, t)$ many candidate paths between $s$ and $t$. To see why, consider two $n$-cliques connected via $n$ edges. The demand is comprised of a single packet going from an arbitrary vertex $s$ in one clique to an arbitrary $t$ in the other one: we need at least $\cut(s, t)/\beta = n/\beta$ many paths to be $\beta$-competitive since the optimal solution has congestion $1/n$ and the semi-oblivious congestion is at least $1/|P(s,t)|$. Finally, results for Setting (2) are implied by Setting (1).

\textbf{Logarithmic- vs. Low-Sparsity Cases.} We also consider cases when $\alpha$ is small and give the full sparsity-competitiveness trade-off for those cases. However, the case $\alpha = \bigO{\frac{\log n}{\log \log n}}$ and $\poly\log n$ competitiveness is of special interest and we refer to it as the \emph{logarithmic-sparsity case}. On the other hand, the general case (when $\alpha$ can be small) is referred to as the \emph{low-sparsity case}. We note that our results here show an exponential improvement in competitiveness with $\alpha$. Moreover, we nearly match the entire trade-off curve with a lower bound.

\textbf{Objective: Edge Congestion vs. Completion Time.} The default objective throughout the paper is minimizing the maximum edge congestion, where the congestion of $e \in E$ is the (potentially fractional) number of packets routed through $e$. A compelling alternative is minimizing the completion time: given a set of paths $P = \{ p_1, \ldots, p_k \}$ used for routing packets (connecting the source-destination pairs), we aim to minimize ``\emph{dilation} + \emph{congestion}''. Here, congestion is defined as before, and dilation is defined as the maximum number of hops of any path, $\max_{p \in P} \hop(p)$. This roughly corresponds to minimizing the time until the last packet arrives at their destination (due to classic reductions~\cite{leighton1994packet,GH16}), hence the name completion time. However, due to recent advancements in hop-constrained oblivious routings~\cite{GhaffariHZ21}, our results gracefully extend to completion-time-competitive semi-oblivious routings. Namely, we simply sample from a hop-constrained oblivious routing instead of the classic one.

% fractional routing, fractional opt, arbitrary demands, alpha+cut sparsity
% integral routing, integral opt, {0,1} demands
% (lower bound) permutation demand

%\textbf{Integral routings and $\{0, 1\}$-demands.}  If we also restrict the demand to be $\{0, 1\}$ (i.e., at most one packet is being routed between any pair of vertices), then we can prove $\tildeO(n^{\bigO{1/\alpha}})$-competitiveness with only $|P(s,t)| \le \alpha$ many candidate paths. It is easy to see that the $\{0, 1\}$ restriction is needed: an integral routing with a large demand is essentially a scaled fractional routing (the same example from above with a large demand is a counterexample).

\subsection{Formal Overview of Results for Integral Routings}\label{sec:formal-results}

% Simplify this paragraph, no need to be so formal with the formal definition right after

In this section, we present our results in the setting of integral routings and demands. For brevity, we omit the analogous results for fractional routings, albeit they directly follow from the results of \Cref{sec:sparse-semi-oblivious-routing}.

\textbf{Logarithmic-Sparsity Case.} We show that in every graph, there exists a set of logarithmically many paths between every vertex pair, through which any \zeroonedemand can be routed semi-obliviously with polylogarithmic competitiveness. Moreover, we give a matching lower bound that shows that polylogarithmic competitiveness is unachievable with asymptotically less paths, even with the more restricted demand set of permutation demands.

\begin{restatable}{theorem}{basiccorollary}\label{thm:basic}
Let $G$ be a $n$-vertex graph with at most a polynomial number of edges. Then, there exists a $\bigO{\log n / \log\log n}$-sparse integral semi-oblivious routing on $G$ that is $\bigO{\log^3 n / \log \log n}$-competitive on \zeroonedemands.
\end{restatable}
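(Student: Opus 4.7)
The construction samples the path system $\mathcal{P}$ directly: fix an $\bigO{\log n}$-competitive Räcke oblivious routing, let $\mu_{s,t}$ be its path distribution for the pair $(s,t)$, and draw $\alpha = \Theta(\log n / \log \log n)$ independent samples from each $\mu_{s,t}$ to form $P(s,t)$. The only task is to argue that, for this fixed $\mathcal{P}$ and every \zeroonedemand $d$, some integral assignment of each pair in $\supp(d)$ to a candidate path attains maximum edge congestion at most $\bigO{\log^3 n / \log \log n} \cdot \optint(d)$.

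\textbf{Reduction via the fractional result.} First I would invoke the fractional semi-oblivious construction of \Cref{sec:sparse-semi-oblivious-routing}: plugging $\alpha = \Theta(\log n / \log \log n)$ into the general trade-off $\tildeO{n^{\bigO{1/\alpha}}}$ yields that, with high probability over the sampling, the same $\mathcal{P}$ is a $(\poly \log n)$-competitive fractional semi-oblivious routing against $\opt$. For any \zeroonedemand $d$ we have $\opt(d) \le \optint(d)$, so once $d$ is revealed we can compute an optimal fractional assignment supported on $\mathcal{P}$ whose induced edge congestion is at most $C := (\poly \log n) \cdot \optint(d)$.

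\textbf{Integral rounding.} To obtain an integral assignment I would do standard independent randomized rounding \`a la Raghavan--Thompson: for each $(s,t) \in \supp(d)$, pick a single path from $P(s,t)$ with probability equal to its fractional weight. For each edge $e$, the expected integral load equals the fractional load $\ell_e \le C$; a Chernoff bound gives $\text{load}(e) = \bigO{\ell_e + \log n}$ with probability $1 - n^{-10}$, and a union bound over the polynomially many edges produces a valid assignment. Because any nonempty $\{0,1\}$-demand satisfies $\optint(d) \ge 1$, the additive $\bigO{\log n}$ is absorbed into the multiplicative ratio, yielding the claimed $\bigO{\log^3 n / \log \log n}$ competitiveness.

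\textbf{Main obstacle.} All real difficulty sits inside the fractional result of \Cref{sec:sparse-semi-oblivious-routing}: its analysis must survive a union bound over the $2^{\Omega(n^2)}$ possible demands, which is precisely why the edge-deletion process with its exponential concentration --- rather than elementary randomized rounding --- is required, and why dependencies between deleted candidate paths demand a careful accounting. Beyond that, the two subtleties in the reduction above are that the fractional solution being rounded is automatically supported on $\mathcal{P}$ (by definition of a fractional semi-oblivious routing on $\mathcal{P}$) and that the rounding step is allowed to depend on $d$, which is explicitly permitted in Stage~4 of the semi-oblivious framework.
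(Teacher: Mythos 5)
Your overall route is the same as the paper's: sample $\alpha$ paths per pair from R\"acke's $\bigO{\log n}$-competitive oblivious routing, invoke the fractional sampling result of \Cref{sec:sparse-semi-oblivious-routing}, and round to an integral assignment with independent path choices plus a Chernoff/union bound (the paper's \Cref{lem:roundlemma} and \Cref{cor:integraliseasy} are exactly this rounding). The genuine gap is in the middle step. \Cref{thm:mainthm} is stated for $(\alpha+\cut)$-samples, not for your plain $\alpha$-sample, and this is not a cosmetic restriction: an $\alpha$-sparse path system cannot be $(\poly\log n)$-competitive against the fractional offline optimum $\opt$ even on \zeroonedemands. Take two $n$-cliques joined by $n$ edges and a single unit demand across them: $\opt(d)=\Theta(1/n)$, while any fractional routing over $\alpha$ candidate paths puts congestion at least $1/\alpha$ on some edge, so the ratio is polynomial in $n$. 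Hence your claim that ``the same $\mathcal{P}$ is a $(\poly\log n)$-competitive fractional semi-oblivious routing against $\opt$'' is false as written. The paper bridges exactly this point with the $\alpha$-sample corollary (\Cref{cor:actualmaincorollary}), proved by passing to an auxiliary graph in which every demand pair is attached to degree-one terminals of unit cut; the price is an additive term, namely $\oblreal(\calP,d)\leq \bigO{\log (n)\left(\alpha+n^{\bigO{\alpha^{-1}}}\right)}\left(\obl(R,d)+1\right)$ for \zeroonedemands, and the $+1$ is only harmless because the final comparison is with $\optint(d)\geq 1$. Since you do ultimately compare with $\optint$, your argument is repairable, but the missing reduction is a real step of the proof, not bookkeeping, and as stated your chain passes through a false inequality.

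A secondary issue is the exponent. The statement promises $\bigO{\log^3 n/\log\log n}$, and hitting it requires the sharper one-logarithm case of \Cref{thm:mainthm} for \zeroonedemands supported on unit-cut pairs (which the auxiliary-graph reduction supplies): one $\log n$ from that reduction, one $\log n$ from R\"acke, and $\alpha+n^{\bigO{\alpha^{-1}}}=\bigO{\log n/\log\log n}$ from the chosen sparsity. If you instead run the generic two-logarithm bound $\bigO{\log^2(n)\left(\alpha+n^{\bigO{\alpha^{-1}}}\right)}$ and then pay R\"acke's factor, you only get $\bigO{\log^4 n/\log\log n}$. Your untracked ``$\poly\log n$'' placeholder therefore does not establish the claimed bound; the rounding step itself, including absorbing the additive $\bigO{\log n}$ via $\optint(d)\geq 1$, matches the paper and is fine.
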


\begin{restatable}{lemma}{basiclowerboundcorollary}[Impossibility]\label{lem:basiclowerbound}
There exists an infinite family $\mathcal{G}$ of simple graphs, such that for any sparsity bound $g_1 = o\left(\frac{\log n}{\log \log n}\right)$ and competitiveness bound $g_2 = \poly \log n$, there exists an integer $n_0$ such that for every $n$-vertex graph $G \in \mathcal{G}$ where $n \geq n_0$, there exists no $g_1(n)$-sparse integral semi-oblivious routing on $G$ that is $g_2(n)$-competitive on all permutations demands.
\end{restatable}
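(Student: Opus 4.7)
The plan is to exhibit a concrete graph family $\mathcal{G}$ and, for every $\alpha$-sparse path system on a graph $G \in \mathcal{G}$, construct a permutation demand whose semi-oblivious congestion is super-polylogarithmic while the offline optimum remains $\poly\log n$. The natural candidate for $\mathcal{G}$ is a $d$-regular bounded-degree expander family (for instance with $d$ constant, or the hypercube), which has two properties I would rely on: (i) every permutation demand can be routed with congestion $O(\poly\log n)$, so $\opt = O(\poly\log n)$ uniformly, and (ii) the number of length-$\ell$ paths out of any vertex is at most $d^\ell$, giving a strong combinatorial scarcity of short paths.

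The main step is a counting argument over candidate paths. Fix any $\alpha$-sparse path system $\mathcal{P} = \{P(s,t)\}$ with $\alpha = g_1(n) = o(\log n / \log\log n)$. Because the graph has diameter $O(\log n / \log d)$ and OPT is $\poly\log n$, any useful candidate path has length $L = O(\poly\log n)$ (longer paths already defeat $g_2$-competitiveness by themselves). Restricting attention to these, there are only $n \cdot d^L$ short paths, but the system must pick $\alpha(n-1)$ paths out of each source $s$. The key quantitative claim I would prove is that for every source $s$, many target vertices $t$ have $P(s,t)$ forced to exit $s$ through a common small set of neighbors: formally, by iterating a pigeonhole argument along the first $k$ hops of the $\alpha$ candidate paths (with $k = \Theta(\alpha)$), a $1/d^{O(\alpha)}$ fraction of the destinations share the first $k$-hop prefix of some candidate path from $s$. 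Plugging in $\alpha = o(\log n / \log\log n)$, this fraction stays super-polylogarithmic.

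Using the prefix-sharing structure, I would construct the bad permutation demand $\pi^\ast$ by a Hall-type matching: for each source $s$, pair it with a destination $t$ whose candidate paths $P(s,t)$ all pass through one of a small pool of heavy $k$-prefixes. Since every pair $(s,t)$ in $\pi^\ast$ has all $\alpha$ of its candidates forced through a tiny ``bottleneck'' region near $s$, a standard double-counting argument over edges in this region shows that some edge is used by $\Omega(n/(\alpha \cdot \text{region size}))$ pairs no matter how the semi-oblivious assignment is made — this is where semi-obliviousness is nullified, because the adversary left each pair no meaningful degree of freedom. Balancing the bottleneck size against $\alpha$ gives a lower bound of $n^{\Omega(1/\alpha)} = \omega(\poly\log n)$ on the congestion, contradicting $g_2$-competitiveness for all sufficiently large $n$.

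The hard part is the second paragraph: semi-oblivious routing is subtle precisely because after the demand is revealed it may redistribute weight optimally across the $\alpha$ candidates, so a purely ``random candidate'' expected-congestion calculation is too weak. Overcoming this requires the structural argument that \emph{all} $\alpha$ candidates of a pair share a common prefix or small cut — essentially extending the Kaklamanis–Krizanc–Tsantilas deterministic-routing obstruction from $\alpha=1$ to general $\alpha$-sparse systems. Calibrating the prefix length $k$ so that the prefix-sharing fraction is just large enough to support a permutation while the branching $d^k$ is kept below $\poly\log n$ is where the threshold $\alpha = o(\log n / \log\log n)$ emerges; this matching of parameters is the delicate part of the proof. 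Converting the resulting high-congestion instance into a single integral permutation (rather than an average over random permutations) is then a routine application of Hall's theorem or a probabilistic-method argument.
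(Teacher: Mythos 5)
Your proposal has a genuine gap at its core, and it sits exactly where you flag the ``hard part.'' On an expander or hypercube, the structural lemma you need --- that for many pairs $(s,t)$ \emph{all} $\alpha$ candidate paths are forced through one common small bottleneck --- is never established, and the counting you sketch does not yield it. Pigeonholing over $k$-hop prefixes must be done over the \emph{tuple} of all $\alpha$ prefixes (sharing a prefix of just one candidate is useless, since the semi-oblivious router will simply shift weight to the other $\alpha-1$ candidates), and that loses a factor $d^{\alpha k}$, not $d^{O(\alpha)}$. Worse, the bottleneck you produce is a region near each individual source, so different pairs of the permutation are forced through \emph{different} disjoint regions; the double count then gives total load $\approx n k$ spread over $\approx n\alpha k$ edges, i.e., congestion $O(1)$, not $\Omega(n/(\alpha\cdot\text{region size}))$. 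To get super-polylogarithmic congestion you need many pairs forced through the \emph{same} few edges, which requires a second, global pigeonhole across sources --- and on a constant-degree expander (whose defining feature is the absence of small global bottlenecks, and which routes every permutation offline with congestion $O(\log n)$) it is not at all clear such a statement is true; it would amount to a new semi-oblivious lower bound on expanders/hypercubes extending Kaklamanis--Krizanc--Tsantilas, which you assert rather than prove.

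The paper avoids all of this by engineering the bottleneck into the graph: the gadget $C(n,k)$ consists of two $n$-leaf stars whose centers are joined through exactly $k=\lfloor n^{1/(2\alpha)}\rfloor$ middle vertices, so topology alone forces every leaf-to-leaf path across those $k$ vertices, and an $\alpha$-sparse system assigns each pair a hitting set $f(s,t)$ of at most $\alpha$ middle vertices. Since $\binom{k}{\alpha}\le\sqrt{n}$, two pigeonhole steps produce a single set $S'$ of $\alpha$ middle vertices and, via Hall's theorem, a matching of $k$ leaf pairs all of whose candidates cross $S'$; that permutation has integral optimum $1$ but semi-oblivious congestion at least $k/\alpha = \Omega(n^{1/(2\alpha)}/\alpha)$. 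Gluing copies of $C(n,k)$ for all $\alpha\in[\lfloor\log n\rfloor]$ with bridges gives one family $\mathcal{G}$ working for every sparsity function, and the asymptotics $n^{1/(2g_1(n))}=\log^{\omega(1)}n$ finish the argument. If you want to salvage your route, you would either have to prove the missing global-bottleneck lemma for expanders (a substantially harder and possibly false claim), or switch to a bespoke family with an explicit cut of size $n^{\Theta(1/\alpha)}$ as the paper does; also note that your preliminary reduction dismissing long candidate paths (``longer paths already defeat $g_2$-competitiveness by themselves'') needs an argument, since a few long paths only hurt if they overlap.
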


\textbf{Low-Sparsity Case.} We now turn our attention to semi-oblivious routings with a sublogarithmic number of paths between every vertex pair. We show that any polynomial competitiveness is achievable with a constant number of sampled paths, and that every additional path yields a polynomial improvement to the competitiveness. The matching lower bound shows that this is tight: for any $\alpha$, the sparsity of \Cref{thm:basicalpha} cannot be improved by more than a constant factor while retaining the same competitiveness.

% For any $\alpha$, 

\begin{restatable}{theorem}{basicalphacorollary}\label{thm:basicalpha}
Let $G$ be a $n$-vertex graph with at most a polynomial number of edges. Then, for every positive integer $\alpha = o\left(\frac{\log n}{\log \log n}\right)$, there exists an $\alpha$-sparse integral semi-oblivious routing on $G$ that is $n^{\bigO{\alpha^{-1}}}$-competitive on \zeroonedemands.
\end{restatable}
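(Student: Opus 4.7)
The plan is to use the natural sampling construction that the paper's technical overview singles out: start from a Racke-style oblivious routing of $G$, which for every pair $(s,t)$ specifies a distribution $\pi_{s,t}$ over simple $s$-$t$ paths and routes any \zeroonedemand with edge congestion at most $O(\log n) \cdot C^\ast$ (in expectation), where $C^\ast$ denotes the offline optimum. For each pair $(s,t)$, independently draw $\alpha$ paths from $\pi_{s,t}$ to form $P(s,t)$, yielding an $\alpha$-sparse path system $\calP$. The goal is to show that, with positive probability, $\calP$ is simultaneously $n^{O(1/\alpha)}$-competitive against \emph{every} \zeroonedemand, and then invoke the probabilistic method.

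For a fixed \zeroonedemand $d$ with offline optimum $C^\ast$, the first step is to set up a tentative \emph{fractional} flow in which every packet $(s,t) \in \supp(d)$ sends $1/\alpha$ along each of its $\alpha$ candidate paths. Because each candidate is distributed as $\pi_{s,t}$, every edge $e$ receives expected congestion at most $O(\log n) \cdot C^\ast$. The heart of the argument is a dynamic deletion process, fix a threshold $\tau := n^{c/\alpha} \cdot C^\ast$ for a sufficiently large constant $c$: repeatedly find any edge whose tentative congestion exceeds $\tau$, delete it, and remove from $\calP$ every candidate path that uses it; when the process halts, congestion is $\le \tau$ everywhere. A packet is \emph{killed} if all $\alpha$ of its candidate paths are deleted. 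The aim is to show that with overwhelming probability no packet is killed, so that each packet can then be assigned integrally to a surviving candidate, giving congestion at most $\alpha \tau = n^{O(1/\alpha)} \cdot C^\ast$.

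The main obstacle, and the place where $\alpha$ is critical, is proving concentration strong enough to union-bound over the $2^{O(n^2)}$ \zeroonedemands. For a single packet, its $\alpha$ sampled paths are mutually independent, so the probability that all $\alpha$ of them fall into the "doomed" set of paths (those removed by the process) should shrink like the $\alpha$-th power of the single-path failure probability; by Markov applied to the $O(\log n) \cdot C^\ast$ congestion bound, a single path is doomed with probability at most $\tilde{O}(1/\tau)$, so the kill probability for one packet is at most $(\tilde O(1)/n^{c/\alpha})^\alpha = n^{-\Omega(c)}$, which can be driven below $2^{-\Omega(n^2)}$ by choosing $c$ large. The delicate point is the dependence across packets: deleting one edge simultaneously removes the paths of many packets, so kill events are not independent. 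Resolving this requires the negative-association machinery of \Cref{sec:negativeassociation} (so that sums of kill indicators admit Chernoff-type concentration) and a careful redefinition of "doomed path" that depends only on the marginals $\pi_{s,t}$, not on the sampled $\calP$ itself, so that the dynamic process can be decoupled from the union-bound argument.

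Once concentration is established, a union bound over all \zeroonedemands and all edges (feasible since the graph has $\poly(n)$ edges) shows that with positive probability $\calP$ is $n^{O(1/\alpha)}$-competitive simultaneously on every \zeroonedemand, proving the theorem. Concretely, I expect the formal proof to invoke the Main Theorem of \Cref{sec:sparse-semi-oblivious-routing} with sparsity parameter $\alpha$ left free, specializing the trade-off curve proved there; the same Main Lemma that yields \Cref{thm:basic} when $\alpha = \Theta(\log n / \log\log n)$ (giving $n^{O(\log\log n / \log n)} = \poly\log n$ competitiveness) gives \Cref{thm:basicalpha} for general $\alpha = o(\log n / \log\log n)$. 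The hard part is not the reduction but the Main Lemma itself, and within it the decoupling that turns the correlated path-deletion dynamics into a clean expression amenable to exponential concentration.
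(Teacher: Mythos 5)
Your construction and top-level plan match the paper: sample $\alpha$ paths per pair from a R\"acke routing, analyze a threshold/deletion process, and (as you guess at the end) the theorem itself is obtained by plugging the sparsity parameter into the Main Theorem via the $\alpha$-sample corollary and then rounding to an integral routing. However, the concentration-and-union-bound strategy you sketch for the Main Lemma has a genuine quantitative gap. You want every packet to keep a surviving candidate path ("no packet is killed") and to union bound over all $2^{\Theta(n^2)}$ \zeroonedemands, claiming the per-packet kill probability $(\tilde{O}(1)/n^{c/\alpha})^{\alpha} = n^{-\Omega(c)}$ "can be driven below $2^{-\Omega(n^2)}$ by choosing $c$ large." It cannot: $n^{-\Omega(c)}$ is only polynomially small for constant $c$, and beating $2^{\Theta(n^2)}$ demands would force $c = \Omega(n^2/\log n)$, i.e.\ a competitive ratio of $n^{\Omega(n^2/(\alpha \log n))}$ rather than $n^{O(1/\alpha)}$. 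Moreover, for a demand with $s$ packets the probability that \emph{some} packet is killed scales like $s$ times the single-packet bound, so it does not decay with $s$ at all. The paper's resolution is structurally different on exactly this point: it proves failure probability exponentially small in the demand size ($m^{-\Omega(|\supp(d)|)}$, \Cref{lem:samplingorder}), union bounds demands stratified by support size (only $n^{O(s)}$ demands of size $s$), and — since a single unlucky packet is only polynomially unlikely to lose all its samples — it does \emph{not} insist that every packet survives. Instead it shows that at least half the demand survives (weak competitiveness, \Cref{def:weakcompetitive}) and recovers full routing by iterating $O(\log m)$ times (\Cref{lem:weaktostrong}), plus a special-demand bucketing step (\Cref{lem:specialtogeneral}).

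A second, related gap is your within-packet independence claim: the "doomed" set of deleted edges is a function of \emph{all} samples, including the packet's own $\alpha$ draws, so the kill probability is not simply the $\alpha$-th power of a single-path doom probability, and a Markov bound on a random, sample-dependent doomed set does not give the factorization you need. You gesture at decoupling via a doomed set "depending only on the marginals," but the paper's actual mechanism is different and is the heart of the argument: fix an edge ordering in advance, define \emph{bad patterns} $(b_1,\dots,b_m)$ describing how much is deleted at each edge (there are only $m^{O(\siz(d)/\alpha)}$ of them, \Cref{lem:badcountlemma}), and for a fixed pattern assign each path to the first overcongested edge it crosses; the resulting per-edge events involve disjoint path sets, so negative association (\Cref{lem:negativeassociationindep}) lets their probabilities multiply and a Chernoff bound with the $\log(t/\E[X])$ term (needed precisely for the low-sparsity regime $\alpha \ll \log n$) gives $\exp(-\Omega(\sum_k \gamma b_k)) = m^{-\Omega(\siz(d)/\alpha)}$ per pattern (\Cref{lem:strongproblem}). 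Without the weak-routing relaxation, the size-stratified union bound, and the bad-pattern decoupling, the argument as proposed does not close.
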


\begin{restatable}{lemma}{basicalphalowerboundcorollary}[Impossibility]\label{lem:basicalphalowerbound}
There exists an infinite family $\mathcal{G}$ of simple graphs, such that for every constant $\epsilon > 0$ and $\alpha = o(\frac{\log n}{\log \log n})$, there exists an integer $n_0$ such that for every $n$-vertex graph $G \in \mathcal{G}$ where $n \geq n_0$, there exists no $\alpha$-sparse integral semi-oblivious routing on $G$ that is $n^{\left(\frac{1}{2} - \epsilon\right) \alpha^{-1}}$-competitive on all permutation demands.
\end{restatable}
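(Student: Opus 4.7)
The plan is to exhibit an explicit infinite family $\mathcal{G}$ of sparse, low-diameter graphs -- the natural candidates being constant-degree expanders or the hypercubes $\{H_k\}$ (for which $n = 2^k$, $|E| = O(n \log n)$, diameter $\Theta(\log n)$, and every permutation demand admits an offline integral routing of congestion $O(\poly\log n)$ via classical Valiant--Brebner two-phase randomized routing). The target is to generalize the classical $\tilde{\Omega}(\sqrt{n})$ deterministic-routing lower bound of Kaklamanis--Krizanc--Tsantilas~\cite{KaklamanisKT91} -- which is exactly the $\alpha = 1$ instance of the claim -- so that the competitive ratio degrades with $\alpha$ matching the $n^{(1/2-\epsilon)/\alpha}$ upper bound of \Cref{thm:basicalpha}. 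Once such a graph family is fixed, the task reduces to showing that for \emph{any} $\alpha$-sparse path system $\mathcal{P}$, there is a permutation demand on which every path selection through $\mathcal{P}$ incurs congestion $n^{(1/2-\epsilon)/\alpha} \cdot \poly\log n$.

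The argument has the standard two-stage KKT shape. \textbf{(Stage A: bottleneck structure.)} Fix any $\alpha$-sparse $\mathcal{P}$. I would develop an $\alpha$-fold generalization of the KKT iterated-pigeonhole argument: for each source $v$, walk out level-by-level tracking the set of destinations whose \emph{entire $\alpha$-bundle} of candidate paths is forced to share an ever-longer prefix. Because each destination can split its $\alpha$ paths across at most $\alpha$ outgoing edges at any vertex (rather than just $1$ in the deterministic case), the pigeonhole contraction factor at each step degrades from $1/\Theta(\log n)$ to $\alpha/\Theta(\log n)$, and after $\Theta(\log n / \log(\log n / \alpha))$ levels of branching one identifies -- by averaging over the potential ``centers'' $w$ reachable at the branching depth -- a single bottleneck edge $e^\star$ through which a large family $\mathcal{F}$ of source--destination pairs is forced, in the strong sense that \emph{every} path in $P(s,t)$ crosses $e^\star$ whenever $(s,t) \in \mathcal{F}$.

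\textbf{(Stage B: matching extraction.)} View $\mathcal{F}$ as a bipartite (multi)graph on $V \times V$. Extract a matching of size at least $n^{(1/2-\epsilon)/\alpha}\cdot\poly\log n$ using a standard greedy / König--Hall extraction combined with the observation that, around a randomly chosen center $w$ from Stage~A, both the source side and destination side of $\mathcal{F}$ are well-spread; this spread is itself obtained by averaging the bottleneck lemma over many candidate centers and invoking the probabilistic method to pick one with the desired balance. Extending the resulting matching arbitrarily to a full permutation yields the hard demand, on which any routing through $\mathcal{P}$ is forced to send the entire matched set through $e^\star$, giving the claimed congestion. Comparison with the offline optimum of $O(\poly\log n)$ then yields competitive ratio $n^{(1/2-\epsilon)/\alpha}$ for all sufficiently large $n \geq n_0$.

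The hardest part is unambiguously Stage~A. The naive version of the pigeonhole tracks ``reachable'' destinations -- where each destination contributes up to $\alpha$ labels to the multiset of first edges -- but this is far too weak in the semi-oblivious setting, because the routing algorithm can adaptively avoid any single reachable edge as long as some alternative in $P(s,t)$ exists. Obtaining the \emph{forced} version (all $\alpha$ paths sharing the prefix) requires refining the inductive step to condition on pairs that are already forced at earlier levels, and the $\alpha$-factor loss at each of the $\Theta(\log n / \log(\log n/\alpha))$ branching levels must be accounted for exactly; it is this precise bookkeeping that produces the target exponent $(1/2-\epsilon)/\alpha$ rather than a weaker bound. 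A secondary, but still delicate, issue in Stage~B is ensuring that the extracted matching is simultaneously (i) large and (ii) all of its pairs route through a \emph{common} bottleneck $e^\star$; this compatibility constraint is what forces the averaging over centers in Stage~A to be interleaved with the matching extraction rather than performed independently.
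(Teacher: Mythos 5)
There is a genuine gap, and it sits exactly where you locate it: Stage A is not a proof but a conjecture. The entire difficulty of the semi-oblivious lower bound is to produce polynomially many source--destination pairs, forming a partial permutation, such that \emph{every} one of the $\alpha$ candidate paths of \emph{every} chosen pair is forced through a common small bottleneck --- otherwise the adaptive selection in \ref{stage:assignment} simply routes around it. Your proposed ``$\alpha$-fold generalization of the KKT iterated pigeonhole'' with a contraction factor degrading from $1/\Theta(\log n)$ to $\alpha/\Theta(\log n)$ is unsubstantiated numerology: no mechanism is given by which level-by-level prefix-forcing survives the adversarial freedom to split each bundle across $\alpha$ branches, and the claimed exponent $(1/2-\epsilon)/\alpha$ is not derived from the bookkeeping you sketch. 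Worse, the graph families you propose (hypercubes, constant-degree expanders) are structurally ill-suited to such a forcing argument: between typical vertex pairs they contain many well-separated routes and no small cut, so confining an entire $\alpha$-bundle to one edge for a large matching is not something the counting in KKT (which crucially exploits that each pair has exactly one path) has any evident extension to. It is not even clear the statement you would need is true on the hypercube for $\alpha \geq 2$; at minimum it would be a new result requiring a genuinely new argument, and your proposal defers precisely that argument. Stage B inherits the problem, since the matching must be compatible with a bottleneck whose existence was never established.

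The paper avoids all of this by \emph{designing} the graph so that forcing comes for free: $C(n,k)$ consists of two $n$-leaf stars whose centers are joined through $k=\lfloor n^{1/(2\alpha)}\rfloor$ middle vertices, so every candidate path between a left leaf and a right leaf must cross the middle layer. An $\alpha$-sparse system then assigns each cross pair an $\alpha$-subset $f(s,t)$ of middle vertices hitting all its candidate paths; since $\binom{k}{\alpha}\le\sqrt{n}$, two rounds of pigeonhole plus Hall's theorem extract a size-$k$ matching all of whose candidate paths meet one common $\alpha$-set $S'$, giving congestion $k/\alpha$ against offline optimum $1$ (\Cref{lem:fixedalphalowerbound}); gluing copies of $C(n,\lfloor n^{1/(2\alpha)}\rfloor)$ for all $\alpha\in[\log n]$ with bridges handles the quantification over all sparsities with a single family (\Cref{lem:explicitlowerbound}). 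If you want to salvage your plan, you would either need to prove the forced-bottleneck lemma on hypercubes --- a substantial open-ended task --- or switch to a bespoke family with small cuts, which is essentially the paper's route.
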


\textbf{Arbitrary Integral Demands.} Our results generalize to arbitrary integral demands. However, this requires us to consider $(\alpha + \cut)$-sparsity, as no polylogarithmic competitiveness would be possible, and to pay an additional logarithmic factor in the competitiveness. We could derive an analogous result for the low-sparsity case, but omit it for brevity. %For the result on general demands, note that any $\alpha$-sparse integral semi-oblivious routing can be at best $\alpha^{-1} \max_{s, t} \cut(s, t)$-competitive. Thus, the size of cuts has to appear either in competitiveness or in sparsity.
\begin{restatable}{lemma}{cutcorollary}\label{lem:cut}
Let $G$ be a $n$-vertex graph with at most a polynomial number of edges. Then, there exists a $\left(\bigO{\frac{\log n}{\log \log n}} + \cut\right)$-sparse $\bigO{\frac{\log^4 n}{\log \log n}}$-competitive integral semi-oblivious routing.
\end{restatable}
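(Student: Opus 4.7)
The plan is to reduce the arbitrary-integral-demand setting to the fractional setting and then pay a single logarithmic factor for randomized rounding. First, I would invoke the fractional analog of \Cref{thm:basic} developed in \Cref{sec:sparse-semi-oblivious-routing}: in every $n$-vertex graph $G$ with at most polynomially many edges there exists a $\bigl(\bigO{\log n/\log\log n} + \cut\bigr)$-sparse fractional semi-oblivious path system $\mathcal{P}$ that is $\beta$-competitive against the fractional offline optimum on all demands, with $\beta = \bigO{\log^3 n/\log\log n}$. I would use this $\mathcal{P}$ as the path system of the integral routing; the sparsity bound is then inherited for free, since the path system is committed to obliviously in Stage~2 before the demand is revealed.

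Next, fix any nontrivial integral demand $d$, let $k = \optint(d)$, and apply randomized rounding. The fractional routing on $\mathcal{P}$ assigns, for each source-destination pair $(s,t)$, nonnegative weights on the candidate paths in $P(s,t)$ summing to $1$, and every edge has fractional congestion at most $C := \beta \cdot \opt(d) \le \beta k$. Independently across packets, I have each of the $d(s,t)$ packets between $s$ and $t$ pick one path from $P(s,t)$ according to its fractional weights. The integer load on any fixed edge $e$ is then a sum of independent indicator random variables of mean at most $C$, so the multiplicative Chernoff bound together with a union bound over the polynomially many edges of $G$ shows that, with high probability, every edge simultaneously has rounded congestion $\bigO{C + \log n}$. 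By the probabilistic method a deterministic assignment achieving this guarantee exists; since $k \geq 1$ for any nontrivial integral demand, $\bigO{C + \log n} = \bigO{\beta k + \log n} = \bigO{k \cdot \log^4 n/\log\log n}$, which is exactly the claimed competitive ratio.

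The main obstacle is the regime where $\opt(d)$, and hence $C$, is much smaller than $\log n$ (for instance when a single packet must cross a large cut). In that range the Chernoff tail is controlled by an additive $\log n$ slack rather than by a multiplicative $\mathcal{O}(1)$ one, and this additive slack is precisely what produces the $\log n$-factor gap between \Cref{thm:basic} and \Cref{lem:cut}; the bound $k \geq 1$ is then what allows us to repackage the additive $\log n$ term as a competitive factor. Modulo the invocation of the fractional statement from \Cref{sec:sparse-semi-oblivious-routing}, the remainder of the argument is a standard randomized-rounding calculation and requires no new ideas.
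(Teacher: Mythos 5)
Your proposal is correct and follows essentially the same route as the paper: the paper likewise takes an $(\alpha+\cut)$-sample of Raecke's $\bigO{\log n}$-competitive oblivious routing with $\alpha = \Theta(\log n/\log\log n)$ (via \Cref{thm:mainthm}) and then converts the fractional routing into an integral one on the same path system by exactly the randomized rounding you describe (Chernoff plus a union bound over the polynomially many edges), packaged as \Cref{lem:roundlemma} and \Cref{cor:integraliseasy}, absorbing the additive $\bigO{\log n}$ term using $\optint(d)\ge 1$. One inaccuracy to flag: the fractional competitiveness against the offline optimum that \Cref{sec:sparse-semi-oblivious-routing} actually supplies for \emph{arbitrary} demands is $\bigO{\log^4 n/\log\log n}$ (namely $\bigO{\log^2 n\left(\alpha + n^{\bigO{\alpha^{-1}}}\right)}$ times Raecke's $\bigO{\log n}$), not the $\beta=\bigO{\log^3 n/\log\log n}$ you cite --- that sharper figure is available only for the restricted $\{0,1\}$-demand class behind \Cref{thm:basic} (via \Cref{cor:actualmaincorollary}), so the extra logarithm separating \Cref{lem:cut} from \Cref{thm:basic} comes from the general-demand case of \Cref{thm:mainthm}, not from the additive rounding slack. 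Since your final accounting only needs $\beta=\bigO{\log^4 n/\log\log n}$, substituting the correct value leaves the argument intact.
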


\textbf{Routings for Completion Time.} Recall that ``completion time'' objective corresponds to minimizing $\mathrm{congestion} + \mathrm{dilation}$ (where the dilation is the longest path used in the routing). By combining our results with the recent hop-constrained oblivious routings~\cite{GhaffariHZ21}, we construct sparse semi-oblivious routings that are competitive in terms of completion time. Note that in this setting, compared to competitiveness in congestion alone, we require quadratic sparsity. One can also derive completion-time results against arbitrary demands, but we omit them for brevity.

%\begin{restatable}{lemma}{congdilationcorollary}\label{lem:congdilation}
%Let $G$ be a $n$-vertex graph with polynomially-bounded edge capacities. Then, there exists a $\bigO{\left(\frac{\log n}{\log \log n}\right)^2}$-sparse integral semi-oblivious routing $\calP$ on $G$, such that for any \zeroonedemand $d$ and routing $R$ that is integral on $d$, there exists a routing $R'$ on $\calP$ that is integral on $d$ such that both $\obl(R', d) \leq \obl(R, d)\ \poly\log n$ and $\dil(R', d) \leq \dil(R, d)\ \poly\log n$.
%\end{restatable}

\begin{restatable}{lemma}{congdilationcorollary}\label{lem:congdilation}
Let $G$ be a $n$-vertex graph with at most a polynomial number of edges. Then, there exists a $\bigO{(\log n)^2/(\log \log n)^2}$-sparse integral semi-oblivious routing $\calP$ on $G$ that is $\poly(\log n)$-completion time competitive on \zeroonedemands.
\end{restatable}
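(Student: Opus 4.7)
The plan is to combine the sparse-sampling construction underlying \Cref{thm:basic} with the hop-constrained oblivious routings of~\cite{GhaffariHZ21}, applied across a geometric family of hop scales. Let $d$ be a \zeroonedemand and let $T^{\star}$ denote its optimal completion time. Any near-optimal integral routing of $d$ uses paths of length at most $T^{\star}$ and has congestion at most $T^{\star}$; thus it suffices to produce, for every candidate hop scale $h$, a sparse semi-oblivious path system that is $\poly\log n$-competitive against the \emph{hop-$h$ optimal congestion} (the minimum congestion achievable using only paths of at most $h$ hops), and whose own candidate paths have length $\widetilde{\mathcal{O}}(h)$.

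For the first ingredient I would invoke~\cite{GhaffariHZ21}: for every hop parameter $h$, there exists an oblivious routing distribution $\pi_{h}$ whose support consists of paths of length $\widetilde{\mathcal{O}}(h)$ and which is $\poly\log n$-competitive in congestion against the hop-$h$ optimum on every demand. Structurally $\pi_{h}$ is still an oblivious routing, so the sparse-sampling construction of \Cref{thm:basic} applies to it as a black box: for each pair $(s,t)$ I would sample $\mathcal{O}(\log n/\log\log n)$ candidate paths from the $(s,t)$-marginal of $\pi_{h}$. By \Cref{thm:basic} with $\pi_{h}$ in place of Raecke's routing, the resulting $\mathcal{O}(\log n/\log\log n)$-sparse system $\mathcal{P}_{h}$ is $\poly\log n$-competitive against the hop-$h$ optimum on all \zeroonedemands, and every path in it inherits the $\widetilde{\mathcal{O}}(h)$ hop bound of $\pi_{h}$, because sampling only discards paths and cannot introduce new ones.

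Since $T^{\star}$ is unknown a priori, I would take a geometric sequence of hop scales $h\in\{1,\lceil\log n\rceil,\lceil\log n\rceil^{2},\ldots,n\}$ with common ratio $\Theta(\log n)$, giving $\mathcal{O}(\log n/\log\log n)$ scales, and set $\mathcal{P}:=\bigcup_{h}\mathcal{P}_{h}$. The total sparsity is
\[
\mathcal{O}\!\left(\frac{\log n}{\log\log n}\right)\cdot \mathcal{O}\!\left(\frac{\log n}{\log\log n}\right)=\mathcal{O}\!\left(\frac{(\log n)^{2}}{(\log\log n)^{2}}\right),
\]
as required. Given a \zeroonedemand $d$ with optimal completion time $T^{\star}$, I pick the smallest scale $h\geq T^{\star}$ from my list; then $h\leq T^{\star}\cdot\mathcal{O}(\log n)$ and the hop-$h$ optimal congestion is at most $T^{\star}$, so $\mathcal{P}_{h}$ produces an assignment with congestion $\poly\log n\cdot T^{\star}$ and dilation $\widetilde{\mathcal{O}}(h)=\widetilde{\mathcal{O}}(T^{\star})$, i.e., completion time $\poly\log n\cdot T^{\star}$. (The edge case $T^{\star}\geq n$ is handled by the largest scale $h=n$, where dilation is automatically bounded by $n\leq T^{\star}$.)

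The main obstacle will be checking that the proof of \Cref{thm:basic} is truly black-box in the underlying oblivious routing distribution — its statement is phrased as existence, but its analysis (the randomized deletion process outlined in \Cref{sec:technical-overview}) should rely on nothing beyond the fact that the underlying distribution is $\poly\log n$-competitive in congestion and is supported on at most polynomially many paths per pair, both of which $\pi_{h}$ satisfies. Once this black-box property is confirmed, the composition above is immediate; the two remaining checks — that the dilation bound survives sampling (trivial, since sampling can only remove paths) and that the $\poly\log n$ losses from the hop-constrained routing and the sparsification multiply to a $\poly\log n$ factor — are routine.
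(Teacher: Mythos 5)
Your proposal is correct and follows essentially the same route as the paper: sample $\mathcal{O}(\log n/\log\log n)$ paths per pair from the hop-constrained oblivious routings of~\cite{GhaffariHZ21} at $\mathcal{O}(\log n/\log\log n)$ geometrically spaced hop scales (ratio $\Theta(\log n)$), take the union, and for each demand route via the scale just above the optimal dilation. Your one worry — whether the sampling analysis is black-box in the underlying oblivious routing — is exactly why the paper states \Cref{thm:mainthm} and \Cref{cor:actualmaincorollary} as competitiveness \emph{with a given oblivious routing $R$} rather than with the offline optimum, so the composition goes through as you describe (with \Cref{cor:integraliseasy} supplying the final integrality step).
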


\begin{restatable}{lemma}{congdilationsparsecorollary}\label{lem:congdilationsparse}
Let $G$ be a $n$-vertex graph with at most a polynomial number of edges. Then, for every integer $\alpha = o(\log n / \log \log n)$, there exists an $\alpha^2$-sparse integral semi-oblivious routing $\calP$ on $G$ that is $n^{\bigO{\alpha^{-1}}}$-completion time competitive on \zeroonedemands.
\end{restatable}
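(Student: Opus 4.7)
The plan is to mimic the construction of \Cref{lem:congdilation}, but use the low-sparsity sampling construction of \Cref{thm:basicalpha} in place of the logarithmic-sparsity \Cref{thm:basic}, and to trade the number of hop classes against the loss in dilation so that the overall sparsity becomes $\alpha^2$ rather than the $\bigO{(\log n/\log\log n)^2}$ bound of \Cref{lem:congdilation}.

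First I would set up $\bigO{\alpha}$ hop classes with geometric ratio $n^{1/\alpha}$: for $i = 0, 1, \ldots, \alpha$, let $h_i := \lceil n^{i/\alpha} \rceil$. For each $h_i$, invoke the hop-constrained oblivious routing of \cite{GhaffariHZ21} with hop bound $h_i$; call the resulting oblivious routing $R_i$. The standard guarantee is that $R_i$ is supported on paths of hop length $\bigO{h_i \poly\log n}$ and, for every demand, achieves expected congestion at most $\poly\log n$ times the best congestion attainable by any routing restricted to paths of hop length at most $h_i$. Next, I would apply the low-sparsity sampling construction of \Cref{sec:sparse-semi-oblivious-routing} (the mechanism underlying \Cref{thm:basicalpha}) to each $R_i$ with sparsity parameter $\alpha$, obtaining $\alpha$-sparse semi-oblivious routings $\calP_i$ that are $n^{\bigO{\alpha^{-1}}}$-competitive (in congestion) against the optimal length-$h_i$-restricted routing of any \zeroonedemand. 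Taking $\calP := \bigcup_{i=0}^{\alpha} \calP_i$ yields an $\alpha(\alpha+1) = \bigO{\alpha^2}$-sparse semi-oblivious routing (absorb the constant into $\alpha$ to get exactly $\alpha^2$ if desired).

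For the competitiveness claim, let $D$ be a \zeroonedemand and let $T^\ast$ be its optimum completion time, so that some integral routing of $D$ uses only paths of length at most $T^\ast$ and has congestion at most $T^\ast$. Let $i$ be the smallest index with $h_i \geq T^\ast$, so $h_i \leq n^{1/\alpha}\cdot T^\ast$. Routing $D$ through the candidate paths $\calP_i$ yields congestion at most $n^{\bigO{\alpha^{-1}}} \cdot \poly\log n \cdot T^\ast = n^{\bigO{\alpha^{-1}}} T^\ast$, since the hop-constrained optimum at scale $h_i$ is bounded by the completion-time optimum $T^\ast$, and sampling inflates congestion by only $n^{\bigO{\alpha^{-1}}}$. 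The dilation is bounded by the support of $R_i$, which is $\bigO{h_i \poly\log n} = n^{\bigO{\alpha^{-1}}} T^\ast$. Summing, the completion time achieved is $n^{\bigO{\alpha^{-1}}} T^\ast$, matching the claim.

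The main obstacle I anticipate is verifying that the sparse sampling analysis from \Cref{sec:sparse-semi-oblivious-routing} transfers cleanly when the base oblivious routing is hop-constrained rather than the standard Raecke routing, i.e., that that analysis is (or can be) stated as a generic sampling result parametrized by the quality of the source oblivious distribution. \Cref{lem:congdilation} implicitly performs exactly this adaptation in the logarithmic-sparsity regime, so I expect no new technical difficulty here; the only care point is matching the optimum competed against in each hop class (length-$h_i$-restricted versus full optimum) so that the geometric cover over $i$ converts a per-hop-class competitiveness guarantee into a true completion-time competitiveness guarantee.
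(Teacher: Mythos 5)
Your proposal is correct and follows essentially the same route as the paper's proof: $\bigO{\alpha}$ geometrically spaced hop scales with ratio $n^{1/\alpha}$, an $\alpha$-sample of the hop-constrained oblivious routing of \cite{GhaffariHZ21} at each scale, a union of the samples giving $\bigO{\alpha^2}$ sparsity, and routing each demand through the smallest scale exceeding the optimal dilation. The concern you raise at the end is already resolved by the paper's setup: \Cref{cor:actualmaincorollary} is stated as competitiveness with an \emph{arbitrary} base oblivious routing $R$, so it applies black-box to each hop-constrained $R_i$ (with \Cref{cor:integraliseasy} supplying the final integral rounding, whose additive $\bigO{\log n}$ term is absorbed into $n^{\bigO{\alpha^{-1}}}$).
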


%\antti{Say you want to approximate congestion + dilation. Then, you can pick $\alpha = o(\frac{\log n}{\log \log n})$. You want to get within $n^{h \alpha^{-1}}$ in terms of congestion. Since $\log_{n^{h \alpha^{-1}}}(n) = \bigO{\alpha}$, we'll have sparsity $\alpha^2$ for competitiveness $n^{\bigO{\alpha^{-1}}}$}

%\antti{There is a $\alpha^2 + \alpha \cut$ sparse $n^{\bigO{\alpha^-1}}$-competitive semi-oblivious fractional routing. You might be able to achieve $\alpha^2 + \cut$ somehow, doesn't seem trivial (separate cut into different length paths?).}

\section{Related Work}\label{sec:related-work} 

\textbf{Routing on Hypercubes.} Oblivious routing was first studied by Valiant and Brebner~\cite{ValiantB81} for the special case of the hypercube as the underlying graph. This is because parallel computers are often implemented using (a variant of) a hypercube architecture as the topology of choice for connecting its processors, motivating the study of oblivious routings. A very simple strategy known as the ``Valiant's trick'' yields a $(\poly\log n)$-competitive routing (in terms of edge congestion): when routing a packet from $s_i \to t_i$, first greedily route it from $s_i$ to a random intermediate vertex, and then greedily route it to $t_i$. However, this routing is inherently randomized, and \cite{BorodinH85} and \cite{KaklamanisKT91} show that any deterministic routing on a hypercube cannot be $\tilde{o}(\sqrt{n})$-competitive. Specifically, they show the following slightly-more-general result: in any $n$-vertex graph $G$ with maximum degree $\Delta$, for any deterministic oblivious routing $R$, there exists a permutation demand $d$ such that routing it yields congestion at least $\Omega\left(\sqrt{n} / \Delta\right)$. Oblivious routings on a range of special graphs were also studied, including expanders, Caley graphs, fat trees, meshes, etc.~\cite{upfal1984efficient,rabin1989efficient,scheideler1998universal,busch2008optimal,busch2010optimal}.

\textbf{Oblivious Routing on General Graphs.} Raecke~\cite{Racke02} first demonstrated that in every graph there exists a $(\poly\log n)$-compe\allowbreak titive oblivious routing. No poly-time construction algorithm was known at the time, hence \cite{BienkowskiKR03}, \cite{HarrelsonHR03}, and \cite{Madry10} gave polynomial time construction algorithms for the \textit{hierarchical decomposition} required by \cite{Racke02}. This line of work culminated in the celebrated result of Raecke~\cite{Racke08} who gave a poly-time construction of the $\mathcal{O}(\log n)$-competitive oblivious routing scheme by reducing the problem to $\bigO{\log n}$-distortion tree embeddings. The $\bigO{\log n}$ is asymptotically optimal on general graphs, as shown by \cite{BartalL97} and \cite{MaggsHVW97}. The first close-to-linear-time construction (at a cost of polylogarithmically-higher competitiveness) was given by \cite{RackeST14} by constructing a hierarchical expander decomposition.

%\cite{AzarCFKR03}: LP for optimal oblivious routing. %% % @@@ @IDK what to do with this????

\textbf{Other Quality Measures.} All aforementioned oblivious routings had the objective of minimizing (maximum) edge congestion and were optimized either for competitiveness or runtime. However, other quality measures are also prominent in the literature. Various $\ell_p$ norms of edge congestions and generalizations were studied~\cite{englert2009oblivious, gupta2006oblivious,Racke09}. Notably, very recent work has shown that there exists a $(\poly\log n)$-competitive oblivious routing for congestion+dilation~\cite{GhaffariHZ21}, and \cite{GHR21} has shown that these structures can often be constructed in almost-optimal time in the sequential, parallel, and distributed settings. Furthermore, starting with \cite{Racke019}, considerable effort has been invested in designing oblivious routing schemes which are \emph{compact}, meaning that the size of the routing tables is small. \cite{CzernerR20} gave compact oblivious routing schemes in weighted graphs where the hop-lengths (i.e., dilation) of the returned paths are not controlled for, while \cite{GHR21} gave compact oblivious routing schemes which control for both congestion and dilation.

\textbf{Semi-Oblivious Routing.} The extension from oblivious to semi-oblivious routing was proposed in \cite{HajiaghayiKL07} to model issues that naturally arise in VLSI design and traffic engineering. However, their results were negative and focus mostly on lower bounds: they excluded the possibility of polynomially-sparse semi-oblivious routing with $\mathcal{O}(1)$-competitiveness by arguing that every such routing on $n \times n$ grids cannot be better than $\Omega\left(\frac{\log n}{\log \log n}\right)$-competitive. \cite{KumarYYFKLLS18} present a practical implementation of a semi-oblivious routing-based algorithm for the traffic engineering problem. They empirically show that $\alpha$-sparse semi-oblivious routing offers near-optimal performance and satisfactory robustness even for small constant $\alpha$ (e.g., $\alpha = 4$).

% Existing results on semi-oblivious routings focus on lower bounds. \cite{HajiaghayiKL07} show that no polynomially sparse semi-oblivious routing can be $\bigO{\log^{1 - \epsilon} n}$-competitive, and applying a result of \cite{KaklamanisKT91} to hypercubes, where any permutation demand can be routed with constant congestion, shows that a $1$-sparse semi-oblivious routing cannot be $\smallo{\frac{\sqrt{n}}{\log n}}$-competitive. 

% \begin{theorem}[Theorem 3.3 of \cite{HajiaghayiKL07}]\label{thm:polynomialsparsitylowerbound}
% There exists series-parallel graphs of size $n$ in which every polynomially sparse semi-oblivious routing has competitive ratio $\Omega\left(\frac{\log n}{\log \log n}\right)$.
% \end{theorem}

% \cite{ThorupZ01}: Compact oblivious routing schemes with short paths. However, no guarantees on competitiveness? There is a lot of research on this type of oblivious routings.  % Goran: these are not typically referred to as "oblivious routing" as they don't control for congestion, hence the obliviousness between packets is trivial ... they are typically called "distance oracles"

%\cite{HajiaghayiKL07} show that there exist graphs where every polynomially sparse semi-oblivious routing has competitiveness $\Omega\left(\frac{\log n}{\log \log n}\right)$. In particular, they show this for $n \times n$ grids and series-parallel graphs, which have treewidth 2.

\section{Formal Notation}\label{sec:notation}

\textbf{Graphs.} We denote undirected graphs with $G = (V, E)$. In place of capacities, we allow $E$ to contain parallel edges. We only work with undirected and connected graphs with polynomially many edges, and won't state that a graph is undirected and connected every time we declare one. We write $n := |V|$ and $m := |E|$.

\textbf{Minimum Cut.} $\cut(s, t) : V \times V \mapsto \mathbb{Z}_{\ge 0}$ denotes the size of the minimum $(s, t)$-cut $G$. We define $\cut(v, v) = 0$.

\textbf{Routings.} A \textit{routing} $R = \{R(s, t)\}_{s, t \in V}$ is a collection of distributions $R(s, t)$ over simple $(s, t)$-paths for every vertex pair $(s, t)$. In other words, a routing is a path system that also assigns weights to paths. A routing $R$ ``routes a demand $d$'' by assigning a weight $d(s, t) \prob{ R(s,t) = p }$ for every path $p \in \supp(R(s, t))$. We say a routing $R$ is \textit{integral} on a demand $d$ if $d(s, t) \prob{R(s,t) = p}$ is an integer for every $s, t, p$. We define the support $\supp(R(s, t))$ of $R(s, t)$ to be the set of $(s, t)$-paths $R(s, t)$ assigns nonzero probability to and the support $\supp(R)$ of $R$ to be the path system $\calP$ where $P(s, t) = \supp(R(s, t))$. We say that a routing $R$ is supported \textit{on a path system} $\mathcal{P}$ if $\supp(R(s, t)) \subseteq P(s, t)$ for all $(s, t)$.

\textbf{Congestion.} The \textit{congestion} of an edge $e$ on a routing $R$ and demand $d$ is the sum of weights of paths using that edge: $\obl(R, d, e) = \sum_{s, t} d(s, t) \prob{e \in R(s, t)}$. The \textit{congestion of a routing} $R$ on a demand $d$ is the highest edge congestion $\obl(R, d) = \max_{e \in E} \obl(R, d, e)$.

\textbf{Dilation.} We denote by $\hop(p)$ the \textit{hop-length} of $p$, or the number of edges in $p$. The \textit{dilation} $\dil(R, d) = \allowbreak \max_{(s, t) \in \supp(d)} \allowbreak \max_{p \in \supp(R(s, t))} \allowbreak \hop(p)$ of a routing $R$ of $d$ is the maximum hop-length over paths that $R$ assigns a positive weight to on $d$.

\textbf{Optimal Congestion.} For a demand $d$, the \textit{optimal congestion} $\opt(d) = \min_{R} \obl(R, d)$ is the minimum congestion over routings $R$ of $d$. For integral $d$, the \textit{optimal integral congestion} $\optint(d) = \min_{\text{$R$ integral on $d$}} \obl(R, d)$ is the minimum congestion over routings that are integral on $d$.

\textbf{Oblivious Routings.} An \textit{oblivious routing} is a routing $R$. We say $R$ is \textit{$C$-competitive on a set of demands $D$} for $C \geq 1$ if, for every demand $d \in D$, we have $\obl(R, d) \leq C \opt(d)$. We say $R$ is \textit{$C$-competitive} for $C \geq 1$ if it is $C$-competitive on the set of all demands.

\textbf{Integer Prefix Sets.} We use the notation $[n] = \{1, \dots, n\}$.

\textbf{Logarithms.} We denote the base-$2$ logarithm by $\log$ and the base-$e$ logarithm by $\ln$. 

\section{Constructing Sparse Fractional Semi-Oblivious Routings}\label{sec:sparse-semi-oblivious-routing}

In this section, we prove the competitiveness of fractional $(\alpha + \cut)$-sparse semi-oblivious routings sampled from an oblivious routing.

The section starts by formally defining semi-oblivious routings and $\alpha$-samples, then states our main Theorem. In \Cref{sec:technical-overview}, we give a brief overview of the proof of the main Theorem. The proof is divided into the description of the organisation of the proof and the definitions required for it  in \Cref{sec:proof-structure}, a proof of the main Lemma in \Cref{sec:proving-the-main-lemma} and the reduction from the main theorem to the main lemma is finished in \Cref{sec:finishing-the-reduction}.

We start by formally defining semi-oblivious routings. On its face, a semi-oblivious routing is a path system $\calP$ (\Cref{def:path-system}) with additional competitiveness properties. To route a demand, the semi-oblivious routing picks the demand-dependent optimal routing supported on a subset of the paths in $\calP$. For example, if $\calP$ contains every simple $(s, t)$-path, the semi-oblivious routing is trivially $1$-competitive, thus we are interested in bounding both the competitiveness and the \textit{sparsity} of $\calP$: the maximum amount of paths between any $(s, t)$-pair.

% \begin{definition}[Path system]\label{def:path-system}
% A \textit{path system} $\mathcal{P} = \{P(s, t)\}_{s, t \in V}$ is a collection of sets $P(s, t)$ of simple $(s, t)$-paths for every vertex pair $(s, t)$. We say a path system $\mathcal{P}$ is \textit{$\alpha$-sparse} if $|P(s, t)| \leq \alpha$ for all $(s, t)$. With slight abuse of notation, we say a path system $\mathcal{P}$ is \textit{$(\alpha + \cut)$-sparse} if $|P(s, t)| \leq \alpha + \cut(s, t)$.
% \end{definition}

\begin{definition}[Semi-Oblivious Routing]

A \textit{semi-oblivious routing} is a path system $\calP$. For a demand $d$, we define the congestion of (fractionally) routing $d$ via $\calP$: \begin{equation*}
    \oblreal(\calP, d) = \min_{\text{$R$ a routing on $\calP$}} \obl(R, d).
\end{equation*}
\textbf{Competitiveness.} We say $\calP$ is \textit{$C$-competitive on a set of demands $D$} for $C \geq 1$ if for every demand $d \in D$ we have $\oblreal(\calP, d) \leq C \opt(d)$, and that $\calP$ is \textit{$C$-competitive} if it is $C$-competitive on the set of all demands.

\textbf{Competitiveness With an Oblivious Routing.} We say $\calP$ is \textit{$C$-competitive} with an oblivious routing $R$ \textit{on a set of demands $D$} for $C \geq 1$ if for every demand $d \in D$ we have $\oblreal(\calP, d) \leq C \obl(R, d)$. We say $\calP$ is $C$-competitive with $R$ if it is $C$-competitive with $R$ on the set of all demands.
\end{definition}

% \alert{move this too?} The definition of $(\alpha + \cut)$-sparsity is motivated by the fact that an $\alpha$-sparse semi-oblivious routing can at best be $\max_{s, t} \frac{\cut(s, t)}{\alpha}$-competitive: consider a demand of size $1$ where $d(s, t) = 1$. The optimal congestion is $\frac{1}{\cut(s, t)}$, but the semi-oblivious routing cannot achieve congestion less than $\frac{1}{\alpha}$. Such a counterexample doesn't exist for integral semi-oblivious routings on \zeroonedemands, where we obtain polylogarithmic competitiveness with sublogarithmic sparsity.

% \alert{G: where should we move this? this seems not the right place}
% \alert{G: can we cut stuff from here???}

% \cite{HajiaghayiKL07} also\alert{! stuff moved} note that the support of a $C$-competitive oblivious routing is a $C$-competitive semi-oblivious routing, thus in particular, polynomially sparse $\bigO{\log n}$-competitive semi-oblivious routings exist. However, note that sparse semi-oblivious routings cannot be obtained with this method, as by Theorem \ref{thm:deterministicobliviouslowerbound}, any oblivious routing with $\alpha$-sparse support can at best be $\frac{\sqrt{n}}{\alpha \log n}$-competitive on a hypercube.

Our methods of constructing semi-oblivious routings are based on sampling paths from an oblivious routing, instead of taking its whole support. We define samples of an oblivious routing as follows:

\begin{definition}[Sample of an Oblivious Routing]

Let $G$ be a graph, $R$ be an oblivious routing and $\alpha \in \mathbb{Z}_{> 0}$ be a parameter. An $\alpha$-sample of $R$ is a semi-oblivious routing $\calP$, where $P(s, t) \subseteq \supp(R(s, t))$ is a set of $\alpha$ paths sampled with replacement from the probability distribution $R(s, t)$. An $(\alpha + \cut)$-sample is defined similarly, but with $(\alpha + \cut(s, t))$ sampled paths between $s$ and $t$.
\end{definition}

The following theorem that we prove in the next section states that an $(\alpha + \cut)$-sample of an oblivious routing is competitive with the oblivious routing with high probability\footnote{We use the phrase ``with high probability'' to mean that a claim holds with probability at least $1 - n^{-C}$, where $C > 0$ is any sufficiently large constant. In other words, other constants in the statement can be tuned to account for any sufficiently large $C$.}. The second part of the theorem allows us to cut a logarithmic factor from the competitiveness of integral semi-oblivious routings, as the results in that setting are achieved through a reduction where the special condition applies.

We note that it is convenient to measure the competitiveness of a semi-oblivious routing against the result provided by a (standard) oblivious routing, instead of competitiveness again offline solutions. This allows us to apply our results in a black-box manner to the oblivious routings of \cite{GhaffariHZ21} to achieve completion-time competitive semi-oblivious routings. %The special type of demands described where one $\log$-factor can be cut will be relevant for the reduction we do to prove the competitiveness of $\alpha$-samples.

\begin{restatable}[$(\alpha + \cut)$-Sample Theorem]{theorem}{mainthm}\label{thm:mainthm}
    Let $G = (V, E)$ be a $n$-vertex graph with polynomially many edges, $R$ be an oblivious routing and $\alpha \in [n]$ be a parameter. Let $\calP$ be an $(\alpha + \cut)$-sample of $R$. Then, with high probability, $\calP$ is 
    \begin{equation*}
        \bigO{\log^2(n) \left(\alpha + n^{\bigO{\alpha^{-1}}}\right)}
    \end{equation*}
    -competitive with $R$, and for every subset $D$ of \zeroonedemands such that for $d \in D$, we have $\supp(d) \subseteq \{(s, t) : \cut(s, t) = 1\}$, $\calP$ is
    \begin{equation*}
        \bigO{\log(n) \left(\alpha + n^{\bigO{\alpha^{-1}}}\right)}
    \end{equation*}
    -competitive with $R$ on $D$.
\end{restatable}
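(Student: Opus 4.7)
The first task is to reduce the continuum of possible demands to a manageable discrete family over which one can union-bound. A standard rescaling/bucketing argument shows that, up to losing a constant factor, it suffices to prove competitiveness for demands $d$ with $d(s,t) \in \{0,1,2,4,\ldots,n^{O(1)}\}$ and $\supp(d) \subseteq V \times V$; there are only $n^{O(1)}$ such canonical demands since the graph has polynomially many edges and hence a polynomial dynamic range of achievable congestions. Given this reduction, the whole game is to prove, for a single fixed demand $d$, that the sampled system $\calP$ supports a good routing of $d$ with probability exponentially close to $1$ in $\alpha$, so that a union bound over canonical demands still succeeds.

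For fixed $d$, the natural candidate routing $R'$ supported on $\calP$ is the one that splits $d(s,t)$ uniformly across the $\alpha+\cut(s,t)$ sampled paths of $P(s,t)$. Because each path in $P(s,t)$ is drawn from $R(s,t)$, linearity of expectation gives $\E[\obl(R',d,e)] = \obl(R,d,e) \le \obl(R,d)$ for every edge $e$. Call an edge $e$ \emph{overcongested} if $\obl(R',d,e)$ exceeds a carefully chosen multiple $C \cdot \obl(R,d)$, where $C$ is the claimed competitive ratio. The strategy is a probabilistic argument that iteratively deletes overcongested edges together with every sampled path that crosses them, and shows that for each $(s,t) \in \supp(d)$, at least one candidate path survives; the final routing then ships $d(s,t)$ along any surviving path (uniformly), certifying $\oblreal(\calP,d) \le C \cdot \obl(R,d)$.

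The heart of the argument is bounding, for a fixed pair $(s,t)$, the probability that \emph{all} of its $\alpha+\cut(s,t)$ samples cross an overcongested edge. Markov's inequality on edges together with the expectation identity above shows that, in expectation, only a small (constant) fraction of the total $R$-flow is routed through overcongested edges, so for a typical sample of a single path from $R(s,t)$, the probability of using an overcongested edge is at most some constant $\rho < 1$. If the $\alpha$ samples between $s$ and $t$ were independent, the probability all of them are deleted would be $\rho^{\alpha} = n^{-\Omega(\alpha)}$, which (after paying a $\log n$-factor blow-up in $C$ to re-tune thresholds) drives an exponential improvement in $\alpha$ and enables the union bound across $n^{O(1)}$ canonical demands. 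The $\cut(s,t)$ additive term absorbs the one bad case in which the optimal flow between a single pair has to saturate a min-cut, exactly as in the two-clique example of Section~\ref{sec:our-concepts}; in the restricted setting $\cut(s,t)=1$ of the second bound, this term is harmless and one saves a $\log n$.

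The main obstacle is precisely the independence assumption underlying the previous sentence, which is false: whether a given path is deleted depends on the global load pattern, which is determined by all sampled paths across all pairs, creating intricate positive/negative correlations between deletion events along the same edge. This is where the negative-association toolkit from \Cref{sec:negativeassociation} must do work: one models the indicator variables of ``sampled path $p$ uses edge $e$'' as a negatively associated family (the samples within each $(s,t)$-pair are drawn with replacement from a fixed distribution over disjoint outcomes, and pairs are independent), which permits Chernoff-type upper tails on per-edge congestion while preserving the exponential deletion bound. Closing this step cleanly --- making the dynamic ``delete overcongested edges'' process amenable to negative-association analysis, and then combining it across edges while still getting exponential concentration in $\alpha$ --- is both the main technical hurdle and what the proof outline in \Cref{sec:proof-structure,sec:proving-the-main-lemma} is organized to accomplish; once the per-demand exponential bound is in hand, the union bound, discretization, and the mild $\log^{2} n$ vs.\ $\log n$ gap between the two bounds in the theorem fall out by bookkeeping.
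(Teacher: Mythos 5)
Your plan has a genuine gap at its very foundation: the claim that a rescaling/bucketing step leaves only $n^{O(1)}$ ``canonical'' demands to union-bound over is false. Bucketing only discretizes the \emph{values} $d(s,t)$; it does nothing to the \emph{support}, and the number of possible supports is $2^{\Theta(n^2)}$ (even permutation demands number $n!=\exp(\Theta(n\log n))$). This is exactly the main obstacle the result has to overcome, and it forces a much stronger per-demand guarantee than the one you aim for: a failure probability of $n^{-\Omega(\alpha)}$ per fixed demand (with $\alpha = O(\log n)$ in the interesting regime) cannot survive a union bound over exponentially many demands. The paper instead proves, for each fixed \emph{special} demand $d$, a failure probability of the form $m^{-\Omega(|\supp(d)|)}$, which matches the $n^{O(k)}$ count of special demands with support size $k$ and makes the union bound go through. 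Relatedly, your success event ``every pair in $\supp(d)$ retains at least one surviving path'' is too strong to admit such concentration: a union bound over pairs gives failure probability about $|\supp(d)|\cdot\rho^{\alpha}$, which is not exponentially small in $|\supp(d)|$. The paper instead only asks that \emph{half} of the demand survive (weak competitiveness), boosts this by $O(\log n)$ repetitions, and separately reduces general demands to special demands (where $d(s,t)\in\{0,\alpha+\cut(s,t)\}$, so the path weights are $0/1$-scaled and Chernoff applies); your uniform splitting over $\alpha+\cut(s,t)$ paths for arbitrary demands produces variables of wildly different scales and loses the needed concentration.

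The second gap is that you explicitly defer the core technical step --- making the ``delete overcongested edges'' process compatible with concentration despite the dependencies it creates --- and this is precisely where the actual proof does its work. Negative association alone does not fix the problem that which edges are overcongested (and hence which paths are deleted) is itself a function of the sample. The paper resolves this by fixing an ordering of the edges in advance, defining \emph{bad patterns} $(b_1,\dots,b_m)$ of per-edge deletion amounts, and, for a fixed bad pattern, assigning each candidate path deterministically to the first bad edge it crosses; the resulting per-edge counting variables $Y_k$ involve \emph{disjoint} sets of sampled paths, so the product bound for negatively associated variables applies and the tail probabilities multiply, yielding $\exp(-\Omega(\sum_k \gamma b_k)) \le m^{-\Omega(\siz(d)/\alpha)}$, after which one union-bounds over the (suitably coarsened) bad patterns. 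Without this (or an equivalent) decoupling device, your Markov-plus-``probability $\rho<1$ of hitting an overcongested edge'' step is not well-defined, and the argument does not close.
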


\subsection{Technical Overview}\label{sec:technical-overview}

\textbf{The Hypercube/Permutation-Demand Subcase.} For simplicity, in this section we overview the proof of a weaker version of \Cref{thm:mainthm} for the case of permutation demands on a hypercube. We show that in a hypercube, with high probability, sampling $\alpha = \Theta(\log n)$ paths between every pair of vertices produces a semi-oblivious routing that can route any permutation demand with congestion $\poly \log n$. This subcase contains all the main ideas, while avoiding the intricacies involved with general graphs, lower sparsities, and arbitrary demands.

We note that there exists a simple oblivious routing $R$ for hypercubes: for each packet, we (obliviously) uniformly randomly choose an intermediate node, and then greedily route the packet from the source to the intermediate node, and then to the destination~\cite{ValiantB81}. This strategy ensures that, for any permutation demand, the (expected) congestion of any edge is $\bigO{1}$.

Let $\alpha = \Theta(\log n)$. Let $\calP$ be a semi-oblivious routing constructed by, for every pair $s, t \in V$, taking $\alpha$ samples of $(s, t)$-paths from the oblivious routing $R$. We show that, with high probability, for every permutation demand, we can (adaptively) choose at least one path of the $\alpha$ sampled paths for each packet, such that the chosen paths yield congestion of at most $h = \poly \log n$.

\textbf{Strategy: Exponentially-Small Failure Per Demand.} Our proof strategy that $\calP$ can (semi-obliviously) route every permutation demand is to show the following: For any fixed permutation demand $d$, $R$ can route $d$ with congestion $h$ except with failure probability that is exponential in the size of the demand (namely  $n^{-\Omega(\siz(d))}$). Then, the proof can be finished with a union bound over permutation demands, as the number of permutation demands of size $s$ is at most exponential in $s$ (namely $n^{2s}$). Hence, with high probability, $R$ works for all permutation demands.

We note that there are many standard arguments that show that a fixed demand fails with probability at most $\exp(- \poly \log n)$ (such as randomized rounding~\cite{raghavan1987randomized}). In fact, this requires no semi-obliviousness. However, this is not sufficient as there are exponentially many permutation demands (namely $n! = \exp(\Theta(n \log n))$). A significantly more intricate argument is required.

\textbf{Idea: Weak Routing.} It is sufficient to show that the routing $\calP$ can route \emph{at least half} of any permutation demand with congestion $h$. Any semi-oblivious routing with that property can be shown to be able to route any permutation demand with congestion $h \cdot \Theta(\log n) = \poly \log n$, by repeatedly routing half of the demand. Thus, it suffices to show that for any fixed permutation demand $d$, with failure probability $n^{-\Omega(\siz(d))}$, $\calP$ can route at least half of $d$ with congestion $h$.

\textbf{Idea: Dynamic Process.} Consider all the $\alpha \cdot \siz(d)$ candidate paths that the sampled $\calP$ can use to route $d$. If the congestion of every edge in $G$ is at most $h$, we are done, as selecting arbitrary paths to route the demand cannot achieve higher congestion than selecting every path. Otherwise, we find an overcongested edge $e$ (with congestion $t > h$) and delete all candidate paths that go through $e$. The crucial observation here is that the probability of $t > h$ paths overcongesting $e$ is at most $\exp(- \Omega(t))$. To explain why, consider a random variable $X$ that gives the number of sampled paths crossing a fixed edge $e$. As the base oblivious routing $R$ achieves constant congestion, the expectation of the random variable is at most $\E[X] \le \Theta(\alpha)$, so $t > h \geq 2E[X]$. Since the paths are sampled independently, a Chernoff bound ensures $\Pr[X > t] < \exp(- \Omega(t))$. We note that the variables are not actually independent, but can be proven to be negatively associated, allowing the use of the bound without changes~\cite{Kum83}.

We repeat the process for all overcongested edges and delete all congesting paths. After this, if at most half of the candidate paths were deleted, for at least half of the vertex pairs in the demand at least one path remains. Routing the demand between those pairs through those paths routes at least half of the demand with congestion at most $h$, as desired.

An uninitiated reader could assume we are done: let the congestions of the overcongested edges be $t_1, \dots, t_j$. Assuming independence, the probability of these overcongestions is $\exp(-\Omega(\sum_i t_i)) \le n^{- \Omega(\siz(d))}$, where the last inequality used the fact that weak routing fails, i.e., $\sum_i t_i > \alpha \cdot \siz(d) / 2 = \Omega(\siz(d) \log n)$. However, a big issue is that overcongestions are not independent. A more intricate approach needs to be taken.

\textbf{Idea: Bad Patterns.} To resolve the inter-dependencies, we first fix an arbitrary ordering of edges $(e_1, e_2, \ldots, e_m)$ that is independent of $\calP$ or $d$. The dynamic process will iterate over the edges in the fixed order and, if the current edge is overcongested, delete all candidate paths congesting the edge. To analyze the number of paths removed, we define \textit{bad patterns}: $m$-tuples of integers $(t_1, \ldots, t_m) $ with $\sum_i t_i > \alpha\cdot \siz(d)/2$. A bad pattern describes the path deletions during a failed process. During the processing of a demand, we say that the bad pattern occurs, if for every edge $e_i$, we deleted at least $t_i$ paths while processing $e_i$. If more than half of the candidate paths are deleted during the process, at least one bad pattern occurs.

Since we never delete more than zero but at most $h$ paths during a step, we require that every nonzero $t_i$ is greater than $h$. Thus, there are at most $\alpha \cdot \siz(d) / h$ nonzero values in the bad pattern, and with simple combinatorial calculations we can bound the number of bad patterns to $n^{\bigO{\siz(d)}}$. Thus, it suffices to show that any fixed bad pattern occurs with probability at most $n^{-\Omega(\siz(d))}$, at which point a union bound finishes the proof.

\textbf{Resolving Dependence.} We note that if a bad pattern occurs, in the initial set of sampled paths there must be at least $t_i$ paths crossing edge $e_i$ that \emph{do not} cross any overcongested edge $e_j$ that is ``earlier'' ($j < i$). Those paths would already have been eliminated while processing edge $e_j$ before the process reaches $e_i$. Thus, after fixing a potential bad pattern $(t_1, \ldots, t_m)$, we can uniquely assign each path $p$ to the first edge $e_i$ (in the ordering) on $p$ with $t_i > h$. Then, if a bad pattern occurs, $\calP$ must have sampled at least $t_i$ paths assigned to $e_i$, the probability of which is $\exp(-\Omega(t_i))$ as before via a Chernoff bound on negatively associated variables. Crucially, since the assignment is unique and does not depend on the process, these individual probabilities can be multiplied together. More formally, the probability that multiple lower bounds on disjoint subset sums of negatively associated variables simultaneously occur is at most the product of the probabilities of the individual lower bounds~\cite{Kum83}. Hence, the probability of a bad pattern is $\exp(-\Omega(\sum_i t_i))$ and we can conclude the same way as before.

\textbf{The General Case.} Recall that in the case of general graphs, it is necessary to sample at least $\cut(s, t)$-many paths between $s$ and $t$ to achieve competitiveness (as explained in \Cref{sec:our-concepts}). Because the number of sampled paths can now vary, we need to weight the sampled $(s, t)$-paths by dividing the $(s, t)$-demand equally among them, instead of setting the weight of every path to $1$. 

\textbf{Idea: Special Demands.} However, this causes another issue. Consider the random variables whose sum we bound via Chernoff in the hypercube case. Unlike before, their scales vary wildly, which significantly degrades the concentration of their sum. To resolve this, it is natural to work with \textit{special demands}: demands where $d(s, t)$ is either $0$ or $\alpha + \cut(s,t)$. Special demands force the random variables comprising the considered sum to be binary. This re-enables Chernoff and fixes the proof, at least for the case of special demands. However, this is enough --- we can show that a semi-oblivious routing that is competitive on special demands is competitive on general due to simple bucketing (up to a single logarithmic factor).

Another small issue is that the extra paths grow the set of bad patterns too large. To handle this, we force $t_i$ to be a multiple of $h$, and halve the sum requirement. Then, there still exists at least one bad pattern that occurs whenever more than half of the paths get cut, while the total number of bad patterns is sharply cut.

\textbf{The Low-Sparsity Case.} To achieve results for $\alpha \ll \log n$, we need to use the large deviation version of Chernoff bounds, providing the stronger bound of $\Pr[X > t] < \exp(-\Omega(t \log (t / \mathbb{E}[X])))$ when $t \gg \mathbb{E}[X]$. The extra logarithmic term also allows us to cut the required sparsity to achieve polylogarithmic competitiveness from $\Omega(\log n)$ to $\bigO{\log(n) / \log \log n}$.

\subsection{Structure of the Proof}\label{sec:proof-structure}

In this section, we give a short overview of the formal proof of \Cref{thm:mainthm}. The most important part of the proof is \Cref{lem:samplingorder}, which is a weaker version of the result: for every fixed demand of a special class of demands we define, with failure probability exponential in the support size of the demand, a sample independent of the demand can route at least half of the total demand with low congestion. Then, a union bound over the special class of demands in the proof of \Cref{cor:weakspecialdemands} shows that a sample can route at least half of every special demand with low congestion with high probability, and two further reductions generalise from routing half of the demand to routing the full demand (\Cref{lem:weaktostrong}) and from special demands to arbitrary demands (\Cref{lem:specialtogeneral}).

The notion of \textit{weak-competitiveness} formalises the notion of competitively routing half of the total demand. Note that in particular we do not require the semi-oblivious routing to be able to route the subdemand with congestion at most $C$ times the congestion of the oblivious routing on the subdemand, but instead compare against the congestion on the original demand. In \Cref{lem:weaktostrong}, we show that weak competitiveness implies competitiveness in general at the cost of one logarithmic factor.

\begin{definition}[Weakly-Competitive Semi-Oblivious Routing]\label{def:weakcompetitive}

For $C \geq 1$, a semi-oblivious routing $\calP$ is $C$-weakly-competitive with an oblivious routing $R$ on a set of demands $D$, if, for every demand $d \in D$, there exists a demand $d'$ and a routing $R'$ on $\calP$ such that
\begin{itemize}
    \item $d'(s, t) \leq d(s, t)$ for all $s, t$,
    \item $\siz(d') \geq \frac{1}{2} \siz(d)$,
    \item $\obl(R', d') \leq C \obl(R, d)$.
\end{itemize}
\end{definition}

The special class of demands contains the demands for which the ratio of $(s, t)$-demand to the number of $(s, t)$-paths $d(s, t) / (\alpha + \cut(s, t))$ is either $0$ or $1$. The definition of this class is necessary, as the proof of the main Lemma \ref{lem:samplingorder} requires the ratio to be constant for all nonzero demands to achieve concentration. We prove in \Cref{lem:specialtogeneral} that competitiveness on special demands implies competitiveness on general demands at the cost of an (multiplicative) logarithmic factor.

\begin{definition}[Special Demands]

Let $G$ be a $n$-vertex graph and $\alpha \in [n]$ be a fixed parameter. A demand $d$ is \textit{$\alpha$-special}, if $d(s, t) \in \{0, \alpha + \cut(s, t)\}$ for every $(s, t)$. We denote the set of all $\alpha$-special demands on $G$ by $\spec$. When $\alpha$ is clear from the context, we refer to $\alpha$-special demands as just "special demands".
\end{definition}

With these two definitions, we can formally state \Cref{lem:samplingorder}, the core part of our proof. The parameter $h$ that appears here explicitly can later be set to achieve the "with high probability" guarantee in \Cref{thm:mainthm}. Section \ref{sec:proving-the-main-lemma} is dedicated to the proof of \Cref{lem:samplingorder}.

\begin{restatable}[Main Lemma]{lemma}{samplingorderLemma}\label{lem:samplingorder}
Let $G = (V, E)$ be a $n$-vertex $m$-edge graph where $3 \leq n \leq m$, $\alpha \in [n]$ and $h \geq 1$ be fixed parameters, $R$ be an oblivious routing on $G$ and $d \in \spec$ be a fixed special demand. Let $\calP$ be an $(\alpha + \cut)$-sample of $R$. Then, with probability at least $1 - m^{-(h + 3) |\supp(d)|}$, $\calP$ is $\left(\alpha + m^{16(h + 7) \alpha^{-1}}\right)$-weakly-competitive with $R$ on $\{d\}$. 
\end{restatable}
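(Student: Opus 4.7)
The plan is to prove the lemma by setting up a randomized dynamic process that prunes ``overcongested'' edges from $\calP$ and showing, via the probabilistic method, that with the claimed probability at least half of the total sampled flow survives. Fix an arbitrary ordering $e_1,\ldots,e_m$ of the edges that is independent of $\calP$, and pick an overcongestion threshold of the form $T := \obl(R,d)\cdot\bigl(\alpha + m^{16(h+7)/\alpha}\bigr)$. Since $d$ is $\alpha$-special, for each $(s,t)\in\supp(d)$ the sample contributes $\alpha+\cut(s,t)$ candidate paths, and we place unit flow on each. The process iterates $i=1,\ldots,m$: if the total surviving flow through $e_i$ exceeds $T$, delete every surviving path through $e_i$. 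After termination, every edge carries at most $T$ units of surviving flow; so if at least half of the initial flow survives, the remaining paths route a subdemand $d'\le d$ with $\siz(d')\ge\siz(d)/2$ and congestion at most $T$, exactly the $(\alpha+m^{16(h+7)/\alpha})$-weak-competitiveness required.

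To upper-bound the failure probability, I would discretize failure through \emph{bad patterns}: tuples $(t_1,\ldots,t_m)$ of nonnegative multiples of $T$ with $\sum_i t_i \ge \siz(d)/2$. Whenever the process deletes more than half of the flow, rounding the flow deleted at each step down to the nearest multiple of $T$ produces a bad pattern whose coordinates are all \emph{realized}, meaning the process deletes at least $t_i$ units at step $i$. Since every nonzero $t_i$ is at least $T$ and the total sum is at most $\siz(d)$, at most $\siz(d)/T$ coordinates are nonzero, and a stars-and-bars calculation bounds the number of bad patterns by $m^{O(\siz(d)/T)}$. The choice of $T$ is made so that this count is at most $m^{|\supp(d)|}$, leaving room to absorb it in a later union bound.

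The heart of the proof---and the main obstacle---is to bound the probability that a single fixed bad pattern is realized, in the presence of strong dependencies between deletions at different steps. The key trick is a \emph{unique pre-assignment}: for the bad pattern $(t_i)$, let $A_i$ be the set of simple $(s,t)$-paths (over all $(s,t)\in\supp(d)$) that cross $e_i$ but do not cross any earlier $e_j$ with $t_j>0$. The $A_i$ are pairwise disjoint and are defined entirely before any sampling takes place. If the bad pattern is realized, then the initial sample must contain at least $t_i$ paths lying in $A_i$ for every $i$, since any other path through $e_i$ would have been deleted at the first earlier $e_j$ with $t_j>0$ on its route and could not appear in the step-$i$ count. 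Letting $X_i$ be the number of sampled paths in $A_i$, we have $\E[X_i]\le\obl(R,d,e_i)\le\obl(R,d)$, and the $X_i$ are sums of negatively associated Bernoulli indicators via the framework of \Cref{sec:negativeassociation}. The joint upper-tail property for negatively associated variables yields
\[
\Pr\Bigl[\bigwedge_i\{X_i\ge t_i\}\Bigr]\le\prod_i\Pr[X_i\ge t_i]\le\prod_i\exp\!\bigl(-\Omega(t_i\log(t_i/\E[X_i]))\bigr),
\]
where the last inequality is the large-deviation form of Chernoff---the $\log(t_i/\E[X_i])$ factor is precisely what unlocks the low-sparsity regime.

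Finally, since every nonzero $t_i\ge T$ satisfies $t_i/\E[X_i]\ge\alpha+m^{16(h+7)/\alpha}$, we get $\log(t_i/\E[X_i])=\Omega((h+7)\log m/\alpha)$. Combining this with $\sum_i t_i\ge\siz(d)/2\ge\tfrac{\alpha}{2}|\supp(d)|$ gives per-pattern failure probability at most $m^{-\Omega((h+7)|\supp(d)|)}$. A union bound over the $m^{O(|\supp(d)|)}$ bad patterns leaves a total failure probability at most $m^{-(h+3)|\supp(d)|}$ once the constants are tuned appropriately, matching the statement of the lemma.
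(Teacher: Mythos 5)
Your proposal follows the paper's own proof essentially step for step: the same edge-ordered pruning process with threshold $T=\obl(R,d)\bigl(\alpha+m^{16(h+7)\alpha^{-1}}\bigr)$, the same notion of bad patterns whose coordinates are multiples of the threshold, the same demand-independent disjoint pre-assignment of paths to their first ``bad'' edge, the same use of negative association to multiply the individual tail bounds, and the same large-deviation Chernoff estimate; the substance is correct. Two pieces of bookkeeping need repair, both handled in the paper and both fixable with quantities you already derived. First, rounding each deleted amount down to the nearest multiple of $T$ only guarantees that the rounded sum is at least half of the deleted flow, i.e.\ at least $\siz(d)/4$, so your bad patterns must be defined with sum threshold $\siz(d)/4$ (and upper bound $\siz(d)$), not $\siz(d)/2$; otherwise the rounded tuple need not be a bad pattern. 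Second, the claim that the number of bad patterns is $m^{\bigO{|\supp(d)|}}$ is false in general: the count is $m^{\bigO{\siz(d)/T}}\le m^{\bigO{\siz(d)/\alpha}}$, and since $\siz(d)=\sum_{(s,t)\in\supp(d)}(\alpha+\cut(s,t))$ this can far exceed $m^{\bigO{|\supp(d)|}}$ when the min-cuts are large relative to $\alpha$; pairing that count with your weakened per-pattern bound $m^{-\Omega((h+7)|\supp(d)|)}$ would then not close the union bound. The fix is to keep the per-pattern bound in the stronger form your Chernoff computation already yields, namely $m^{-\Omega((h+7)\sum_i t_i/\alpha)}\le m^{-\Omega((h+7)\siz(d)/\alpha)}$, run the union bound entirely in units of $\siz(d)/\alpha$, and only at the very end use $\siz(d)\ge\alpha|\supp(d)|$ to obtain the stated $m^{-(h+3)|\supp(d)|}$.
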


The proof of \Cref{thm:mainthm} from \Cref{lem:samplingorder} is achieved through three reductions. First, a simple union bound shows that the probability bound of \Cref{lem:samplingorder} is sufficient for a random sample to be weakly competitive on all special demands, proving  \Cref{cor:weakspecialdemands}, a version of the main lemma with the correct sampling order. Then \Cref{lem:weaktostrong} reduces from weak competitiveness on special demands to general competitiveness on special demands, and finally \Cref{lem:specialtogeneral} shows competitiveness on general demands for any routing that is competitive on special demands. The proofs of the three reductions are covered in \Cref{sec:finishing-the-reduction}. Combining \Cref{cor:weakspecialdemands}, \Cref{lem:weaktostrong} and \Cref{lem:specialtogeneral}, the proof of \Cref{thm:mainthm} immediately follows.

\begin{restatable}{corollary}{weakspecialdemandCorollary}\label{cor:weakspecialdemands}
Let $G$ be a $n$-vertex $m$-edge graph where $3 \leq n \leq m$, $\alpha \in [n]$ and $h \geq 1$ be fixed parameters, and $R$ be an oblivious routing on $G$. Let $\calP$ be an $(\alpha + \cut)$-sample of $R$. Then, with probability at least $1 - m^{-h}$, $\calP$ is $\left(\alpha + m^{16(h + 7) \alpha^{-1}}\right)$-weakly competitive with $R$ on $\spec$. 
\end{restatable}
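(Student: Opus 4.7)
The plan is a straightforward union bound over all $\alpha$-special demands on $G$, leveraging the exponential-in-$|\supp(d)|$ failure bound of \Cref{lem:samplingorder}. The main point is that an $\alpha$-special demand is uniquely determined by its support $S \subseteq \{(s, t) : s \neq t\}$, since the definition forces $d(s, t) = \alpha + \cut(s, t)$ on $S$ and $d(s, t) = 0$ elsewhere. Consequently, the number of $\alpha$-special demands with $|\supp(d)| = k$ is at most $\binom{n(n-1)}{k} \leq n^{2k} \leq m^{2k}$, using $n \leq m$.

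I would organize the proof as follows. First, for each $k \in \{1, 2, \ldots, n(n-1)\}$, invoke \Cref{lem:samplingorder} on every $\alpha$-special demand of support size $k$: each such event (that $\calP$ fails to be $(\alpha + m^{16(h+7)\alpha^{-1}})$-weakly-competitive with $R$ on that demand) has probability at most $m^{-(h+3)k}$. A union bound over these at most $m^{2k}$ demands shows that, at support-level $k$, the total failure probability is at most $m^{2k} \cdot m^{-(h+3)k} = m^{-(h+1)k}$. Next, sum over all $k \ge 1$:
\begin{equation*}
\sum_{k=1}^{n(n-1)} m^{-(h+1)k} \;\leq\; \frac{m^{-(h+1)}}{1 - m^{-(h+1)}} \;\leq\; \frac{3}{2}\, m^{-(h+1)} \;\leq\; m^{-h},
\end{equation*}
where the second inequality uses $m \geq n \geq 3$ (so $m^{-(h+1)} \leq 1/3$) and the third uses $\tfrac{3}{2} m^{-1} \leq 1$. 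The $k=0$ case (the empty demand) is trivially weakly-competitive, contributing no failure probability. Therefore, with probability at least $1 - m^{-h}$, no special demand triggers a failure, which is exactly the statement that $\calP$ is $(\alpha + m^{16(h+7)\alpha^{-1}})$-weakly-competitive with $R$ on all of $\spec$ simultaneously.

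The argument has essentially no obstacle: the heavy lifting is entirely in \Cref{lem:samplingorder}, which was deliberately stated with a failure bound $m^{-(h+3)|\supp(d)|}$ whose exponent scales with the support size. The exponent of $h+3$ (rather than $h+1$) was chosen precisely to leave a factor of $m^{-2k}$ of slack to absorb the combinatorial count $m^{2k}$ of special demands per support size, plus a residual $m^{-k}$ to make the geometric series telescope cleanly. The only subtlety to double-check is that $\alpha$-special demands are indeed counted by their support (which follows immediately from the definition), and that the geometric summation constant is absorbed by the available slack; both are immediate under the hypothesis $3 \leq n \leq m$.
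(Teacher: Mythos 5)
Your proposal is correct and follows essentially the same route as the paper: a union bound over $\alpha$-special demands grouped by support size, using that each special demand is determined by its support (so at most $m^{2k}$ demands at level $k$), multiplying by the $m^{-(h+3)k}$ bound from \Cref{lem:samplingorder}, and summing the resulting geometric series $\sum_{k\ge 1} m^{-(h+1)k} \le m^{-h}$ under $3 \le n \le m$, $h \ge 1$. The numerical constants you check are fine, so there is nothing to add.
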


\begin{restatable}[weak-to-strong reduction]{lemma}{weaktostrongLemma}\label{lem:weaktostrong}
Let $G$ be a $n$-vertex $m$-edge graph, $R$ be an oblivious routing and $D$ be a set of demands, such that for every demand $d \in D$, for every demand $d'$ such that $d'(s, t) \in \{0, d(s, t)\}$, we have $d' \in D$. Let $\calP$ be a semi-oblivious routing that is $C$-weakly-competitive with $R$ on $D$. Then, $\calP$ is $\bigO{C \log m}$-competitive with $R$ on $D$.
\end{restatable}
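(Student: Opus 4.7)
I will apply the $C$-weak-competitive guarantee iteratively, carving $d$ into $O(\log m)$ successive pieces plus a negligible tail. In each phase we route at least half of the remaining demand on $\calP$ at congestion at most $C\cdot\obl(R,d)$, so after summing over $O(\log m)$ phases and adding the tail we obtain congestion $O(C\log m)\cdot\obl(R,d)$ on $\calP$, as required. Fix $d \in D$, set $\kappa := \obl(R,d)$, and let $d^{(1)} := d$. For $k = 1, 2, \ldots, K$ (with $K$ chosen below), invoke weak-competitiveness on $d^{(k)} \in D$ to produce a subdemand $d_k$ with $d_k(s,t) \in \{0, d^{(k)}(s,t)\}$ (see ``main obstacle'' below), $\siz(d_k) \ge \siz(d^{(k)})/2$, and a routing $R_k$ on $\calP$ with $\obl(R_k, d_k) \le C\cdot\obl(R, d^{(k)}) \le C\kappa$, the last inequality coming from $d^{(k)} \le d$ and linearity of congestion in the demand. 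Set $d^{(k+1)} := d^{(k)} - d_k$; by induction $d^{(k+1)}(s,t) \in \{0, d(s,t)\}$, so $d^{(k+1)} \in D$ by the stated closure of $D$.

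The number of phases is controlled by the standard inequality
\begin{equation*}
  \siz(d) \;\le\; \sum_{(s,t)} d(s,t)\cdot\expec{|R(s,t)|} \;=\; \sum_{e\in E} \obl(R,d,e) \;\le\; m\kappa,
\end{equation*}
which uses $\expec{|R(s,t)|} \ge 1$ for $s \ne t$. Since $\siz$ halves each phase, $\siz(d^{(K+1)}) \le \siz(d)/2^K \le m\kappa/2^K$, so taking $K := \lceil 2\log_2 m\rceil = O(\log m)$ yields $\siz(d^{(K+1)}) \le \kappa/m \le \kappa$. I then route the tail by sending each remaining $d^{(K+1)}(s,t) > 0$ along any single fixed path $p_{s,t}\in P(s,t)$; such a path exists because the single-pair demand $d\cdot\ind{(s,t)}$ lies in $D$ by closure, and weak-competitiveness applied to it forces $P(s,t) \ne \emptyset$. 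The tail contributes at most $\sum_{(s,t)} d^{(K+1)}(s,t) = \siz(d^{(K+1)}) \le \kappa$ to the congestion on any edge. Taking $R^*$ to be the union of $R_1, \ldots, R_K$ with the tail routing thus gives $\obl(R^*, d) \le KC\kappa + \kappa = O(C\log m)\cdot\obl(R,d)$, establishing the lemma.

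\textbf{Main obstacle.} The one delicate step is keeping the iteration inside $D$: the closure hypothesis only preserves $D$ under coordinatewise $\{0, d(s,t)\}$-valued subdemands, while the definition of weak-competitiveness only guarantees $d_k \le d^{(k)}$. In the paper's applications weak-competitiveness is produced combinatorially (by choosing a subset of vertex pairs to route in full and discarding the rest), so the stronger form $d_k(s,t) \in \{0, d^{(k)}(s,t)\}$ holds automatically and I rely on it; absent this structural fact one would need either a size-preserving truncation to the combinatorial form (which a naive thresholding at $d^{(k)}(s,t)/2$ does not provide) or a correspondingly strengthened definition of weak-competitiveness. Everything else in the argument is a routine halving-plus-tail accounting.
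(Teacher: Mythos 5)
There is a genuine gap, and it is exactly the one you flag yourself: your iteration only stays inside $D$ if the subdemand $d_k$ returned by weak-competitiveness has the combinatorial form $d_k(s,t)\in\{0,d^{(k)}(s,t)\}$, but \Cref{def:weakcompetitive} only guarantees $d_k\le d^{(k)}$ pointwise, and the lemma must hold for an arbitrary $C$-weakly-competitive $\calP$. Worse, the structural assumption you fall back on (``in the paper's applications weak-competitiveness is produced combinatorially'') is false in the actual application: the Main Lemma produces $d'$ from a fractional path-deletion process, so $d'(s,t)=\sum_{p\in P(s,t)}w_m(p)$ is typically an intermediate multiple of $d(s,t)/(\alpha+\cut(s,t))$, neither $0$ nor $d(s,t)$. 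So as written the proposal does not prove the stated lemma; everything else (halving count, the bound $\siz(d)\le m\cdot\obl(R,d)$, the tail handling, combining routings via \Cref{lem:demandsumlemma}) is fine routine accounting.

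The missing idea is a thresholded rounding of $d'$, and you came close: you correctly observed that thresholding at $\tfrac12 d^{(k)}(s,t)$ gives no size progress, but any threshold strictly below $\tfrac12$ works. The paper takes $\tfrac14$: given $d'$ with $d'\le d_{k}$, $\siz(d')\ge\tfrac12\siz(d_{k})$ and $\obl(R_k,d')\le C\,\obl(R,d)$, it routes \emph{in full} every pair with $d'(s,t)\ge\tfrac14 d_{k}(s,t)$, i.e.\ the demand $d''(s,t)=\ind{d'(s,t)\ge\tfrac14 d_{k}(s,t)}\,d_{k}(s,t)$, which satisfies $d''\le 4d'$ and hence $\obl(R_k,d'')\le 4C\,\obl(R,d)$ by linearity, while the leftover $d_{k}-d''$ is of the closed form $\{0,d(s,t)\}$ (hence in $D$) and has size at most $\tfrac43\bigl(\siz(d_{k})-\siz(d')\bigr)\le\tfrac23\siz(d_{k})$, since every surviving pair has $d_{k}(s,t)-d'(s,t)>\tfrac34 d_{k}(s,t)$. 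Replacing your ``halve each phase'' step with this ``lose a $\tfrac23$ factor per phase at congestion $4C\,\obl(R,d)$'' step repairs the argument and yields the claimed $\bigO{C\log m}$ bound after $\lceil\log_{3/2}m\rceil$ phases plus the tail.
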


\begin{restatable}[special-to-general reduction]{lemma}{specialtogeneralLemma}\label{lem:specialtogeneral}

Let $G$ be a $n$-vertex $m$-edge graph, $\alpha \in [n]$ be a fixed parameter, $R$ be an oblivious routing and $\calP$ be a semi-oblivious routing on $G$ that is $C$-competitive with $R$ on $\spec$. Then, $\calP$ is $\bigO{C \log(m)}$-competitive with $R$ on all demands.
\end{restatable}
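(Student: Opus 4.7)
The plan is to reduce an arbitrary demand $d$ to $O(\log m)$ special sub-demands by bucketing on the ratio $r(s,t) := d(s,t)/(\alpha + \cut(s,t))$, invoke the hypothesized $C$-competitive routing on each bucket, and sum the resulting congestions. This is exactly the ``simple bucketing (up to a single logarithmic factor)'' promised in the technical overview.

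The key preliminary is a cut-based lower bound $\obl(R, d) \geq r_{\max}$, where $r_{\max} := \max_{s,t} r(s,t)$. Letting $(s^*, t^*)$ attain the maximum, any routing must push $d(s^*,t^*) = r_{\max}(\alpha + \cut(s^*,t^*))$ units of flow across the minimum $(s^*,t^*)$-cut, so some cut edge carries congestion at least $d(s^*,t^*)/\cut(s^*,t^*) \geq r_{\max}$. This caps the range of ratios worth bucketing at a factor of $m^{O(1)}$.

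Fix the threshold $\tau := \obl(R, d)/(8m^3)$, the high buckets $B_k := \{(s,t) : 2^k \tau \leq r(s,t) < 2^{k+1}\tau\}$ for $k = 0, 1, \ldots, K$ with $K = O(\log m)$, and the low bucket $B_{\mathrm{low}} := \{(s,t) : 0 < r(s,t) < \tau\}$. Let $d_k$ (resp.\ $d_{\mathrm{low}}$) denote $d$ restricted to the corresponding bucket, and define the $\alpha$-special demand $\bar d_k(s,t) := (\alpha + \cut(s,t))\,\ind{(s,t) \in B_k}$. Pointwise $2^k \tau\, \bar d_k \leq d_k \leq 2^{k+1}\tau\, \bar d_k$. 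By hypothesis, there is a routing $R'_k$ on $\calP$ with $\obl(R'_k, \bar d_k) \leq C\, \obl(R, \bar d_k)$, and linearity of $\obl(\cdot, \cdot, e)$ in the demand yields
\begin{equation*}
    \obl(R'_k, d_k) \leq 2^{k+1}\tau \cdot \obl(R'_k, \bar d_k) \leq 2^{k+1}\tau\, C \cdot \obl(R, \bar d_k) \leq 2C\, \obl(R, d_k) \leq 2C\, \obl(R, d).
\end{equation*}
For the low bucket, $d(s,t) \leq \tau(\alpha + \cut(s,t)) \leq 2m\tau$ over at most $n^2 \leq 4m^2$ pairs, so $\siz(d_{\mathrm{low}}) \leq 8m^3\tau = \obl(R, d)$; routing every low pair along any candidate path contributes at most $\obl(R, d)$ congestion to any single edge.

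Combining, let $R'$ on $\calP$ use $R'_k$ for pairs in $B_k$ and an arbitrary candidate path for pairs in $B_{\mathrm{low}}$. Since the buckets partition $\supp(d)$, on every edge $e$ we get $\obl(R', d, e) \leq \sum_k \obl(R'_k, d_k, e) + \obl(R', d_{\mathrm{low}}, e)$, hence
\begin{equation*}
    \obl(R', d) \leq \sum_{k=0}^{K} \obl(R'_k, d_k) + \obl(R, d) = O(C \log m)\, \obl(R, d).
\end{equation*}
The main technical point is the $O(\log m)$ bucket bound via the cut lower bound $r_{\max} \leq \obl(R, d)$, together with controlling the low-bucket residual so it does not dominate; everything else reduces to routine linearity and monotonicity bookkeeping.
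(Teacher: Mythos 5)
Your proof is correct, and it follows the same core strategy as the paper --- dyadic bucketing of the ratio $d(s,t)/(\alpha+\cut(s,t))$, rounding each bucket up/down to an $\alpha$-special demand at the cost of a factor $2$, invoking $C$-competitiveness per bucket, and summing edge congestions over the partition --- but you bound the number of buckets by a different mechanism. The paper first reduces to demands with values in $\{0\}\cup[1,n^2m]$ via a separate poly-sufficiency lemma (scale the demand so $\siz(d)=n^2m$, split off the sub-$1$ part, and absorb it using $\siz(d)/m \le \obl(R,d)$), after which the ratios automatically span only $O(\log m)$ octaves. You instead anchor the buckets directly to $\obl(R,d)$: the min-cut observation $\obl(R,d)\ge \max_{s,t} d(s,t)/(\alpha+\cut(s,t))$ caps the ratios from above, the threshold $\tau=\obl(R,d)/\Theta(m^3)$ caps them from below, and the low-ratio residual is absorbed by the trivial bound ``congestion $\le$ total demand size.'' This makes your argument self-contained (no scaling step, no separate capping lemma) while the paper's version modularizes the same work into \Cref{lem:capdemands}, \Cref{lem:demandsumlemma}, and \Cref{lem:boundcomplemma}; both yield $O(C\log m)$. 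One cosmetic slip: for a connected graph $\alpha+\cut(s,t)\le n+m\le 2m+1$, not $2m$, so your low-bucket size bound should read $\siz(d_{\mathrm{low}})=O(m^3)\tau$ (or take $\tau=\obl(R,d)/(12m^3)$); this only shifts constants and does not affect the conclusion.
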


\mainthm*

\begin{proof}
For $m < n$, since the graph has to be connected, there is exactly one simple path between any two vertices, thus $\calP$ is $1$-competitive. Thus, we may assume that $3 \leq n \leq m$, as for constant $n$ and $m$, no guarantees are made.

By \Cref{cor:weakspecialdemands}, with high probability, the $(\alpha + \cut)$-sample $\calP$ is $\left(\alpha + m^{\bigO{\alpha^{-1}}}\right)$-weakly competitive with $R$ on $\spec$. From now on, assume it is. Then, by \Cref{lem:weaktostrong}, $\calP$ is $\bigO{\log(m) \left(\alpha + m^{\bigO{\alpha^{-1}}}\right)}$-competitive with $R$ on $\spec$. Thus, since $D \subseteq \spec$, the sample $\calP$ is $\bigO{\log (m) \left(\alpha + m^{\bigO{\alpha^{-1}}}\right)}$-competitive with $R$ on $D$, and by \Cref{lem:specialtogeneral}, $\calP$ is $\bigO{\log^2 (m) \left(\alpha + m^{\bigO{\alpha^{-1}}}\right)}$-competitive w/ $R$. Finally, since $m$ is polynomially bounded, $\log m = \bigO{\log n}$ and $m^{\bigO{\alpha^{-1}}} = n^{\bigO{\alpha^{-1}}}$.
\end{proof}

\subsection{Proving the Main Lemma}\label{sec:proving-the-main-lemma}

The entirety of this section is dedicated to the proof of \Cref{lem:samplingorder}.

\samplingorderLemma*

Let $\cou(s, t) := \alpha + \cut(s, t)$ be the number of $(s, t)$-paths we sample, and let $D := \siz(d)$. Let $\gamma := \obl(R, d) \left(\alpha + m^{16(h + 7) \alpha^{-1}}\right)$ be the maximum allowed congestion of any edge. Index the edges arbitrarily, such that $E = \{e_1, \dots, e_{m}\}$.

We start by defining random variables $X$ that describe the $(\alpha + \cut)$-sampling of $\calP$ from $R$. Let $X(s, t)_{i, p}$ be a $\{0, 1\}$-random variable, with value $1$ if and only if the $i$th sampled $(s, t)$-path is $p$. Then, $p \in P(s, t)$ if and only if $X(s, t)_{i, p} = 1$ for some $i$. For fixed $s, t, i$, exactly one of $X(s, t)_{i, p}$ will equal $1$ and the probability that $X(s, t)_{i, p}$ equals $1$ is $\prob{R(s, t) = p}$. The variables $X(s, t)_{i, p}$ do not depend on $X(s', t')_{i', p'}$ with $i' \neq i$. Thus, by \Cref{lem:sumnegativeassociation} and \Cref{lem:indnegativeassociation}, the variables $X$ are \emph{negatively associated}.

We certify that the sampled semi-oblivious routing $\calP$ has the desired competitiveness with $R$ on $\{d\}$ by giving a process that finds weights $w_m$ for paths in $\calP$, such that, routing $w_m(p)$ of the $(s, t)$-demand along the $(s, t)$-path $p$ for all $(s, t, p)$, the congestion is at most $\gamma$ and at least half of the total demand is routed. The process can fail, and our goal is to prove that it fails with probability at most $m^{-(h+3)|\supp(d)|}$.

We find $w_m$ by defining a sequence of path weights $w_0, \dots, w_m$. The initial weight $w_0$ of a $(s, t)$-path for $(s, t) \in \supp(d)$ is simply the number of times it was sampled. As the demand is special, the initial sum of weights of $(s, t)$-paths then equals $d(s, t)$: using the initial weights all of the demand can be routed, but the resulting congestion might be arbitrarily high. The process iterates over the edges of the graph in the fixed order. While handling edge $e_k$, if the congestion applied to the edge by the weights $w_{k - 1}$ is greater than the allowed $\gamma$, the weight of all paths crossing the edge is set to $0$, and otherwise, all weights remain unchanged. Then, at the end of the process, the congestion of any edge is guaranteed to be at most $\gamma$, and it simply remains to bound the probability the final weights $w_m$ route at least half of the demand.

Formally, the weights $w_0, \dots, w_m$ are random functions, with $w_0$ defined as
\begin{equation*}
    w_0(p) := d(s, t) \frac{\sum_{i} X(s, t)_{i, p}}{\cou(s, t)} = \sum_{i} X(s, t)_{i, p} \mathbb{I}\left[(s, t) \in \supp(d)\right]
\end{equation*}
(as the demand is special), and for $k = 1, \dots, m$, the next weights $w_{k}$ being defined based on $w_{k - 1}$ based on the current congestion of edge $e_k$:
\begin{itemize}
    \item Let $\text{cong}_k := \sum_{p \in \calP} w_{k - 1}(p) \ind{e_k \in p}$ be the congestion of edge $e_k$ before step $k$.
    \item If $\text{cong}_k \leq \gamma$, we let $w_k := w_{k - 1}$. Otherwise, we let $w_{k}(p) := w_{k - 1}(p) \ind{e_k \not\in p}$.
\end{itemize}
Note that in both cases, $w_k(p) \leq w_{k - 1}(p)$.

With the final weights, the subset $d'$ of the demand we'll route is simply the sum of the final weights, $d'(s, t) = \sum_{p \in P(s, t)} w_{m}(p)$, and the routing $R'$ we use distributes load to the sampled paths according to their weights: $\prob{R'(s,t) = p} = \frac{w_{m}(p)}{d'(s, t)}$ (for $(s, t) \in \supp(d')$, the routing can be arbitrary for other pairs). The following lemma, the simple proof of which is deferred to \Cref{sec:sublemmamainlemma}, reiterates the sufficient condition for the produced $d'$ and $R'$ to satisfy the requirements of \Cref{lem:samplingorder}.

% \begin{lemma} \label{lem:correctprocess}
\begin{restatable}{lemma}{correctprocesslemma} \label{lem:correctprocess}
Let $\Delta_k = \sum_{p \in \mathcal{P}} w_{k - 1}(p) - w_{k}(p)$. Then,
\begin{itemize}
    \item $d'(s, t) \leq d(s, t)$ for all $(s, t)$,
    \item $\obl(R', d') \leq \gamma$,
    \item $\siz(d') = D - \sum_k \Delta_k$,
\end{itemize}
thus, if $\sum_k \Delta_k \leq \frac{1}{2} D$, $d'$ and $R'$ satisfy the requirements of \Cref{lem:samplingorder}.
\end{restatable}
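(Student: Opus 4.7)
The statement is a purely deterministic bookkeeping claim about the process; there is nothing probabilistic to prove. I would establish the three bullet points one at a time and then observe that they exactly match the three requirements of \Cref{def:weakcompetitive} when $\sum_k \Delta_k \leq D/2$, with the competitiveness constant $C = \alpha + m^{16(h+7)\alpha^{-1}}$ read off from the definition of $\gamma$.

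\textbf{Step 1: $d'(s,t) \leq d(s,t)$.} By inspection of the update rule, $w_k(p) \leq w_{k-1}(p)$ at every step, so $w_m(p) \leq w_0(p)$ for every path $p$, and thus $d'(s,t) = \sum_{p \in P(s,t)} w_m(p) \leq \sum_{p \in P(s,t)} w_0(p)$. If $(s,t)\notin\supp(d)$ the indicator in the definition of $w_0$ kills the right-hand side, giving $0 = d(s,t)$. If $(s,t)\in\supp(d)$, then $\sum_{p \in P(s,t)} w_0(p) = \sum_i \sum_p X(s,t)_{i,p} = \cou(s,t) = \alpha + \cut(s,t) = d(s,t)$, where the last equality uses that $d$ is $\alpha$-special.

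\textbf{Step 2: $\obl(R',d') \leq \gamma$.} Expanding the definitions, for each edge $e$,
\begin{equation*}
\obl(R',d',e) = \sum_{s,t} d'(s,t)\prob{e \in R'(s,t)} = \sum_{s,t}\sum_{p \in P(s,t)} w_m(p)\ind{e \in p} = \sum_{p \in \mathcal{P}} w_m(p)\ind{e \in p},
\end{equation*}
where the first cancellation uses $\prob{R'(s,t)=p} = w_m(p)/d'(s,t)$ on the support (and the edge sum is independent of the choice on empty support). Fix $e = e_k$. By the step-$k$ rule, either $\text{cong}_k \leq \gamma$ and $w_k = w_{k-1}$, so $\sum_p w_k(p)\ind{e_k \in p} = \text{cong}_k \leq \gamma$, or $w_k(p) = w_{k-1}(p)\ind{e_k \notin p}$, and then the sum is exactly $0$. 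In either case $\sum_p w_k(p)\ind{e_k \in p} \leq \gamma$, and since $w_m(p) \leq w_k(p)$ for all $k\leq m$ by monotonicity, $\obl(R',d',e_k) \leq \gamma$, which is the claim.

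\textbf{Step 3: $\siz(d') = D - \sum_k \Delta_k$.} Summing definitions, $\siz(d') = \sum_{s,t} d'(s,t) = \sum_{p\in\mathcal{P}} w_m(p)$. On the initial side, $\sum_{p} w_0(p) = \sum_{(s,t)\in\supp(d)} \cou(s,t) = \sum_{(s,t)\in\supp(d)} d(s,t) = D$, again using that $d$ is $\alpha$-special. The definition of $\Delta_k$ telescopes: $\sum_{k=1}^{m}\Delta_k = \sum_p w_0(p) - \sum_p w_m(p) = D - \siz(d')$, as desired.

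\textbf{Step 4: Concluding weak-competitiveness.} Assume $\sum_k \Delta_k \leq D/2$. The first bullet is already shown. The third bullet combined with the assumption gives $\siz(d') = D - \sum_k \Delta_k \geq D/2 = \siz(d)/2$. The second bullet gives $\obl(R',d') \leq \gamma = \obl(R,d)\bigl(\alpha + m^{16(h+7)\alpha^{-1}}\bigr)$, matching the required factor in \Cref{lem:samplingorder}. These are exactly the three conditions of \Cref{def:weakcompetitive} with $D=\{d\}$ and $C = \alpha + m^{16(h+7)\alpha^{-1}}$, certifying that $\calP$ is $C$-weakly-competitive with $R$ on $\{d\}$. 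No step is genuinely difficult; the main care is in Step 2, where one has to be careful that the congestion bound proved for $w_k$ on edge $e_k$ carries through to $w_m$ via monotonicity rather than being disturbed by later steps.
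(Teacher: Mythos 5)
Your proof is correct and follows essentially the same route as the paper's: monotonicity $w_k(p)\leq w_{k-1}(p)$ for the first bullet, the per-edge case analysis at step $k$ (either $\text{cong}_k\leq\gamma$ or all crossing paths zeroed out) carried to $w_m$ by monotonicity for the second, and telescoping of $\Delta_k$ for the third. Your added details (the explicit use of the special-demand identity $\sum_p w_0(p)=d(s,t)$ and the explicit matching against \Cref{def:weakcompetitive}) are consistent with what the paper leaves implicit.
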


Thus, it only remains to bound the probability that $\sum_k \Delta_k > \frac{1}{2} D$. We do this by applying a union bound. As there are far too many possible runs of the process or even sequences of values $\delta_k$ to union bound over, we instead apply the union bound over a reduced class of simple objects, \textit{bad patterns}.

\begin{definition}[Bad Patterns]
A \textit{bad pattern} $(b)_k = (b_1, \dots, b_{m})$ is a $m$-tuple of nonnegative integers, such that $\frac{1}{4} D \leq \sum_k \gamma b_k \leq D$.
\end{definition}

% \begin{lemma} \label{lem:badpatternlemma}
\begin{restatable}{lemma}{badpatternlemma} \label{lem:badpatternlemma}
If $\sum_k \Delta_k > \frac{1}{2} D$, there exists a bad pattern $b$, such that $\Delta_k \geq \gamma b_k$ for all $k$.
\end{restatable}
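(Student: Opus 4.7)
The plan is to construct the bad pattern directly from the deletion amounts $\Delta_k$: set $b_k := \lfloor \Delta_k / \gamma \rfloor$, then verify all three required properties (integrality/nonnegativity, $\sum_k \gamma b_k \le D$, and $\sum_k \gamma b_k \ge D/4$) as well as the pointwise bound $\Delta_k \ge \gamma b_k$. The pointwise bound and integrality are immediate from the floor definition. For the upper bound, I would observe that the weights never increase, so $\sum_k \Delta_k = \sum_p (w_0(p) - w_m(p)) \le \sum_p w_0(p)$; using that $d$ is special so $d(s,t) = \cou(s,t)$ on $\supp(d)$, the initial total weight $\sum_p w_0(p) = \sum_{(s,t) \in \supp(d)} \sum_i \sum_p X(s,t)_{i,p} = \sum_{(s,t) \in \supp(d)} \cou(s,t) = D$. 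Therefore $\sum_k \gamma b_k \le \sum_k \Delta_k \le D$.

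The only substantive step is the lower bound $\sum_k \gamma b_k \ge D/4$. The key observation that makes this work is a \emph{dichotomy} on $\Delta_k$ forced by the process: at step $k$, either $\text{cong}_k \le \gamma$ and $\Delta_k = 0$, or $\text{cong}_k > \gamma$ and every path through $e_k$ is zeroed out, giving $\Delta_k = \text{cong}_k > \gamma$. Thus $\Delta_k \in \{0\} \cup (\gamma, \infty)$; crucially $\Delta_k$ can never lie in $(0, \gamma]$, which is what prevents floor truncation from losing too much. From this I would argue pointwise that $\gamma b_k \ge \Delta_k / 2$ for every $k$: if $\Delta_k = 0$ this is trivial; if $\gamma < \Delta_k < 2\gamma$ then $b_k = 1$ and $\gamma b_k = \gamma > \Delta_k/2$; and if $\Delta_k \ge 2\gamma$ then $\gamma b_k > \Delta_k - \gamma \ge \Delta_k / 2$. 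Summing and applying the hypothesis $\sum_k \Delta_k > D/2$ yields $\sum_k \gamma b_k \ge \frac{1}{2}\sum_k \Delta_k > D/4$, completing the verification that $(b_k)$ is a bad pattern.

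I do not anticipate a real obstacle here; the lemma is essentially a bookkeeping rounding argument. The only subtlety to be careful about is not confusing which direction the floor operation cuts — one might naively fear that rounding down could push $\sum_k \gamma b_k$ below $D/4$ when many $\Delta_k$ are just barely above $\gamma$, but the dichotomy ensures that even in the worst case $\Delta_k \in (\gamma, 2\gamma)$ we still capture at least half of $\Delta_k$ in $\gamma b_k = \gamma$. The whole proof should fit in a short paragraph once the dichotomy and the $w_0$-total-weight identity are stated.
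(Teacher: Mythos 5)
Your proposal is correct and matches the paper's argument essentially verbatim: the paper also sets $b_k = \lfloor \Delta_k/\gamma \rfloor$, uses the dichotomy that each $\Delta_k$ is either $0$ or exceeds $\gamma$ to get $\gamma b_k \ge \Delta_k/2$ and hence $\sum_k \gamma b_k \ge \tfrac14 D$, and bounds $\sum_k \gamma b_k \le \sum_k \Delta_k \le D$ (the paper citing the identity $\sum_k \Delta_k = D - \siz(d')$ from \Cref{lem:correctprocess}, you re-deriving the same fact from the initial total weight). No gaps.
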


\begin{proof}
Let $b_k = \left\lfloor \frac{\Delta_k}{\gamma} \right\rfloor$. Now,
\begin{itemize}
    \item $b_k$ are integers and $\Delta_k \geq \gamma b_k$ for all $k$,
    \item $\Delta_{k}$ is either $0$ or at least $\gamma$ for every $k$, thus $\gamma \left \lfloor \frac{\Delta_k}{\gamma} \right\rfloor \geq \frac{1}{2} \Delta_k$ and $\sum_k \gamma b_k \geq \frac{1}{2} \sum_k \Delta_k \geq \frac{1}{4} D$,
    \item $\sum_k \gamma b_k \leq \sum_k \Delta_k = D - \siz(d') \leq D$.
\end{itemize}
thus $(b)_k$ is a bad pattern with the desired property.
\end{proof}

\Cref{lem:badpatternlemma} shows that if there is no bad pattern $(b)_k$ such that $\Delta_k \geq \gamma b_k$ for all $k$, we have $\sum_k \Delta_k \leq \frac{1}{2} D$. With \Cref{lem:badcountlemma}, proven by a simple combinatorial computation in \Cref{sec:sublemmamainlemma}, we bound the number of bad patterns. Then, in \Cref{lem:strongproblem}, we prove that for a fixed bad pattern $(b)_i$, the probability that $\Delta_k \geq \gamma b_k$ for all $k$ is small. Afterward, a simple union bound combines these three results to upper bound the probability that $\sum_k \Delta_k > \frac{1}{2} D$, which by \Cref{lem:correctprocess} completes the proof of  \Cref{lem:samplingorder}.

% \begin{lemma} \label{lem:badcountlemma}
\begin{restatable}{lemma}{badcountlemma} \label{lem:badcountlemma}
There are at most $m^{6 D / \alpha}$ bad patterns.
\end{restatable}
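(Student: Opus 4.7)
The plan is to reduce the count to tallying small multisets on $[m]$ and then to verify the claimed bound by routine stars-and-bars or sequence-encoding estimates; the only substantive ingredient is a lower bound $\gamma \geq \alpha$. First I would observe that every bad pattern $(b_1,\ldots,b_m)$ satisfies $\sum_k b_k \leq D/\gamma$, directly from the upper part $\sum_k \gamma b_k \leq D$ of the definition. This reduces the problem to counting $m$-tuples of nonnegative integers with entries summing to at most $\lfloor D/\gamma\rfloor$.

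Next I would lower bound $\gamma$ by $\alpha$. Since $d$ is $\alpha$-special, for any $(s,t) \in \supp(d)$ we have $d(s,t) = \alpha + \cut(s,t) \geq \cut(s,t)$; moreover, every $(s,t)$-path crosses any minimum $(s,t)$-cut $C$, so $\sum_{e \in C} \prob{e \in R(s,t)} \geq 1$, and consequently some $e \in C$ satisfies $d(s,t)\cdot\prob{e \in R(s,t)} \geq d(s,t)/\cut(s,t) \geq 1$. Hence $\obl(R,d) \geq 1$, so $\gamma = \obl(R,d)\left(\alpha + m^{16(h+7)\alpha^{-1}}\right) \geq \alpha$, giving $\sum_k b_k \leq D/\alpha$.

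Finally, the number of bad patterns is at most the number of multisets of size at most $D/\alpha$ drawn from $[m]$, obtained by encoding $(b_1,\ldots,b_m)$ as the list with index $k$ repeated $b_k$ times. Since the count of size-$s$ multisets is dominated by $m^s$ (the number of ordered length-$s$ sequences over $[m]$), the total is at most $\sum_{s=0}^{\lfloor D/\alpha\rfloor} m^s \leq 2\, m^{D/\alpha}$ for $m \geq 2$. If $D = 0$ only the all-zero pattern is admissible (matching $m^0 = 1$); otherwise any nonzero entry of $d$ contributes $\alpha + \cut(s,t) > \alpha$ to $D$, so $D/\alpha > 1$ and $2\, m^{D/\alpha} \leq m^{2D/\alpha} \leq m^{6D/\alpha}$. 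The generous exponent ``$6$'' absorbs the factor-of-two slack from the geometric series with plenty of room, so the $\obl(R,d) \geq 1$ step is the only piece carrying real content and the max-flow--min-cut observation above is the natural way to supply it.
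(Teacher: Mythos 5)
Your proof is correct and takes essentially the same route as the paper: both arguments hinge on $\gamma \geq \alpha$ (via $\obl(R,d) \geq d(s,t)/\cut(s,t) \geq 1$ for the special demand), deduce $\sum_k b_k \leq D/\alpha$, and then count nonnegative integer $m$-tuples with bounded sum. The only difference is in the final counting estimate, where the paper bounds $\sum_{s}\binom{s+m-1}{s}$ using the crude bound $D \leq 2m^3$, whereas your multiset-to-sorted-sequence injection gives $\binom{s+m-1}{s} \leq m^s$ directly, which is marginally cleaner and needs no bound on $D$.
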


\begin{lemma}
\label{lem:strongproblem}
For a bad pattern $(b)_k$,
\begin{equation*}
    \prob{\bigcap_{k \in [|E|]} \Delta_k \geq \gamma b_k} \leq m^{-(h + 7) D / \alpha}.
\end{equation*}
\end{lemma}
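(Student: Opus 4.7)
The plan is to decouple the randomized deletion process from the sample randomness and reduce the bad-pattern event to a conjunction of tail events on \emph{disjoint} sums of the indicator variables $X(s,t)_{i,p}$, to which we can apply negative association and a large-deviation Chernoff bound. Throughout the proof the pattern $(b_k)$ and the edge ordering used by the process are fixed.

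First I would introduce a \emph{deterministic} assignment of sampled path copies to edges that depends only on the pattern, not on the random process: for each sampled copy with realized path $p$, look at the smallest index $j$ with $e_j\in p$ and $b_j>0$, and assign the copy to $e_j$ (copies with no such $j$ are unassigned). Let $N_k$ be the number of sampled copies assigned to $e_k$. The key lemma to prove is that whenever the bad pattern occurs, $N_k\geq \gamma b_k$ for every $k$ with $b_k>0$. This reduces to showing that every copy contributing to $\Delta_k$ is assigned to $e_k$: any copy of $p$ removed at step $k$ had $w_{k-1}(p)>0$, so $p$ was not removed earlier; but if some $j<k$ with $b_j>0$ had $e_j\in p$, then $\Delta_j\geq \gamma b_j>0$ would force step $j$ to be a removal step eliminating $p$, a contradiction. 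Since $d$ is special, $w_0(p)$ equals the multiplicity of $p$ in the sample, so $N_k\geq \Delta_k\geq \gamma b_k$.

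Next, since each $X(s,t)_{i,p}$ appears in at most one $N_k$ (by uniqueness of the first-edge rule), the $N_k$'s are sums of $X$-variables over disjoint index sets. Because the $X$'s are negatively associated (as noted right after they are defined), the tail events factor:
\begin{equation*}
\prob{\bigcap_{k:b_k>0}\{N_k\geq \gamma b_k\}}\ \leq\ \prod_{k:b_k>0}\prob{N_k\geq \gamma b_k}.
\end{equation*}
Then I would bound $\E[N_k]\leq \sum_{(s,t)\in\supp(d)}\cou(s,t)\cdot\prob{e_k\in R(s,t)} = \obl(R,d,e_k)\leq \obl(R,d)=:\mu$, where the middle equality uses $d(s,t)=\cou(s,t)$ on $\supp(d)$. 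Writing $M:=\alpha+m^{16(h+7)/\alpha}$, so that $\gamma=\mu M$ and $\gamma b_k\geq \mu$ whenever $b_k\geq 1$, the large-deviation Chernoff bound for negatively associated Bernoulli sums yields
\begin{equation*}
\prob{N_k\geq \gamma b_k}\ \leq\ \left(\frac{e\mu}{\gamma b_k}\right)^{\gamma b_k}\ =\ \left(\frac{e}{Mb_k}\right)^{\gamma b_k}.
\end{equation*}

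Finally, using $b_k\geq 1$ whenever $b_k>0$ and the bad-pattern lower bound $\sum_k \gamma b_k\geq D/4$, taking logarithms of the product gives
\begin{equation*}
\prod_{k:b_k>0}\prob{N_k\geq \gamma b_k}\ \leq\ \exp\!\left(-(\ln M-1)\sum_{k}\gamma b_k\right)\ \leq\ \exp\!\left(-(\ln M-1)D/4\right).
\end{equation*}
Requiring this to be at most $m^{-(h+7)D/\alpha}$ reduces to $\ln M\geq 1+4(h+7)\alpha^{-1}\ln m$, which is comfortably guaranteed by $M\geq m^{16(h+7)/\alpha}$. The main obstacle is the deterministic assignment in the second paragraph: engineering the right ``first edge with $b_\cdot>0$'' rule so that (i) the bad-pattern event forces the lower bound $N_k\geq \gamma b_k$ and (ii) the resulting $N_k$'s are sums over disjoint subsets of the $X$-variables, which is precisely what enables the factorization via negative association. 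Once that setup is in place, the remaining estimates are essentially mechanical.
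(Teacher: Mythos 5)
Your proposal is correct and follows essentially the same route as the paper's proof: your first-bad-edge assignment and counts $N_k$ are exactly the paper's sets $Z_k$ and variables $Y_k$, and the implication from the bad-pattern event, the factorization via negative association over disjoint index sets, the bound $\expec{N_k}\leq\obl(R,d)$, and the Chernoff-plus-$\sum_k\gamma b_k\geq \frac{1}{4}D$ conclusion all match. The only differences are cosmetic (you use the $(e\mu/t)^t$ Chernoff form instead of the paper's $\exp(-\tfrac14\delta\mu\ln\delta)$ lemma), though in your final check $\ln M\geq 1+4(h+7)\alpha^{-1}\ln m$ you should also invoke $M\geq\alpha$ in the regime $\alpha\gg(h+7)\ln m$, since there $16(h+7)\alpha^{-1}\ln m$ alone need not exceed $1+4(h+7)\alpha^{-1}\ln m$.
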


\begin{proof}
Fix the bad pattern $(b)_k$ and let $B = \{k \in [|E|] \mid b_k > 0\}$ be the set of overcongesting edges. We have \begin{equation*}
    \prob{\bigcap_{k \in [|E|]} \Delta_k \geq \gamma b_k} \leq \prob{\bigcap_{k \in B} \Delta_k \geq \gamma b_k}.
\end{equation*}

Let $Z_k \subseteq \bigcup_{s, t} \supp(R(s, t))$ be the set of paths that contain edge $e_k$ but no edge $e_{k'}$ for $k' \in B \cap [k - 1]$ and $Z_{k, s, t} = Z_k \cap \supp(R(s, t))$. Now, $Z_{k}$ is the set of paths that could possibly congest edge $e_k$ if the event occurs, as any path crossing an earlier edge has already been cut, as the event occurring implies those edges are overcongested. Let $Y_k$ be a random variable giving the number of sampled paths that cross edge $e_k$ but no earlier bad edge:
\begin{equation*}
    Y_k = \sum_{(s, t) \in \supp(d)} \sum_{p \in Z_{k, s, t}} \sum_i X(s, t)_{i, p}.
\end{equation*}
We show that $\bigcap_{k \in B} Y_k \geq \gamma b_k$ is implied by $\bigcap_{k \in B} \Delta_k \geq \gamma b_k$, and then upper bound the probability of the former event.

To see this, we assume the contrary: that there exists a sampling $x(s, t)_{i, p}$, such that with nonzero probability, $X(s, t)_{i, p} = x(s, t)_{i, p}$ holds for all $s, t, i, p$, and when $X(s, t)_{i, p} = x(s, t)_{i, p}$, $\Delta_{k} \geq \gamma b_{k}$ holds for all $k$, but there is some $k'$ such that $Y_{k'} < \gamma b_{k'}$. But then,

\begin{align*}
    \text{cong}_{k'}    &= \sum_{p \in \calP \mid e_{k'} \in p} w_{k' - 1}(p)\\
                        &\leq \sum_{p \in \calP \mid e_{k'} \in p} w_{0}(p) \prod_{k \in B \cap [k' - 1]} \ind{e_k \not\in p}\\
                        &= \sum_{p \in Z_{k'}} w_{0}(p),
\end{align*}
as for $k \in B$, $w_{k}(p) = w_{k - 1}(p) \ind{e_k \not\in p}$, as $\Delta_k > \gamma b_k > 0$ implies the edge $e_k$ is overcongested, thus all paths crossing it were cut during step $k$, and $w_{k}(p) \leq w_{k - 1}(p)$ holds for all $k$, thus also for $k \not\in B$. Finally,
\begin{align*}
    \sum_{p \in Z_{k'}} w_0(p)
    &= \sum_{(s, t)} \sum_{p \in Z_{k', s, t}} w_0(p)\\
    &= \sum_{(s, t)} \sum_{p \in Z_{k', s, t}} \frac{d(s, t)}{\cou(s, t)} \sum_i x(s, t)_{i, p}\\
    &= \sum_{(s, t) \in \supp(d)} \sum_{p \in Z_{k', s, t}} \sum_i x(s, t)_{i, p}\\
    &= Y_{k'}
\end{align*}
where we use that the demand is special. Now, we have a contradiction: $\obl_{k'} \leq Y_{k'}$, but also $Y_{k'} < \gamma b_{k'} \leq \Delta_k = \obl_{k'}$. Thus, it suffices to bound the probability of the event $\bigcap_{k \in B} Y_k \geq \gamma b_k$.

Next, notice that by definition, for $k \in B$, the sets $Z_k$ are disjoint. Thus, since the random variables $X(s, t)_{i, p}$ are negatively associated, by \Cref{lem:negativeassociationindep}, we have $\prob{\bigcap_{k \in B} Y_k \geq \gamma b_k} \leq \prod_{k \in B} \prob{Y_k \geq \gamma b_k}$.

We now bound the probability $\prob{Y_k \geq \gamma b_k}$. We have
\begin{align*}
    \mathbb{E}\left[Y_k\right]
            &= \mathbb{E}\left[\sum_{(s, t) \in \supp(d)} \sum_{p \in Z_{k, s, t}} \sum_i X(s, t)_{i, p}\right]\\
            &= \sum_{(s, t)} \sum_{p \in Z_{k, s, t}} \sum_i \ind{(s, t) \in \supp(d)} \mathbb{E}\left[X(s, t)_{i, p}\right]\\
            &= \sum_{(s, t)} \sum_{p \in Z_{k, s, t}} \sum_i \frac{d(s, t)}{\cou(s, t)} \prob{R(s,t) = p}\\
            &= \sum_{(s, t)} \sum_{p \in Z_{k, s, t}} d(s, t) \prob{R(s,t) = p}\\
            &\leq \sum_{(s, t)} \sum_{\substack{p \in \supp(R(s, t))\\e_k \in p}} d(s, t) \prob{R(s,t) = p}\\
            &\leq \obl(R, d).
\end{align*}

Let $\delta = \frac{\gamma}{\expec{Y_k}}$. By the above, $\delta \geq \alpha + m^{16(h + 7) \alpha^{-1}} \geq 2$. Thus, as $Y_k$ is a sum of disjoint negatively associated $\{0, 1\}$-random variables, by Chernoff for negatively associated random variables (\Cref{lem:firstchernoffnegativeassociation}),
\begin{equation*}
    \prob{Y_k \geq \gamma b_k}
    = \prob{Y_k \geq \delta b_k \expec{Y_k}}
    \leq \exp\left(\frac{-\expec{Y_k} \delta b_k \ln \left(\delta b_k\right)}{4}\right).
\end{equation*}
Since $\expec{Y_k} \delta = \gamma$ and $\delta \geq m^{16(h + 7) \alpha^{-1}}$, we have $\ln(\delta b_k) \geq \ln(\delta) \geq 16(h + 7) \alpha^{-1} \ln m$, thus,
\begin{equation*}
    \prob{Y_k \geq \gamma b_k}
    \leq \exp\left(-\gamma b_k \frac{16(h + 7)}{4 \alpha} \ln m\right)
    = m^{-\gamma b_k \frac{4(h + 7)}{\alpha}}.
\end{equation*}
Now, with an upper bound on $\prob{Y_k \geq \gamma b_k}$, we can upper bound the product of these terms. Since $b$ is a bad pattern, we have $\sum_{k \in [|E|]} \gamma b_i = \sum_{k \in B} \gamma b_i \geq \frac{1}{4} D$. Thus, as desired,
\begin{align*}
    \prob{\bigcap_{k \in [|E|]} \Delta_k \geq \gamma b_k}
    &\leq \prod_{k \in B} \prob{Y_k \geq \gamma b_k}\\
    &\leq m^{-\left(\sum_{k \in B} \gamma b_k\right) \frac{4(h + 7)}{\alpha}}\\
    &\leq m^{-\frac{(h + 7) D}{\alpha}}. \qedhere
\end{align*}
\end{proof}

With the proof of \Cref{lem:strongproblem} complete, we can now finish the proof of \Cref{lem:samplingorder}. Let $\mathcal{B}$ be the set of bad patterns, and let $E_b$ be the event that, for bad pattern $b \in B$, we have $\bigcap_{k \in [|E|]} \Delta_k \geq \gamma b_k$.

By \Cref{lem:badpatternlemma}, if $\sum_k \Delta_k > \frac{1}{2} D$, there exists a bad pattern $b$ such that $E_b$ happens. Thus, by a union bound,
\begin{equation*}
\prob{\sum_k \Delta_k \leq \frac{1}{2} D}
    \geq 1 - \prob{\bigcup_{b \in \mathcal{B}} E_b}
    \geq 1 - \sum_{b \in \mathcal{B}} \prob{E_b}.
\end{equation*}
Now, by \Cref{lem:badcountlemma} we have $|\mathcal{B}| \leq m^{4D / \alpha}$, and by \Cref{lem:strongproblem} we have $\prob{E_b} \leq m^{-(h + 7)D / \alpha}$. Thus,
\begin{equation*}
    \sum_{b \in \mathcal{B}} \prob{E_b}
    \leq \sum_{b \in \mathcal{B}} m^{-(h + 7)D / \alpha}
    \leq m^{4D / \alpha} m^{-(h + 7)D / \alpha}
    = m^{-(h + 3) D / \alpha}.
\end{equation*}
Thus, by \Cref{lem:correctprocess}, $\calP$ has the desired property with failure probability $m^{-(h+3)D/\alpha}$. Finally,
\begin{equation*}
D = \sum_{s, t} d(s, t) = \sum_{(s, t) \in \supp(d)} \cou(s, t) \geq \alpha |\supp(d)|,
\end{equation*}
thus $m^{-(h+3)D/\alpha} \leq m^{-(h+3)|\supp(d)|}$, as desired. This completes the proof of \Cref{lem:samplingorder}.

\subsection{Finishing the reduction}\label{sec:finishing-the-reduction}
In this section, we sketch the proofs of the union bound over special demands (\Cref{cor:weakspecialdemands}), the reduction from weak to to general routing (\Cref{lem:weaktostrong}), and the reduction from special demands to general demands (\Cref{lem:specialtogeneral}). The full proofs are left to \Cref{sec:fullproofsfinishingreduction}.

To prove the reductions, we need three simple lemmas. The first bounds congestion when routing the sum of two demands, the second gives trivial weak bounds on the congestion of any routing, and the third shows that competitiveness on polynomially-bounded demands implies competitiveness in general.

\begin{restatable}[demand-sum lemma]{lemma}{demandsumlemma}\label{lem:demandsumlemma}
Let $\calP$ be a semi-oblivious routing, $d_1, d_2$ be two demands and $d = d_1 + d_2$ be their sum, and $R_1, R_2$ be two routings on $\calP$. Then, there exists a routing $R$ on $\calP$, such that $\obl(R, d) \leq \obl(R_1, d_1) + \obl(R_2, d_2)$. If $R_1 = R_2$, $R = R_1$ satisfies the inequality.
\end{restatable}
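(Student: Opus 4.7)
The plan is to construct $R$ as the demand-weighted mixture of $R_1$ and $R_2$ on a per-pair basis. Concretely, for each pair $(s,t)$ with $d(s,t) > 0$, define the distribution $R(s,t)$ by
\begin{equation*}
  \Pr[R(s,t) = p] \;=\; \frac{d_1(s,t)\Pr[R_1(s,t)=p] + d_2(s,t)\Pr[R_2(s,t)=p]}{d_1(s,t) + d_2(s,t)},
\end{equation*}
and for pairs with $d(s,t) = 0$ let $R(s,t)$ be any distribution supported on $P(s,t)$. Since $R_1, R_2$ are supported on $\calP$, so is the mixture, hence $R$ is a valid routing on $\calP$.

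Next I will verify the congestion inequality edgewise. For any edge $e \in E$, using linearity and the definition of $R$:
\begin{align*}
  \obl(R, d, e)
  &= \sum_{s,t} d(s,t)\,\Pr[e \in R(s,t)] \\
  &= \sum_{s,t : d(s,t)>0} \bigl(d_1(s,t)\Pr[e \in R_1(s,t)] + d_2(s,t)\Pr[e \in R_2(s,t)]\bigr) \\
  &= \obl(R_1, d_1, e) + \obl(R_2, d_2, e) \\
  &\leq \obl(R_1, d_1) + \obl(R_2, d_2),
\end{align*}
where in the second line the weighted-sum telescoping cancels the $d_1+d_2$ denominator, and terms with $d_1(s,t)=d_2(s,t)=0$ contribute zero. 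Taking the maximum over $e$ on the left gives the desired bound $\obl(R,d) \leq \obl(R_1,d_1)+\obl(R_2,d_2)$.

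For the second clause, when $R_1 = R_2$, choosing $R := R_1$ makes the mixture collapse to $R_1$ on every pair, so $\Pr[e \in R(s,t)] = \Pr[e \in R_1(s,t)]$ and the same calculation yields $\obl(R,d,e) = \obl(R_1,d_1,e)+\obl(R_1,d_2,e)$, giving the inequality. There is essentially no obstacle here — the only mildly delicate point is defining $R(s,t)$ when $d(s,t)=0$, which is handled by picking an arbitrary routing supported on $\calP$ at those pairs (their contribution to congestion is zero regardless).
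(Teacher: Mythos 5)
Your proposal is correct and follows essentially the same route as the paper: the demand-weighted per-pair mixture of $R_1$ and $R_2$, followed by an edgewise computation showing $\obl(R,d,e) = \obl(R_1,d_1,e)+\obl(R_2,d_2,e)$ and bounding the maximum of the sum by the sum of the maxima. The only cosmetic difference is that the paper sets $R(s,t) := R_1(s,t)$ on pairs outside $\supp(d)$ (which also makes the $R_1=R_2$ clause immediate), whereas you allow an arbitrary choice there, which is equally fine.
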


\begin{proof}[Proof sketch (full proof in \Cref{sec:fullproofsfinishingreduction})]
Let $\prob{R(s,t) = p}$ be the linear combination of $\prob{R_1(s,t) = p}$ and $\prob{R_2(s,t) = p}$ weighted by $d_1(s, t)$ and $d_2(s, t)$. Then, every path has weight equal to the sum of its weights in the routings on $d_1$ with $R_1$ and $d_2$ with $R_2$.
\end{proof}

\begin{restatable}[bounded-congestion lemma]{lemma}{boundconglemma}\label{lem:boundcomplemma}
Let $G$ be a graph, $R$ be a routing and $d$ be a demand. Then,
\begin{equation*}
    \frac{\siz(d)}{|E|} \leq \obl(R, d) \leq \siz(d).
\end{equation*}
\end{restatable}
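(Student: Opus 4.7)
The plan is to prove the two inequalities separately, each by a direct manipulation of the definition of edge congestion, with no further machinery required.

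For the upper bound $\obl(R, d) \leq \siz(d)$, I would fix an arbitrary edge $e \in E$ and unpack the definition: $\obl(R, d, e) = \sum_{s, t} d(s, t) \prob{e \in R(s, t)}$. Since $\prob{e \in R(s, t)} \leq 1$ for every pair $(s, t)$, this sum is trivially bounded above by $\sum_{s, t} d(s, t) = \siz(d)$ (recalling that $d(v, v) = 0$, so the diagonal terms contribute nothing). Taking the maximum over $e$ preserves the inequality, yielding $\obl(R, d) \leq \siz(d)$.

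For the lower bound $\siz(d) / |E| \leq \obl(R, d)$, I would use an averaging argument. Summing $\obl(R, d, e)$ over all edges and swapping the order of summation gives
\begin{equation*}
  \sum_{e \in E} \obl(R, d, e)
  = \sum_{s, t} d(s, t) \sum_{e \in E} \prob{e \in R(s, t)}
  = \sum_{s, t} d(s, t) \, \expec{\hop(R(s, t))}.
\end{equation*}
For any $(s, t) \in \supp(d)$ we have $s \neq t$, so every path in $\supp(R(s, t))$ is a simple path between distinct vertices and therefore contains at least one edge; hence $\expec{\hop(R(s, t))} \geq 1$. Combining, $\sum_{e} \obl(R, d, e) \geq \sum_{s, t} d(s, t) = \siz(d)$, so the average edge congestion is at least $\siz(d)/|E|$, which is a lower bound on the maximum $\obl(R, d)$.

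The statement is essentially a sanity check, so I do not anticipate a serious obstacle; the only thing to watch is the convention that $d(v, v) = 0$ and that all paths in the support of $R(s, t)$ are simple with distinct endpoints, which is precisely why the expected hop-length is at least one on the support of the demand.
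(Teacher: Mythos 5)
Your proof is correct and matches the paper's argument: the upper bound follows because each edge's congestion is at most $\sum_{s,t} d(s,t)$, and the lower bound is the same averaging argument, using that every path in the support has at least one edge so the total congestion over edges is at least $\siz(d)$. No gap to report.
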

\begin{proof}[Proof sketch (full proof in \Cref{sec:fullproofsfinishingreduction})]
All paths have length at least $1$, thus the average congestion of edges in any routing is at least $\siz(d) / |E|$. Since we require paths to be simple, the congestion of any edge is at most $\siz(d)$.
\end{proof}

\begin{restatable}[poly-sufficiency lemma]{lemma}{capdemandlemma}\label{lem:capdemands}
Let $G$ be a $n$-vertex $m$-edge graph and $R$ be an oblivious routing. For real $r \geq 1$, let $D_{r}$ be the set of demands, where $d(s, t) \in \{0\} \cup [1, r]$. Let $\calP$ be a semi-oblivious routing on $G$ that is $C$-competitive with $R$ on $D_{n^2 m}$. Then, $\calP$ is $2C$-competitive with $R$. 
\end{restatable}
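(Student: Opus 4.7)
The plan is to split an arbitrary demand $d$ into a ``large'' part whose rescaling lies in $D_{n^2 m}$, and a ``tiny'' part whose total size is dominated by a trivial lower bound on $\obl(R, d)$; each piece is handled separately and combined via the demand-sum lemma. Assuming $d \not\equiv 0$, let $M = \max_{s,t} d(s,t)$ and set the threshold $c := M/(n^2 m)$. Define $d_{\mathrm{big}}(s,t) := d(s,t) \cdot \ind{d(s,t) \geq c}$ and $d_{\mathrm{tiny}} := d - d_{\mathrm{big}}$. Every nonzero entry of $d_{\mathrm{big}}/c$ lies in $[1, n^2 m]$, so $d_{\mathrm{big}}/c \in D_{n^2 m}$ and the hypothesis applies.

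For the large piece, I would invoke $C$-competitiveness on the rescaled demand $d_{\mathrm{big}}/c$ to obtain a routing $R_1$ on $\calP$ with $\obl(R_1, d_{\mathrm{big}}/c) \leq C \cdot \obl(R, d_{\mathrm{big}}/c)$. Scaling both sides by $c$ (using linearity of $\obl$ in the demand) together with pointwise monotonicity $d \geq d_{\mathrm{big}}$ yields $\obl(R_1, d_{\mathrm{big}}) \leq C \cdot \obl(R, d)$. For the tiny piece, I would pick any routing $R_2$ on $\calP$; since $d_{\mathrm{tiny}}$ has at most $n^2$ nonzero entries each strictly less than $c$, we get $\siz(d_{\mathrm{tiny}}) \leq M/m$, and the upper half of the bounded-congestion lemma (\Cref{lem:boundcomplemma}) gives $\obl(R_2, d_{\mathrm{tiny}}) \leq M/m$.

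Combining via the demand-sum lemma (\Cref{lem:demandsumlemma}) produces a routing $R'$ on $\calP$ with
\begin{equation*}
  \obl(R', d) \;\leq\; \obl(R_1, d_{\mathrm{big}}) + \obl(R_2, d_{\mathrm{tiny}}) \;\leq\; C \cdot \obl(R, d) + M/m.
\end{equation*}
To absorb the additive $M/m$, observe that $d_{\mathrm{big}}$ retains the max coordinate (since $M \geq c$), hence $\siz(d_{\mathrm{big}}) \geq M$, and the lower half of the bounded-congestion lemma gives $\obl(R, d) \geq \obl(R, d_{\mathrm{big}}) \geq M/m$. Therefore $\obl(R', d) \leq (C + 1) \cdot \obl(R, d) \leq 2C \cdot \obl(R, d)$ using $C \geq 1$, which is the claimed bound.

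The main subtle point — and essentially the only design choice in the proof — is calibrating $c$. It must be small enough that $M/c \leq n^2 m$ so the rescaled large piece stays within $D_{n^2 m}$, but large enough that $\siz(d_{\mathrm{tiny}}) \leq n^2 c$ is dominated by the intrinsic lower bound $M/m$ coming from applying \Cref{lem:boundcomplemma} to $d_{\mathrm{big}}$. The threshold $c = M/(n^2 m)$ is exactly the value at which these two constraints simultaneously tighten, which is why the polynomial $n^2 m$ appears in the statement.
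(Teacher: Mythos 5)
Your proof is correct and follows essentially the same route as the paper: split the demand at a threshold so the large part (after rescaling) lies in $D_{n^2 m}$ and the tiny part has total size below the trivial congestion lower bound from \Cref{lem:boundcomplemma}, then combine via \Cref{lem:demandsumlemma} and absorb the additive term using $C \geq 1$. The only cosmetic difference is the normalization: the paper rescales $d$ so that $\siz(d) = n^2 m$ and cuts at $1$, whereas you keep $d$ fixed and cut at $M/(n^2 m)$ relative to the maximum entry.
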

\begin{proof}[Proof sketch (full proof in \Cref{sec:fullproofsfinishingreduction})]
Take any demand, scale it down, and break it into two parts $d_1$ and $d_2$, such that $d_1(s, t) \leq 1$ and $\siz(d_2) = n^2 m$. Any routing has low congestion on $d_1$, and by $C$-competitiveness on $D_{n^2 m}$ there is a routing on $\calP$ with low congestion on $d_2$. Use \Cref{lem:demandsumlemma} and the linearity of congestion to get a good routing on $\calP$ for the original demand. 
\end{proof}

The proof of \Cref{cor:weakspecialdemands} is achieved by a simple union bound over special demands.

\weakspecialdemandCorollary*

\begin{proof}[Proof sketch (full proof in \Cref{sec:fullproofsfinishingreduction})]
There are at most $n^{2k}$ special demands $d$ with $|\supp(d)| = k$. We take a union bound over all special demands, using \Cref{lem:samplingorder} to bound the probability we aren't sufficiently competitive on individual demands. We get a bound of $\sum_{k \geq 1} m^{-(h + 1)k}$ on the probability that a special demand we aren't competitive enough on exists, but this is at most $m^{-h}$, as desired.
\end{proof}

The reduction from weak routing to general routing is in principle proven by simply repeatedly routing half of the demand, but care needs to be taken as at this point we have not proven weak routing for arbitrary demands. To work with the limited set of demands available, we route in full the demand between vertex pairs between which the weak routing routes at least a fourth of the demand, and route no demand between other vertex pairs. This routes at least a third of the original demand and leaves a subdemand that at every vertex pair either equals the original demand or is zero. 

\weaktostrongLemma*

\begin{proof}[Proof sketch (full proof in \Cref{sec:fullproofsfinishingreduction})]
If for every demand $d \in D$, the set $D$ contained every demand $d' \leq d$, we could repeatedly use the weak-competitiveness property of $\calP$ to get a routing for a part of the current demand, until the remaining demand has size at most $\siz(d) / m$. The remaining part can be routed arbitrarily, and \Cref{lem:demandsumlemma} can be used to combine the $\bigO{\log m}$ different routings into a routing for $d$ with $\bigO{C \log m}$-competitive congestion. To get around the weaker condition on $D$, when we apply weak-competitiveness to a demand $d$ to obtain $d'$, we let $d''(s, t) = \ind{d'(s, t) \geq \frac{1}{4} d(s, t)} d(s, t)$. Now, $d - d''$ is in $D$, $\siz(d'') \geq \frac{1}{3} \siz(d)$, and $d'' \leq 4d'$, thus the oblivious routing provided by weak-competitiveness routes $d''$ competitively.
\end{proof}

Finally, the reduction from special demands to general demands is done by bucketing pairs of vertices according to the ratio of demand to the number of sampled paths between them. Every bucket allows a range between two powers of two, thus we only need a logarithmic number of buckets in total, and every bucket can be routed as if the ratio between every vertex pair was equal.

\specialtogeneralLemma*

\begin{proof}[Proof sketch (full proof in \Cref{sec:fullproofsfinishingreduction})]
By \Cref{lem:capdemands} it is sufficient to show competitiveness on polynomially bounded demands. Take one such demand, and split it into $\bigO{\log m}$ parts, such that in every part, the values $\frac{d(s, t)}{\alpha + \cut(s, t)}$ are within a factor of $2$ of each other. Then, for each of these parts, take a larger demand where all of these fractions are equal. Scale those down, and use competitiveness on special demands to show competitiveness on the demand. Finally, combine the routings for the parts using \Cref{lem:demandsumlemma}.
\end{proof}

\section{Integral Semi-Oblivious Routing}\label{sec:integral-oblivious}

In this section, we adapt the results of section \Cref{sec:sparse-semi-oblivious-routing} to the case where we consider integral routings and \zeroonedemands. In this case, the additive $\cut$-term in the sparsity and one logarithmic multiplier in the competitiveness can be avoided.

The definition of an integral semi-oblivious routing $\mathcal{P}$ is like that of a real-valued semi-oblivious routing, except that we define the congestion as the minimum congestion of an integral routing on $\mathcal{P}$, and define competitiveness as a comparison against optimal integral routings.

\begin{definition}[Integral semi-oblivious routing]

An \textit{integral semi-oblivious routing} is a path system $\calP$. For a demand $d$, we define the congestion of integrally routing $d$ via $\calP$:
\begin{equation*}
    \oblint(\calP, d) = \min_{\substack{\text{$R$ a routing on $\calP$}\\\text{$R$ integral on $d$}}} \obl(R, d).
\end{equation*}
We say $\calP$ is \textit{$C$-competitive on a set of demands $D$} for $C \geq 1$ if for every demand $d \in D$ we have $\oblint(\calP, d) \leq C \optint(d)$ and that $\calP$ is \textit{$C$-competitive} for $C \geq 1$ if it is $C$-competitive on the set of all integral demands.
\end{definition}

To prove results for integral semi-oblivious routings, we need the $\alpha$-sparse version of the $(\alpha + \cut)$-sample theorem (\Cref{thm:mainthm}). The statement of this version involves additive terms, which disappear when switching to integral semi-oblivious routings, as the congestion of an integral optimal routing is at least one.

\begin{restatable}[$\alpha$-sample corollary]{corollary}{actualmaincorollary}\label{cor:actualmaincorollary}
Let $G = (V, E)$ be a $n$-vertex graph with polynomially many edges, $R$ be an oblivious routing and $\alpha \in [n]$ be a parameter. Let $\calP$ be an $\alpha$-sample of $R$. Then, with high probability, for every demand $d$,
\begin{equation*}
    \oblreal(\calP, d) \leq \bigO{\log^2(n) \left(\alpha + n^{\bigO{\alpha^{-1}}}\right)} \left(\obl(R, d) + \max_{s, t} d(s, t)\right),
\end{equation*}
and for every \zeroonedemand $d$,
\begin{equation*}
    \oblreal(\calP, d) \leq \bigO{\log(n) \left(\alpha + n^{\bigO{\alpha^{-1}}}\right)} \left(\obl(R, d) + 1\right).
\end{equation*}
\end{restatable}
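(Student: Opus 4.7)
The strategy is to piggyback on \Cref{thm:mainthm} by extending the $\alpha$-sample $\calP$ into an $(\alpha+\cut)$-sample and handling the resulting overhead. Independently sample an additional $\cut(s,t)$ paths from each $R(s,t)$ and let $\calP'\supseteq\calP$ denote the union, which is an $(\alpha+\cut)$-sample. By \Cref{thm:mainthm}, with high probability $\calP'$ is $C:=\bigO{\log^2(n)(\alpha+n^{\bigO{\alpha^{-1}}})}$-competitive with $R$ on all demands, and $C':=\bigO{\log(n)(\alpha+n^{\bigO{\alpha^{-1}}})}$-competitive with $R$ on $\{0,1\}$-demands supported on $\{(s,t):\cut(s,t)=1\}$; I condition on this event.

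Fix a demand $d$ with $M:=\max_{s,t}d(s,t)$ and let $R'$ be the routing on $\calP'$ from the theorem, so $\obl(R',d)\le C\,\obl(R,d)$. Convert $R'$ into a routing $R''$ on $\calP$ by, for each pair $(s,t)$, keeping the weight $R'(s,t)(p)$ on each $p\in P(s,t)$ and uniformly reassigning the leftover mass $W_{ex}(s,t):=\sum_{p\in P_{ex}(s,t)}R'(s,t)(p)\in[0,1]$ across $P(s,t)$. The congestion of $R''$ on an edge $e$ splits additively into (i) a ``retained'' part, bounded by $\obl(R',d)\le C\,\obl(R,d)$, plus (ii) a ``redistributed'' part equal to the congestion on $e$ of the uniform $\calP$-routing of the residual demand $d_{ex}(s,t):=d(s,t)W_{ex}(s,t)\le M$.

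The main obstacle is to bound (ii) with high probability uniformly over $d$. Because $d_{ex}\le M\cdot\chi_{\supp(d)}$, where $\chi_{\supp(d)}(s,t):=\ind{(s,t)\in\supp(d)}$ is a $\{0,1\}$-demand, it suffices to show that the uniform $\calP$-routing of any $\{0,1\}$-demand $\chi$ achieves congestion at most $\bigO{C'\,(\obl(R,\chi)+1)}$ with high probability. I establish this by re-running the bad-pattern analysis of \Cref{lem:samplingorder} with $\alpha$ replacing $\alpha+\cut(s,t)$ throughout and with the notion of ``special demand'' redefined as $d(s,t)\in\{0,\alpha\}$, so that each sampled path again carries weight~$1$; the Chernoff/negative-association steps, the bad-pattern counting, and the weak-to-strong and special-to-general reductions of \Cref{sec:finishing-the-reduction} apply verbatim. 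The additive $+1$ arises because, on pairs where $\cut(s,t)>\alpha$, the fractional optimum $\obl(R,d)$ can drop below $1/\alpha$ while the $\alpha$-sample cannot achieve congestion smaller than $1/\alpha$; this is the one place where the analysis genuinely differs from that of \Cref{thm:mainthm}. Combining (i) and (ii) yields the first bound of the corollary, with the $M$ in front of (ii) furnishing the additive $\max_{s,t}d(s,t)$ term. For a $\{0,1\}$-demand $d$ one has $M=1$, and the residual argument alone already yields the sharper second bound with the improved constant $C'$.
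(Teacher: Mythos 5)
Your argument has two genuine gaps, and the central one is in step (ii). Once the sample is drawn, your ``uniform $\calP$-routing'' is a fixed, demand-independent, $\alpha$-sparse \emph{oblivious} routing, and the sufficient claim you reduce to --- that it has congestion $\bigO{C'\left(\obl(R,\chi)+1\right)}$ simultaneously for \emph{every} \zeroonedemand $\chi$ --- is false in general: on hypercube permutation demands this would be a $\poly\log n$-competitive $\alpha$-sparse oblivious routing, contradicting the $\tilde{\Omega}(\sqrt{n}/\alpha)$ barrier the paper itself invokes, and exactly this impossibility is why semi-obliviousness is needed at all. Re-running the bad-pattern analysis of \Cref{lem:samplingorder} cannot establish it either, because that machinery only produces an \emph{adaptive} routing (it deletes paths through overcongested edges, routes a surviving half of the demand, and recurses via \Cref{lem:weaktostrong}); it says nothing about the fixed uniform redistribution rule. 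Even if you repair this by redistributing the leftover mass adaptively (semi-obliviously routing the residual $d_{ex}$ on $\calP$), the bound you obtain through $d_{ex}\leq M\cdot\chi_{\supp(d)}$ is $M\cdot\bigO{C'}\left(\obl(R,\chi_{\supp(d)})+1\right)$, and $M\,\obl(R,\chi_{\supp(d)})$ is a \emph{product} of two quantities that can both be large; it is not $\bigO{\obl(R,d)+M}$ (for non-integral $d$ one does not even have $\obl(R,\chi_{\supp(d)})\leq\obl(R,d)$), so the first inequality of the corollary does not follow from your decomposition.

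The claim that \Cref{lem:samplingorder} and the reductions of \Cref{sec:finishing-the-reduction} apply ``verbatim'' with $\cou(s,t)=\alpha$ and special demands $d(s,t)\in\{0,\alpha\}$ also hides the real difficulty you would have to solve. The bad-pattern count (\Cref{lem:badcountlemma}) uses $\gamma\geq\alpha\,\obl(R,d)\geq\alpha$, which rests on $\obl(R,d)\geq d(s,t)/\cut(s,t)\geq 1$ for special demands --- precisely the inequality that fails when $\cut(s,t)>\alpha$, i.e.\ in the regime where your additive $+1$ is supposed to help --- and \Cref{lem:weaktostrong}, \Cref{lem:specialtogeneral} and \Cref{lem:capdemands} all exploit linearity of $\obl(R,\cdot)$ under scaling, which an additive constant in the target bound breaks. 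The paper's proof avoids reproving anything: it builds an auxiliary graph $G_2$ with two pendant vertices $v_{s,t,1},v_{s,t,2}$ attached to $s$ and $t$ for every pair, so that every relevant min-cut is $1$, an $(\alpha-1+\cut)$-sample of the lifted routing $R_2$ is exactly an $\alpha$-sample of $R$, and the pendant edges force $\obl(R_2,d_2)=\max\left(\obl(R,d),\max_{s,t}d(s,t)\right)$, so the additive terms are absorbed multiplicatively and \Cref{thm:mainthm} (including its second clause for demands supported on cut-$1$ pairs, which gives the sharper $\{0,1\}$ bound) applies as a black box. To salvage your route you would have to genuinely redo the main lemma and all reductions with additive error terms; that missing work is the content of the statement.
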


\begin{proof}[Proof sketch (full proof in \Cref{sec:alphasampleroundinglemma})]
We may assume that $\alpha \geq 2$, as no guarantees for $\alpha = 1$ are made. Create an auxiliary graph $G_2$ with $n + 2n^2$ vertices: the $n$ of the original graph, and $2$ auxiliary vertices for every vertex pair. Connect the vertices for pair $(s, t)$ to $s$ and $t$ respectively with a single edge. Now, the min-cut between any two auxiliary vertices is $1$.

Take an extension $R_2$ of $R$ to an oblivious routing on $G_2$ by routing between the auxiliary vertices for $(s, t)$ by using $R$ to route between $s$ and $t$ and prepending and appending the two bridges, then apply Theorem \ref{thm:mainthm} to $R_2$, $G_2$ and $\alpha - 1$ to obtain a $(\alpha - 1 + \cut)$-sample $\calP_2$ of $R_2$. Map this sample into a semi-oblivious routing $\calP$ on $G$ by having the paths in $\calP$ between $s$ and $t$ correspond to those between the auxiliary vertices of $s$ and $t$ in $\calP_2$.
\end{proof}

Theorems \ref{thm:mainthm} and \ref{cor:actualmaincorollary} prove results for semi-oblivious routings, not integral semi-oblivious routings. To prove the claimed corollaries, we use the following lemma that shows that, for any routing $R$, for an integral demand $d$, there exists a routing $R'$ on $\supp(R)$ that is integral on $d$ and has congestion only a constant multiplicative factor and a logarithmic additive factor higher than the oblivious routing $R$.

\begin{restatable}[Rounding lemma]{lemma}{roundlemma}\label{lem:roundlemma}

    Let $G$ be a $m$-edge graph, $R$ a routing and $d$ a demand. Then, there exists a routing $R'$ on $\supp(R)$ that is integral on $d$, such that
    \begin{equation*}
        \obl(R', d) \leq 2 \obl(R, d) + 3 \ln m.
    \end{equation*}
\end{restatable}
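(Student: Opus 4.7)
The plan is a standard randomized rounding argument: for each pair $(s, t)$ with $d(s, t) > 0$ (we may assume $d$ is integral, else no routing is integral on $d$), independently draw $d(s, t)$ i.i.d.\ samples from $R(s, t)$, and let $R'(s, t)$ place mass $k_p / d(s, t)$ on each path $p$, where $k_p$ is the number of times $p$ was drawn. Then $R'$ is supported on $\supp(R)$ and integral on $d$ by construction, so it remains to show that with positive probability, $\obl(R', d, e) \leq 2 \obl(R, d) + 3 \ln m$ holds simultaneously for every edge $e$.

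Let $\mu := \obl(R, d)$. Fix an edge $e$. Then $\obl(R', d, e)$ is a sum of independent Bernoulli random variables---one per sampled path, indicating whether the path traverses $e$---with total mean
\begin{equation*}
    \sum_{s, t} d(s, t) \prob{e \in R(s, t)} = \obl(R, d, e) \leq \mu.
\end{equation*}
The heart of the argument is a Chernoff-type tail bound: applying the standard moment-generating-function estimate with parameter $s = \ln 2$ gives
\begin{equation*}
    \prob{\obl(R', d, e) \geq t} \leq e^{\mu(e^{\ln 2} - 1) - t \ln 2} = e^{\mu - t \ln 2},
\end{equation*}
and substituting $t = 2 \mu + 3 \ln m$ together with $1 - 2 \ln 2 < 0$ yields
\begin{equation*}
    \prob{\obl(R', d, e) > 2 \mu + 3 \ln m} \leq e^{\mu (1 - 2 \ln 2)} \cdot m^{-3 \ln 2} \leq m^{-3 \ln 2} < m^{-2}.
\end{equation*}

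Finally, I would union-bound over the (at most $m$) edges: the probability that some edge violates the target congestion is at most $m \cdot m^{-2} = 1/m < 1$, so an outcome satisfying the claimed bound on every edge must exist, establishing the lemma. I anticipate no real obstacle, since this is essentially textbook randomized rounding. The only subtlety is calibrating the Chernoff parameter so that the additive $3 \ln m$ term delivers polynomially-small-in-$m$ tail probability, compensating for the $m$-edge union bound; the fixed choice $s = \ln 2$ works because the factor of $2$ in front of $\mu$ produces a strictly negative coefficient on $\mu$ in the exponent, eliminating the $\mu$-dependence entirely.
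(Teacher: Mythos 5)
Your proposal is correct and follows essentially the same route as the paper: sample $d(s,t)$ paths from $R(s,t)$, set the weights proportional to sample counts, bound each edge's overflow probability by a Chernoff-type estimate, and union-bound over the $m$ edges to conclude existence via the probabilistic method. The only cosmetic difference is that you aggregate to one independent Bernoulli indicator per sampled path and run the moment-generating-function calculation directly with parameter $\ln 2$, whereas the paper keeps per-(sample, path) indicators, which are only negatively associated, and invokes its Chernoff bound for negatively associated variables with $\delta_e = 1 + 3\ln m / \mathbb{E}[Y_e]$; both yield the same $2\,\obl(R,d) + 3\ln m$ guarantee.
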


\begin{proof}[Proof sketch (full proof in \Cref{sec:alphasampleroundinglemma})]
Sample $d(s, t)$ paths from $R(s, t)$ and select $R'$ to assign the weight of a path to be equal to the number of times it was sampled. This $R'$ is on $\supp(R)$ and is integral on $d$. Take a union bound over edges, bounding the probability an individual edge over-congests with a Chernoff bound. With nonzero probability, $\obl(R', d)$ satisfies the desired bound, thus a $R'$ on $\supp(R)$ that is integral on $d$ exists.
\end{proof}

The following corollary of Lemma \ref{lem:roundlemma} allows conveniently proving results in the integral domain.
\begin{corollary}\label{cor:integraliseasy}
  Let $G$ be a $m$-edge graph and $\calP$ a path system on $G$. Then, for every integral demand $d$,
\begin{equation*}
    \oblint(\calP, d) \leq 2 \oblreal(\calP, d) + 3 \ln m
\end{equation*}
\end{corollary}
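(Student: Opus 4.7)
The plan is to simply invoke the Rounding Lemma (\Cref{lem:roundlemma}) on the fractional routing that witnesses $\oblreal(\calP, d)$. Specifically, let $R^*$ be a routing on $\calP$ attaining the minimum in the definition of $\oblreal(\calP, d)$, so that $\obl(R^*, d) = \oblreal(\calP, d)$. Since $\calP$ is a path system (not a distribution), ``routing on $\calP$'' means $\supp(R^*(s,t)) \subseteq P(s,t)$ for every pair $(s,t)$, i.e.\ $\supp(R^*)$ is a sub-path-system of $\calP$.

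Next, I would apply \Cref{lem:roundlemma} with the routing $R^*$ and the integral demand $d$ to obtain a routing $R'$ supported on $\supp(R^*)$ that is integral on $d$ and satisfies
\begin{equation*}
    \obl(R', d) \leq 2 \obl(R^*, d) + 3 \ln m = 2\,\oblreal(\calP, d) + 3 \ln m.
\end{equation*}
Because $\supp(R') \subseteq \supp(R^*) \subseteq \calP$, the routing $R'$ is itself a routing on $\calP$, and it is integral on $d$. Hence $R'$ is a valid candidate in the minimum defining $\oblint(\calP, d)$, which yields $\oblint(\calP, d) \leq \obl(R', d)$, and chaining the two inequalities gives exactly the claim.

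There is essentially no obstacle here; the only thing to be careful about is verifying that the rounded routing $R'$ remains supported on $\calP$ (which follows from the Rounding Lemma's guarantee that $R'$ is on $\supp(R^*)$) and that it is indeed integral on the integral demand $d$ (also guaranteed by the Rounding Lemma). The corollary is therefore a one-line application of \Cref{lem:roundlemma} to the optimal fractional routing on $\calP$.
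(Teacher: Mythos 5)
Your proposal is correct and coincides with the paper's own argument: both take the optimal fractional routing $R$ on $\calP$ witnessing $\oblreal(\calP, d)$, apply \Cref{lem:roundlemma} to obtain an integral routing $R'$ on $\supp(R) \subseteq \calP$, and conclude $\oblint(\calP, d) \leq \obl(R', d) \leq 2\,\oblreal(\calP, d) + 3 \ln m$. Nothing is missing.
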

\begin{proof}
Let $R$ be a routing on $\calP$ such that $\oblreal(\calP, d) = \obl(R, d)$. By Lemma \ref{lem:roundlemma}, there exists a routing $R'$ on $\supp(R)$ (thus on $\calP$) that is integral on $d$ such that $\obl(R', d) \leq 2 \obl(R, d) + 3 \ln m$. But now,
\begin{equation*}
    \oblint(\calP, d) \leq \obl(R', d) \leq 2 \obl(R, d) + 3 \ln m = 2 \oblreal(\calP, d) + 3 \ln m.
\end{equation*}
\end{proof}

Using \Cref{thm:mainthm}, \Cref{cor:actualmaincorollary} and \Cref{cor:integraliseasy} on the oblivious routing of \cite{Racke08}, we can prove the statements stated in \Cref{sec:formal-results}:

\begin{theorem}[\cite{Racke08}]\label{thm:logobliviousrouting}
Every $n$-vertex graph has a $\bigO{\log n}$-competitive oblivious routing.
\end{theorem}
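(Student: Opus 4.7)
The plan is to reduce the construction of a competitive oblivious routing to a cut-approximating tree embedding, and then to build such an embedding by a minimax argument.

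First, I would reduce to trees. An oblivious routing on a tree is trivial: between any two vertices there is a unique simple path, which must be the route. To obtain an oblivious routing on $G$, fix a distribution $\mu$ over pairs $(T, \phi)$, where $T$ is a capacitated tree on $V$ and $\phi$ assigns to each tree edge a fixed multicommodity flow in $G$ whose congestion per unit of tree capacity is $O(1)$. The routing sends a commodity from $s$ to $t$ on a random sample $(T,\phi)$ by taking the unique $s$-$t$ tree path and translating each tree edge into its $G$-flow under $\phi$. The resulting congestion is controlled by how well $T$ simulates cuts of $G$, since by LP duality between max-concurrent-flow and min-ratio-cut, the worst-case competitive ratio equals the worst cut distortion of the embedding.

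Second, the technical core is to construct a distribution $\mu$ whose expected cut distortion is $O(\log n)$: for every cut $(S,\bar S)$ of $G$ we want $c_G(S) \le \E_\mu[c_T(S)] \le O(\log n)\cdot c_G(S)$. I would frame this as a two-player zero-sum game in which the designer chooses a distribution over laminar ``hierarchical decomposition'' trees and the adversary chooses a demand (equivalently, by LP-duality, a weighting of cuts). By the minimax theorem, producing a distribution that is $O(\log n)$-competitive against every demand reduces to producing, for each fixed demand $d$, a single tree whose induced routing has congestion $O(\log n)\cdot \opt(d)$. A standard multiplicative-weights update on the adversary's demand distribution then combines these per-demand trees into the required $\mu$ with only a constant blowup.

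Third, I would construct the per-demand tree by recursively applying a balanced sparsest-cut decomposition to the graph reweighted by $d$: the Leighton-Rao theorem supplies a cut whose sparsity is within $O(\log n)$ of the optimal concurrent flow, and iterating produces a laminar decomposition whose associated tree embeds the cut structure with distortion $O(\log n)$ on demand $d$. Assigning each tree edge the natural ``boundary flow'' in $G$ gives the map $\phi$ with $O(1)$ local congestion. Combined with the minimax step, this yields the claimed $O(\log n)$-competitive oblivious routing. \textbf{Main obstacle.} The delicate point is getting the single logarithm rather than a polylog: carrying out the recursion so that distortions compose to $O(\log n)$ overall (and not $O(\log^2 n)$) requires the decomposition to charge cuts at different levels against disjoint flow ``witnesses,'' which is the hardest technical step and essentially the content of Räcke's refinement over the earlier $O(\log^3 n)$ and $O(\log^2 n \log\log n)$ constructions.
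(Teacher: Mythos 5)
The paper does not prove this statement at all: it is imported wholesale as a citation to R\"acke's 2008 result, so there is no internal proof to compare against. Judged on its own terms, your sketch gets the standard outer framework right (route on a random decomposition tree, map tree edges to flows in $G$, observe that on a tree the routing is unique so congestion is governed by cuts, and obtain the distribution over trees from a minimax/multiplicative-weights argument that reduces to producing one good tree per fixed adversarial weighting). But the heart of the theorem is exactly the step you defer to the ``main obstacle,'' and the route you propose for it does not work as stated. Recursively applying a Leighton--Rao $\bigO{\log n}$-approximate sparsest/balanced cut to the demand-reweighted graph loses a logarithmic factor per level over $\Theta(\log n)$ levels of recursion; this is precisely the approach of R\"acke's 2002 construction and of Harrelson--Hildrum--Rao, and it is known to yield $\bigO{\log^3 n}$ and $\bigO{\log^2 n \log\log n}$ respectively, not $\bigO{\log n}$. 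The phrase ``charge cuts at different levels against disjoint flow witnesses'' is not a substitute for an argument, and it is not the mechanism by which the single logarithm is actually achieved.

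The actual $\bigO{\log n}$ bound in \cite{Racke08} comes from a different source: a transfer theorem showing that \emph{distance-based} tree embeddings with expected stretch $\bigO{\log n}$ (FRT) can be converted, via an LP-duality/multiplicative-weights argument over edge loads (not over demands), into a distribution of decomposition trees with \emph{congestion} distortion $\bigO{\log n}$. In other words, the per-instance object is an FRT tree for a suitable length function derived from the current edge loads, and the single log is inherited from FRT rather than extracted from a refined cut-decomposition recursion. Two smaller points: your appeal to max-concurrent-flow/min-ratio-cut duality to equate competitiveness with cut distortion is unnecessary and, taken literally, would cost an extra $\Theta(\log n)$ flow-cut gap --- the correct argument uses that tree routings are unique, so tree congestion is exactly a cut ratio; and the adversary in the minimax step should be choosing edge weightings/loads rather than demands. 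As it stands, the proposal would at best re-derive a polylogarithmic bound, so the claimed $\bigO{\log n}$ is not established.
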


With this we recover the result for the logaritmic and low sparsity cases.

\basiccorollary*

\begin{proof}[Proof sketch (full proof in \Cref{sec:fullproofsformalresults})]
Apply \Cref{cor:actualmaincorollary} with $\alpha = \bigO{\frac{\log n}{\log \log n}}$ and the $\bigO{\log n}$-competitive oblivious routing $R$ from \cite{Racke08}, then use \Cref{cor:integraliseasy} to make the semi-oblivious routing integral.
\end{proof}

\basicalphacorollary*

\begin{proof}[Proof sketch (full proof in \Cref{sec:fullproofsformalresults})]
Apply Theorem \ref{cor:actualmaincorollary} with $\alpha$ and the $\bigO{\log n}$-competitive oblivious routing $R$ from \cite{Racke08}, then use \Cref{cor:integraliseasy} to make the semi-oblivious routing integral. The $n^{\bigO{\alpha^{-1}}}$-term covers all logarithmic terms.
\end{proof}

Finally, we recover the result on general integral demands.

\cutcorollary*

\begin{proof}[Proof sketch (full proof in \Cref{sec:fullproofsformalresults})]
Apply \Cref{thm:mainthm} with $\alpha = \bigO{\frac{\log n}{\log \log n}}$ and the $\bigO{\log n}$-competitive oblivious routing $R$ from \cite{Racke08}, then use \Cref{cor:integraliseasy} to make the semi-oblivious routing integral.
\end{proof}

\section{Semi-Oblivious Routing for Completion Time}\label{sec:hopconstrainedsemiobl}

For the results on competitiveness in congestion and dilation, we need to apply our theorem to a \textit{hop-constrained oblivious routing}:

\textbf{Hop-constrained oblivious routing.} We define the optimal integral $h$-hop congestion $\optint^{(h)}(d)$ as the minimum congestion over all routings $R$ that are integral on $d$ with dilation at most $h$.

Let $G$ be a graph and $R$ be an oblivious routing on $G$. We say $R$ is a $h$-hop oblivious routing for $h \geq 1$ with hop-stretch $\beta \geq 1$ and congestion approximation $C \geq 1$ if for all demands $d$ we have $\dil(R, d) \leq \beta h$ and $\obl(R, d) \leq C \opt^{(h)}(d)$.

\begin{theorem}[Theorem 3.1 of \cite{GhaffariHZ21}]\label{thm:hopconstrainedobl}
For every graph $G$ and every $h \geq 1$, there exists a $h$-hop oblivious routing with hop stretch $\bigO{\log^7 n}$ and congestion approximation $\bigO{\log^2(n) \log^2(h \log n)}$.
\end{theorem}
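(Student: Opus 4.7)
The plan is to lift Raecke's tree-embedding framework for oblivious routing to the hop-constrained setting by constructing a distribution over hierarchical decompositions that simultaneously approximate the hop-metric (to control dilation) and the hop-constrained cut structure (to control congestion against $\opt^{(h)}$). The central object would be a random hop-constrained hierarchy: a laminar family of partitions at geometrically increasing length scales $\rho_0 < \rho_1 < \cdots < \rho_L$ with $L = \bigO{\log(h \log n)}$ and $\rho_i \approx 2^i$, together with short representative paths inside $G$ that realise the hops between adjacent levels.

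First I would build the single-level primitive: given a target scale $\rho$, a randomized low-hop-diameter decomposition that partitions $V$ into clusters of weak hop-diameter $\bigO{\rho \cdot \poly\log n}$ while cutting any fixed $\rho$-hop shortest path with probability at most $\bigO{1/\log n}$ per boundary crossing. This would follow a Bartal/FRT-style ball-growing argument in which ball radii are drawn from an exponential distribution but measured in weighted hop-length capped at $\rho$; a polylogarithmic blow-up in cluster diameter is unavoidable because hop-radii cannot be grown without paying for the underlying metric, and some slack must be spent on ``shortcut'' paths inside each cluster. Iterating this primitive at all scales $1, 2, 4, \ldots, h$ and concatenating the per-level slack across the hierarchy produces the overall hop-stretch $\bigO{\log^7 n}$.

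Second, given the random hierarchy, I would define the oblivious routing for each pair $(s, t)$ by sending the packet up the hierarchy from $s$ along representative paths to the least common cluster containing both $s$ and $t$, then back down to $t$. The dilation bound is then immediate from the per-level hop-diameter times the number of levels. To bound the congestion approximation, I would invoke a Raecke-style LP-duality / multiplicative-weights argument, but on the polytope of $h$-hop unit flows rather than on classical cuts: the goal is to show that every $h$-hop cut has expected load bounded by $\bigO{\log^2 n \cdot \log^2(h \log n)}$ times its capacity. One $\log n$ factor comes from the Bartal guarantee per level, a second from averaging over hierarchy levels, and the $\log^2(h \log n)$ factor from handling the $\bigO{\log(h \log n)}$ different hop-scales in $[1, h]$.

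The main obstacle is this \emph{simultaneous} approximation: a random hierarchy that is good in the hop metric need not be good against all $h$-hop cuts, and vice versa. The technical heart of the proof is the construction of a single distribution that controls both at once, which is where the argument departs substantively from classical Raecke-style oblivious routing, in which only the cut structure must be approximated.
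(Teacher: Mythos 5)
This statement is not proved in the paper at all: it is imported verbatim as Theorem~3.1 of \cite{GhaffariHZ21} and used purely as a black box (in \Cref{sec:hopconstrainedsemiobl}), so there is no in-paper argument to compare your proposal against. What the paper needs is only the statement itself, and the correct ``proof'' in the context of this paper is the citation.

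Judged as a reconstruction of the external proof, your sketch captures the right high-level shape (hierarchical decompositions respecting hop-length plus a R\"acke-style congestion argument against $h$-hop flows), but it has a concrete gap at its core. The $h$-hop-bounded distance $d^{(h)}(u,v)$ is \emph{not a metric} --- the triangle inequality fails badly --- and this is precisely why the construction in \cite{GhaffariHZ21} cannot proceed by a ``Bartal/FRT-style ball-growing argument in which ball radii are measured in weighted hop-length capped at $\rho$'': the standard analyses of exponential ball carving (cutting probability $\bigO{1/\log n}$ per scale, padding, composability across levels) all rely on the triangle inequality, so your single-level primitive is exactly the step that does not exist in the form you assume. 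The actual argument works around this by embedding hop-constrained distances into distributions over \emph{partial} tree metrics (trees that only serve a large fraction of the vertex set, with excluded vertices handled by reweighting and repetition) and by relaxing the comparison to paths with $\poly\log(n)\cdot h$ hops, which is where the specific $\bigO{\log^7 n}$ hop stretch and $\bigO{\log^2(n)\log^2(h\log n)}$ congestion factors come from; your sketch acknowledges the ``simultaneous approximation'' difficulty but does not supply the mechanism that resolves it. If you intend to use this theorem in the present paper, cite it; if you intend to reprove it, the partial-tree-embedding machinery (or an equivalent substitute for the missing triangle inequality) is the piece you would have to develop.
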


Completion-time competitive semi-oblivious routings are defined analogously to the regular variants:

\begin{definition}[Completion-time competitive integral semi-oblivious routing]

We say a integral semi-oblivious routing $\calP$ is \textit{$C$-completion time competitive on a set of demands $D$} for $C \geq 1$ if for every demand $d \in D$ and routing $R$, there exists a routing $R'$ on $\calP$ such that
\begin{equation*}
    \obl(R', d) + \dil(R', d) \leq C(\obl(R, d) + \dil(R, d)).
\end{equation*}
We say $\calP$ is \textit{$C$-completion time competitive} if it is $C$-completion time competitive on the set of all demands.
\end{definition}

We in fact prove something stronger: the semi-oblivious routing is competitive in \emph{both} congestion and dilation with any demand. This of course implies competitiveness in completion time.

\congdilationcorollary*

\begin{proof}[Proof sketch (full proof in \Cref{sec:fullproofsformalresults})]
Let $h_1 = 1$ and $h_i = \lceil h_{i - 1} \log n \rceil$. For $i \in \left[\left\lceil \frac{\log n}{\log \log n}\right\rceil\right]$, let $R_i$ be a $h_i$-hop oblivious routing from \cite{GhaffariHZ21}. For $\alpha = \bigO{\frac{\log n}{\log \log n}}$, for every $i$, let $\calP_i$ be an $\alpha$-sparse semi-oblivious routing obtained by using \Cref{cor:actualmaincorollary} to $\alpha$ and $R_i$. Define $\calP$ as $P(s, t) := \bigcup_i P_i(s, t)$. This $\calP$ has the desired sparsity.

Now, for a \zeroonedemand $d$ and routing $R$ integral on $d$ with dilation between $h_{j - 1}$ and $h_{j}$, there exists a routing $R'$ on $\calP_{j}$ (thus also on $\calP$) that has congestion $\obl(R, d) \poly \log n$ and dilation $h_j \poly \log n \leq \dil(R, d) \poly \log n $. Finally, we use \Cref{lem:roundlemma} to make the routing integral on $d$.
\end{proof}

The proof of the low-sparsity result \Cref{lem:congdilationsparse} is similar, just scaling the $h_i$s by $n^{\bigO{\alpha^{-1}}}$ instead of $\log n$, in which case only $\bigO{\alpha}$ different scales are needed. The proof is deferred to \Cref{sec:fullproofsformalresults}.

\congdilationsparsecorollary*

\section{Lower bound}\label{sec:lower-bound}

In this section, we present an explicit construction of a family of simple graphs that gives a lower bound on the best achievable competitiveness of sparse semi-oblivious routing in both the fractional and integral settings, even when restricted to permutation demands. As the demands that cause the semi-oblivious routings to exhibit bad behaviour only have demand between vertices between which the minimum cut has size $1$, the lower bounds apply to both $\alpha$-sparsity and $(\alpha + \cut)$-sparsity. 

We construct the class of graphs in two parts: first, we construct an $\bigO{n}$-vertex graph based on $n$ and $\alpha$, where any $(\alpha - 1 + \cut)$-sparse integral semi-oblivious routing can at best be $\alpha^{-1} \lfloor n^{\frac{1}{2 \alpha}} \rfloor$-competitive on permutation demands. Then, we use bridges to connect multiple copies of that graph built with different $\alpha$ from $1$ to $\log n$, creating a $\bigO{n \log n}$-vertex graph where any $(\alpha - 1 + \cut)$-sparse integral semi-oblivious routing can at best be $\alpha^{-1} \lfloor n^{\frac{1}{2 \alpha}} \rfloor$-competitive on permutation demands.

\begin{restatable}{lemma}{fixedalphalowerbound}\label{lem:fixedalphalowerbound}
Let $C(n, k)$ be the $(2n + 2 + k)$-vertex $(2n + 2k)$-edge graph consisting of two $n + 1$-vertex stars and $k$ vertices connected to the two centers of the stars.

Fix $n$ and $\alpha \in [n]$, and let $k = \lfloor n^{\frac{1}{2 \alpha}} \rfloor$. Then, for every $(\alpha - 1 + \text{cut})$-sparse semi-oblivious routing $\calP$ on $C(n, k)$, there exists a permutation demand $d$, such that
\begin{equation*}
    \oblreal(\calP, d) \geq \alpha^{-1} k \cdot \text{opt}_{C(n, k), \mathbb{Z}}(d).
\end{equation*}
\end{restatable}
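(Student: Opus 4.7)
The plan exploits the structure of $C(n,k)$: every simple path between a leaf of the first star and a leaf of the second star has length exactly four and passes through precisely one middle vertex. Consequently, for any $(\alpha - 1 + \cut)$-sparse $\calP$ and cross-leaf pair $(u_i, v_j)$, since $\cut(u_i, v_j) = 1$, the set $S_{ij} \subseteq [k]$ of middle vertices used by paths in $P(u_i, v_j)$ has size at most $\alpha$. The strategy is to pigeonhole on these patterns to find a matching of $k$ cross-pairs all constrained to use a single size-$\alpha$ set of middle vertices, forcing an $\alpha^{-1} k$ congestion blow-up.

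First I would reduce to the main regime $\alpha \leq k$: if $\alpha > k$, then $\alpha^{-1} k < 1$ and the bound follows from $\oblreal(\calP, d) \geq 1 = \optint(d)$ on any single-pair cross-leaf demand (its leaf edge carries the one unit of flow). We may likewise assume every $S_{ij} \neq \emptyset$, as otherwise a demand involving such a pair trivially gives $\oblreal(\calP, d) = \infty$. Next, pad each $S_{ij}$ arbitrarily to a size-$\alpha$ superset $T_{ij} \in \binom{[k]}{\alpha}$. By pigeonhole over the $n^2$ cross-pairs, some pattern $T^* \in \binom{[k]}{\alpha}$ is shared by at least $n^2/\binom{k}{\alpha}$ pairs. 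Using $\binom{k}{\alpha} \leq k^\alpha$ together with $k^{\alpha+1} \leq n$ (the precise consequence of $k = \lfloor n^{1/(2\alpha)} \rfloor$), this count is at least $kn$, so by K\"onig's theorem the bipartite ``pairs-sharing-$T^*$'' graph contains a matching $\{(i_l, j_l)\}_{l \in [k]}$ of size $k$, with $S_{i_l j_l} \subseteq T^*$ for every $l$.

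Finally, let $d$ be the permutation demand routing one unit from $u_{i_l}$ to $v_{j_l}$ for each $l \in [k]$. Routing the $l$-th pair integrally through the $l$-th middle vertex (via any bijection) shows $\optint(d) = 1$. On the other hand, every routing supported on $\calP$ must distribute each pair's unit of flow across paths whose middle vertex lies in $S_{i_l j_l} \subseteq T^*$, so all $k$ units of demand funnel through the $\alpha$ middle-to-center edges incident to vertices of $T^*$. Averaging forces some such edge to carry congestion at least $k/\alpha$, yielding $\oblreal(\calP, d) \geq \alpha^{-1} k \cdot \optint(d)$.

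The main obstacle is the quantitative balancing: the pigeonhole yield $n^2/\binom{k}{\alpha}$ must exceed the K\"onig threshold $kn$ in order to extract the size-$k$ matching, and the parameter choice $k = \lfloor n^{1/(2\alpha)} \rfloor$ is calibrated precisely so $k^{\alpha+1} \leq n$ makes this go through. A minor subtlety is that some $S_{ij}$ may be strictly smaller than $\alpha$, but arbitrary padding to size $\alpha$ does not affect the argument since the inclusion $S_{ij} \subseteq T^*$ (and thus the constraint that flow is confined to $T^*$) is preserved.
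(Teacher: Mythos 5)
Your proof is correct and follows the same overall blueprint as the paper: assign to each cross-leaf pair a size-$\alpha$ set of middle vertices that covers all its candidate paths, extract a size-$k$ matching of pairs sharing one such set $T^*$, and then average the $k$ units of demand over the center edges incident to $T^*$ to force congestion $k/\alpha$ against $\optint(d)=1$. The only genuine difference is the matching-extraction step: the paper pigeonholes twice (first per source over $\binom{k}{\alpha}\leq\sqrt{n}$ patterns, then over sources) and invokes Hall's theorem, whereas you pigeonhole once globally over all $n^2$ pairs to get at least $n^2/\binom{k}{\alpha}\geq kn$ pairs sharing $T^*$ (using $k^{\alpha+1}\leq n$, which indeed follows from $k=\lfloor n^{1/(2\alpha)}\rfloor$ and $\alpha\geq 1$) and then apply K\"onig's theorem, since a bipartite graph on two sides of size $n$ with $kn$ edges has no vertex cover smaller than $k$. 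Both routes use the same calibration of $k$ and yield the same bound; yours is arguably slightly slicker, the paper's is more hands-on. Two cosmetic remarks: the edges incident to $T^*$ number $2\alpha$, not $\alpha$ (each middle vertex touches both centers), but the averaging gives $k/\alpha$ either way since each path contributes to two of them (or restrict to the $\alpha$ edges toward one center); and your explicit handling of the degenerate cases ($\alpha>k$ via the leaf edge, and empty $P(s,t)$) is fine and if anything more careful than the paper's one-line dismissal.
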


\begin{figure}[h]
   \centering
   \includegraphics[width=.45\textwidth]{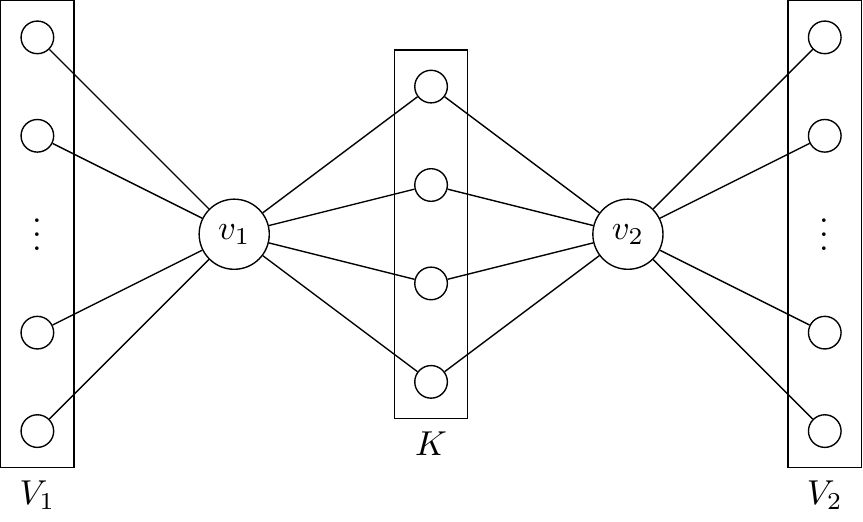}
 \caption{$C(n, k)$ for $n = 256$, $k = 4$. The vertices are labeled as in the proof of \Cref{lem:fixedalphalowerbound}, $|V_1| = |V_2| = 256$ and $|K| = 4$. By \Cref{lem:fixedalphalowerbound}, on this graph, every $2$-sparse semi-oblivious routing is at best $2$-competitive.}
 \label{fig:lowerbound}
\end{figure}

Note that while the semi-oblivious routing doesn't have to be integral, the lower bound uses $\text{opt}_{C(n, k), \mathbb{Z}}$ instead of $\text{opt}_{C(n, k), \mathbb{R}}$. Thus, the lower bound applies to integral semi-oblivious routings as well.

\Cref{lem:fixedalphalowerbound} is not sufficient to prove \Cref{lem:basiclowerbound}, as $C(n, k)$ depends on $\alpha$. \Cref{lem:explicitlowerbound} fixes this by constructing a graph out of multiple $C(n, k)$ built with different $\alpha$. Note that connecting the graphs with bridges does not affect the cuts or routings inside a $C(n, k)$-subgraph.

\begin{restatable}{lemma}{explicitlowerbound}\label{lem:explicitlowerbound}
Let $C(n, k)$ be the $(2n + 2 + k)$-vertex $(2n + 2k)$-edge graph consisting of two $n + 1$-vertex stars and $k$ vertices connected to the two centers of the stars. Let $G(n)$ be a graph built as follows: we make a copy of $C(n, \lfloor n^{\frac{1}{2 \alpha}} \rfloor)$ for every $\alpha \in [\lfloor \log n \rfloor]$, then arbitrarily connect the copies with $\lfloor \log n \rfloor - 1$ edges to make the graph connected.

Fix $n \geq 2$. Then, for every $\alpha \in [n]$ and every $(\alpha - 1 + \text{cut})$-sparse semi-oblivious routing $\calP$ on $G$, there exists a permutation demand $d$, such that
\begin{equation*}
    \oblreal(\calP, d) \geq \alpha^{-1} \lfloor n^{\frac{1}{2 \alpha}} \rfloor \text{opt}_{G(n), \mathbb{Z}}(d).
\end{equation*}
\end{restatable}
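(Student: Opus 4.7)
The plan is to reduce \Cref{lem:explicitlowerbound} to the fixed-$\alpha$ bound of \Cref{lem:fixedalphalowerbound} by isolating, inside $G(n)$, the copy of $C(n,k_\alpha)$ intended for the parameter $\alpha$, and arguing that the rest of $G(n)$ is invisible to both $\calP$ and to the optimum on intra-copy demands. Write $k_\alpha := \lfloor n^{1/(2\alpha)}\rfloor$. If $\alpha \le \lfloor\log n\rfloor$, then $H := C(n, k_\alpha)$ is a constituent of $G(n)$ by construction; if $\alpha > \lfloor\log n\rfloor$, an elementary check using $n \ge 2$ yields $k_\alpha = 1 = k_{\lfloor\log n\rfloor}$, so the constituent $C(n,1)$ of $G(n)$ can serve as $H = C(n,k_\alpha)$. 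In every case $H$ sits inside $G(n)$ as a subgraph.

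The crux of the proof is the structural observation that for every $s,t \in H$, every simple $(s,t)$-path in $G(n)$ stays entirely inside $H$. Since the $\lfloor\log n\rfloor-1$ bridges connect $\lfloor\log n\rfloor$ a priori disjoint copies into a connected graph, each bridge must be a cut edge of $G(n)$: removing any $\lfloor\log n\rfloor-2$ edges of a connected graph cannot produce $\lfloor\log n\rfloor$ components. Fix any bridge $b$; then $G(n)\setminus b$ has two connected components and the entire $H$ lies inside one of them. A simple $(s,t)$-path with $s,t \in H$ crosses $b$ at most once by simplicity, yet must cross $b$ an even number of times since it starts and ends in the same component of $G(n)\setminus b$, so it does not use $b$ at all. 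Repeating this argument for every bridge shows the path stays inside $H$, and as an immediate corollary the minimum $(s,t)$-cuts of $G(n)$ and of $H$ coincide for every $s,t \in H$. This is the main obstacle of the proof: it is the only genuinely non-routine ingredient and crucially exploits the sparse, spanning-tree-like bridge structure; with parallel bridges or any cycle among copies, an intra-copy simple path could legitimately reroute through other copies and the clean reduction that follows would collapse.

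Given this structural fact, the restriction $\restr{\calP}{H}$ of $\calP$ to vertex pairs inside $H$ is a bona fide semi-oblivious routing on $H$, and its sparsity translates: since $\calP$ is $(\alpha-1+\mathrm{cut}_{G(n)})$-sparse and the minimum cuts agree on intra-copy pairs, $\restr{\calP}{H}$ is $(\alpha-1+\mathrm{cut}_H)$-sparse. Invoking \Cref{lem:fixedalphalowerbound} on $H=C(n,k_\alpha)$ with parameters $n$ and $\alpha$ produces a permutation demand $d_H$ on $H$ satisfying $\oblreal(\restr{\calP}{H}, d_H) \ge \alpha^{-1}k_\alpha \cdot \mathrm{opt}_{H,\mathbb{Z}}(d_H)$. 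Extend $d_H$ to a permutation demand $d$ on $G(n)$ by zero-padding outside $H$. By the structural observation, both the optimal integral routing of $d$ on $G(n)$ and every $\calP$-routing of $d$ use only paths inside $H$, so $\mathrm{opt}_{G(n),\mathbb{Z}}(d) = \mathrm{opt}_{H,\mathbb{Z}}(d_H)$ and $\oblreal(\calP, d) = \oblreal(\restr{\calP}{H}, d_H)$, which substituted into the preceding inequality yields the claim.
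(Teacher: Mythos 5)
Your proposal is correct and follows essentially the same route as the paper: restrict attention to the copy $C(n,\lfloor n^{1/(2\alpha)}\rfloor)$, observe that all inter-copy edges are bridges so simple paths and minimum cuts for intra-copy pairs are confined to that copy, apply \Cref{lem:fixedalphalowerbound} to the restricted routing, and zero-pad the resulting permutation demand. The only (harmless) deviation is the case $\alpha > \lfloor\log n\rfloor$, which the paper dismisses upfront as trivial while you route it through the $C(n,1)$ copy via \Cref{lem:fixedalphalowerbound}; both treatments work.
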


The following corollary is a special case of \Cref{lem:explicitlowerbound} for $\alpha$-sparsity, integral semi-oblivious routing and permutation demands.

\begin{corollary}\label{cor:easylowerbound}
Define $G(n)$ as in \Cref{lem:explicitlowerbound}. Then, for every $\alpha \in [n]$, there exists no $\alpha$-sparse integral semi-oblivious routing $\calP$ on $G$ that is $\left(\frac{1}{2} n^{\frac{1}{2\alpha}} \log^{-1} n\right)$-competitive on permutation demands.
\end{corollary}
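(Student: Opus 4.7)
The corollary is a direct specialization of Lemma~\ref{lem:explicitlowerbound} to the case of integral routings under an $\alpha$-sparsity (rather than $(\alpha-1+\cut)$-sparsity) constraint. The plan is (i) to embed any $\alpha$-sparse routing inside the $(\alpha-1+\cut)$-sparse regime to which the lemma already applies, (ii) to pass from fractional to integral congestion via the trivial inclusion of feasible routings, and (iii) to reconcile the constant $\alpha^{-1}\lfloor n^{1/(2\alpha)}\rfloor$ provided by the lemma with the stated target $\tfrac{1}{2} n^{1/(2\alpha)}\log^{-1} n$.

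\textbf{Step 1: Sparsity embedding.} Fix an arbitrary $\alpha$-sparse integral semi-oblivious routing $\calP$ on $G(n)$. Since $G(n)$ is connected, $\cut(s,t)\ge 1$ for every pair of distinct vertices, and therefore $|P(s,t)|\le \alpha \le \alpha-1+\cut(s,t)$. Hence $\calP$ also qualifies as a $(\alpha-1+\cut)$-sparse semi-oblivious routing, and Lemma~\ref{lem:explicitlowerbound} supplies a permutation demand $d$ with
\begin{equation*}
\oblreal(\calP,d)\;\ge\;\alpha^{-1}\lfloor n^{1/(2\alpha)}\rfloor\cdot\optint(d).
\end{equation*}

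\textbf{Step 2: From fractional to integral congestion.} Any routing that is integral on $d$ is in particular a routing on $\calP$, so minimizing over the smaller class gives $\oblint(\calP,d)\ge \oblreal(\calP,d)$. Combined with Step~1,
\begin{equation*}
\oblint(\calP,d)\;\ge\;\alpha^{-1}\lfloor n^{1/(2\alpha)}\rfloor\cdot\optint(d).
\end{equation*}

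\textbf{Step 3: Arithmetic, and the vacuous regime.} Set $A:=\alpha^{-1}\lfloor n^{1/(2\alpha)}\rfloor$ and the target $C:=\tfrac{1}{2} n^{1/(2\alpha)}\log^{-1} n$; it suffices to show $A>C$ in the meaningful regime and $C<1$ in the rest. If $\alpha>\log n/2$ then $n^{1/(2\alpha)}<2$, so $C<\log^{-1} n\le 1$, and since competitiveness is always at least $1$, the routing is trivially not $C$-competitive on any nontrivial permutation demand $d$ with $\optint(d)>0$. Otherwise $\alpha\le\log n/2$, so $\alpha^{-1}\ge 2/\log n$ and $n^{1/(2\alpha)}\ge 2$, the latter giving $\lfloor n^{1/(2\alpha)}\rfloor\ge n^{1/(2\alpha)}/2$; multiplying yields $A\ge n^{1/(2\alpha)}/\log n>C$. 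In either regime, the demand $d$ from Step~1 witnesses that $\calP$ fails to be $C$-competitive.

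\textbf{Main obstacle.} The argument is a direct specialization with no deep content beyond Lemma~\ref{lem:explicitlowerbound}; the only care needed is the bookkeeping in Step~3 to separate the regime where the bound is quantitatively meaningful ($\alpha$ small enough that $C\ge 1$) from the regime where the claim holds vacuously because $C<1$ is unachievable.
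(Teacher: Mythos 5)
Your proposal is correct and takes essentially the same route as the paper's own proof: observe that $\cut(s,t)\geq 1$ makes any $\alpha$-sparse routing $(\alpha-1+\cut)$-sparse, invoke \Cref{lem:explicitlowerbound}, use $\oblint(\calP,d)\geq\oblreal(\calP,d)$, and then convert the factor $\alpha^{-1}\lfloor n^{1/(2\alpha)}\rfloor$ into $\tfrac{1}{2}n^{1/(2\alpha)}\log^{-1}n$, treating the regime where the target drops below $1$ as vacuous. Your explicit split at $\alpha\lessgtr\log n/2$ is in fact a slightly more careful rendering of the paper's terse closing remark, so no further changes are needed.
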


\begin{proof}
Fix any $\alpha \in [n]$ and a $\alpha$-sparse integral semi-oblivious routing $\calP$. Since every cut has size at least $1$, any $\alpha$-sparse semi-oblivious routing is also $(\alpha + \text{cut} - 1)$-sparse. Thus, by \Cref{lem:explicitlowerbound}, there exists a permutation demand $d$, such that
\begin{equation*}
    \oblint(\calP, d) \geq \oblreal(\calP, d) \geq \lfloor n^{\frac{1}{2 \alpha}} \rfloor \log^{-1}(n) \text{opt}_{G(n), \mathbb{Z}}(d) > \frac{1}{2} n^{\frac{1}{2 \alpha}} \log^{-1}(n) \text{opt}_{G(n), \mathbb{Z}}(d)
\end{equation*}
as if $n^{\frac{1}{2 \alpha}} < 1$, then $\left(\frac{1}{2} n^{\frac{1}{2\alpha}} \log^{-1} n\right) < 1$, and no semi-oblivious routing can be sub-$1$-competitive.
\end{proof}

We leave the proofs of \Cref{lem:basiclowerbound} and \Cref{lem:basicalphalowerbound} to the end of the section. They are both simple applications of \Cref{cor:easylowerbound}.

To prove \Cref{lem:fixedalphalowerbound}, we aim to find a matching $(s_i, t_i)$ between $k$ leaves of the first star and $k$ leaves of the second star, such that there is a set $S'$ of $\alpha$ vertices among the $k$ between the centers of the two stars, such that for every $i$, the $\alpha$ paths between $s_i$ and $t_i$ each contain at least one of the vertices. Then, the demand where $d(s_i, t_i) = 1$ can be routed integrally with congestion $1$, but the semi-oblivious routing cannot route it with congestion less than $\frac{k}{\alpha}$, as every $(s_i, t_i)$-path goes through at least one of the $\alpha$ vertices in $S'$.

To find such a $S'$, we use the pigeonhole principle two times. There are $\binom{k}{\alpha} \leq \sqrt{n}$ different size-$\alpha$ subsets of the vertices between the centers of the two stars, and for every vertex pair $(s, t)$ of a leaf $s$ of the left star and leaf $t$ of the right star, there is a subset $f(s, t)$ of size $\alpha$ such that every path in $P(s, t)$ contains at least one of the vertices. Thus, for every leaf $s$ of the left star, there is a subset $f(s)$ of size $\alpha$ such that there are at least $\sqrt{n}$ leaves $t$ of the right star, such that $f(s, t) = f(s)$. Thus, there is a subset $S'$ of size $\alpha$ such that there exist at least $\sqrt{n}$ leaves $s$ of the first star, such that $f(s) = S'$. By the definition of $f(s)$, the perfect matching must now exist.

\fixedalphalowerbound*

\begin{proof}
If $\alpha \geq k$, the claim is trivial, as no semi-oblivious routing can be sub-$1$-competitive. Thus, we may assume that $\alpha \leq k$.

Let $V_1$ be the set of the $n$ leaves in the first star and $v_1$ be its center, $V_2$ be the set of leaves in the second star and $v_2$ its center, $K$ be the set of the $k$ vertices that are each connected to $v_1$ and $v_2$, and $\mathcal{F}_{\alpha}$ be the family of size-$\alpha$ subsets of $K$.

Let $\calP$ be a $(\alpha - 1 + \text{cut})$-sparse semi-oblivious routing on $C(n, k)$. For $(s, t) \in V_1 \times V_2$, we have $\text{cut}_{C(n, k)}(s, t) = 1$, thus $|P(s, t)| \leq \alpha$. Let $f(s, t) \in \mathcal{F}_\alpha$ be an arbitrary subset of $\alpha$ vertices from $K$, such that every path in $P(s, t)$ contains at least one of the vertices in $f(s, t)$. Such a set is guaranteed to exist since $|P(s, t)| \leq \alpha$ and removing all vertices in $K$ disconnects $V_1$ and $V_2$. For a fixed $s \in V_1$, we have
\begin{equation*}
    n = |V_2| = \sum_{S \in \mathcal{F}_\alpha} |\{t \in V_2 \mid f(s, t) = S\}| \leq |\mathcal{F}_\alpha| \max_{S \in \mathcal{F}_\alpha} |\{t \in V_2 \mid f(s, t) = S\}|.
\end{equation*}
Thus, since $|\mathcal{F}_{\alpha}| = \binom{k}{\alpha} \leq k^{\alpha} \leq n^{\frac{\alpha}{2 \alpha}} = \sqrt{n}$, we have $\max_{S \in \mathcal{F}_{\alpha}} |\{t \in V_2 \mid f(s, t) = S\}| \geq \sqrt{n}$. Thus, there exists $f(s) \in \mathcal{F}_{\alpha}$ such that $|\{t \in V_2 \mid f(s, t) = f(s)\}| \geq \sqrt{n}$. The exact choice of $f(s)$ over sets satisfying the condition can be done arbitrarily. Now, applying the same argument again, we have
\begin{equation*}
    n = |V_1| = \sum_{S \in \mathcal{F}_\alpha} |\{s \in V_1 \mid f(s) = S\}| \leq |\mathcal{F}_\alpha| \max_{S \in \mathcal{F}_\alpha} |\{s \in V_1 \mid f(s) = S\}|.
\end{equation*}
Thus, since $|\mathcal{F}_{\alpha}| \leq \sqrt{n}$, there exists $S' \in \mathcal{F}_{\alpha}$ such that $|\{s \in V_1 \mid f(s) = S'\}| \geq \sqrt{n}$. Let $A \subseteq \{s \in V_1 \mid f(s) = S'\}$ be a subset of this set of size $k \leq \sqrt{n}$. By the choice of $f$, for every $s \in A$ there exist at least $\sqrt{n} \geq |A|$ vertices $t \in V_2$ such that $f(s, t) = S'$. Thus, by Hall's criterion \cite{Hall35} there exists a subset $B \subseteq V_2$ of size $k$ and a perfect matching $(s_i, t_i)$ between $A$ and $B$, such that $f(s_i, t_i) = S'$ for all $i \in [k]$.

Now, define a demand $d$ of size $k$, where $d(s_i, t_i) = 1$ for all $i \in [k]$, and $d(s, t)$ is $0$ for all other pairs $(s, t) \in V \times V$. This demand is a permutation demand, and we have $\optint(d) = 1$, but every path $p \in P(s_i, t_i)$ contains a vertex in $S'$ and thus both of its adjacent edges. Thus, for any routing $R'$ on $\calP'$, the total congestion of edges adjacent to vertices in $S'$ must be at least $2 \siz(d)$, thus $\obl(R', d') \geq \frac{2 \siz(d)}{2 |S'|} = \frac{k}{\alpha}$, as desired. 
\end{proof}

The proof of \Cref{lem:explicitlowerbound} is simple: the bridges we add to connect the graph do nothing to paths between vertices on the same side of the bridge, thus since for every $\alpha$ there is a subgraph, connected to the rest of the graph with bridges, where no good $(\alpha - 1 + \cut)$-sparse semi-oblivious routing exists, no such semi-oblivious routing can exist in the whole graph either.

\explicitlowerbound*

\begin{proof}
Fix $\alpha \in [n]$ and a $(\alpha - 1 + \text{cut})$-sparse semi-oblivious routing $\calP$. If $\alpha^{-1} \lfloor n^{\frac{1}{2 \alpha}} \rfloor \leq 1$, every permutation demand $d$ satisfies the condition. Thus, we may assume that $\alpha < n^{\frac{1}{2 \alpha}}$, thus $\alpha \in [\lfloor \log n \rfloor]$.

Let $G' = (V', E')$ be the $C(n, \lfloor n^{\frac{1}{2 \alpha}} \rfloor)$-subgraph of $G$. By the construction of $G$, all edges out of $G'$ are bridges, thus $\cut(s, t) = \text{cut}_{G'}(s, t)$ and all paths in $P(s, t)$ are contained in $G'$ for all $(s, t) \in V'$. Thus, the restriction of $\calP$ to $V' \times V'$ is $(\alpha - 1 + \text{cut})$-sparse, and by \Cref{lem:fixedalphalowerbound}, there exists a permutation demand $d'$ such that for every routing $R'$ on the restriction of $\calP$ to $V' \times V'$, $\text{obl}_{G'}(R', d') \geq \alpha^{-1} \lfloor n^{\frac{1}{2 \alpha}} \rfloor \text{opt}_{G', \mathbb{Z}}(d')$. Thus, letting $d(s, t) = d'(s, t)$ for $(s, t) \in V' \times V'$ and $d(s, t) = 0$ elsewhere, for every routing $R'$ on $\calP'$, we have $\obl(R', d) \geq \alpha^{-1} \lfloor n^{\frac{1}{2 \alpha}} \rfloor \optint(d)$, as desired.
\end{proof}

Both \Cref{lem:basiclowerbound} and \Cref{lem:basicalphalowerbound} are simple corollaries of \Cref{cor:easylowerbound}, though they involve some asymptotic analysis. In particular, we have $n^{\frac{1}{2\alpha}} = 2^{\frac{\log n}{2\alpha}} = \log^{\frac{\log n}{2 \alpha \log \log n}}(n)$, for $\alpha = o(\frac{\log n}{\log \log n})$ a $n^{\bigO{\alpha^{-1}}}$-term covers any polylogarithmic terms, and by \Cref{cor:easylowerbound} any $\alpha$-sparse integral semi-oblivious routing must be super-poly-logarithmically competitive. 

\basiclowerboundcorollary*

\begin{proof}
Let $\mathcal{G} = \{G(n) : n \in \mathbb{Z}_{\ge 1}\}$ be the family of graphs containing every $G(n)$ as defined in \Cref{lem:explicitlowerbound}. Fix $g_1$ and $g_2$ and let $r(n) = \frac{\log n}{2 g_1(n) \log \log n} = \omega(1)$. Then,
\begin{equation*}
    n^{\frac{1}{2 g_1(n)}} = 2^{\frac{\log n}{2 g_1(n)}} = 2^{r(n) \log \log n} = \log^{r(n)} n.
\end{equation*}
Fix a graph $G(n)$ and a $g_1(n)$-sparse integral semi-oblivious routing $\calP$ on $G(n)$. By \Cref{cor:easylowerbound}, $\calP$ cannot be
\begin{equation*}
    \frac{1}{2} n^{\frac{1}{2 g_1(n)}} \log^{-1}(n) = \frac{1}{2} \log^{r(n) - 1} n = \log^{\omega(1)} n
\end{equation*}
-competitive on $G(n)$ on \zeroonedemands. Let $V(n)$ be the vertex set of $G(n)$. Then, $|V(n)| = \bigO{n \log n}$ and
\begin{equation*}
    g_2(|V(n)|) = \bigO{g_2(n)} = o\left(\log^{\omega(1)} n\right)
\end{equation*}
since $g_2 = \poly \log n$. Thus, for large enough $n_0$, for any $n \geq n_0$, for any $n$-vertex graph $G \in \mathcal{G}$, there exists no $g_1(n)$-sparse integral semi-oblivious routing that is $g_2$-competitive on all permutation demands, as desired.
\end{proof}

\basicalphalowerboundcorollary*

\begin{proof}
Let $\mathcal{G} = \{G(n) : n \in \mathbb{Z}_{\ge 1}\}$ be the family of graphs containing every $G(n)$ as defined in \Cref{lem:explicitlowerbound}. Fix $\epsilon > 0$ and $\alpha = o\left(\frac{\log n}{\log \log n}\right)$ and let $r(n) = \frac{\epsilon \log n}{\alpha(n) \log \log n} = \omega(1)$. Then,
\begin{equation*}
    n^{\left(\frac{1}{2} - \epsilon\right)\alpha(n)^{-1}} = n^{\frac{1}{2 \alpha(n)}} n^{-\epsilon \alpha(n)^{-1}} = n^{\frac{1}{2 \alpha(n)}} 2^{-\frac{\epsilon \log n}{\alpha(n)}} = n^{\frac{1}{2 \alpha(n)}} 2^{-r(n) \log \log n} = n^{\frac{1}{2 \alpha(n)}} \log^{-r(n)} n.
\end{equation*}
Fix a graph $G(n)$ and a $\alpha$-sparse integral semi-oblivious routing $\calP$ on $G(n)$. By \Cref{cor:easylowerbound}, $\calP$ cannot be
\begin{equation*}
     \frac{1}{2} n^{\frac{1}{2 \alpha}} \log^{-1}(n) = n^{\left(\frac{1}{2} - \epsilon\right) \alpha^{-1}} \left(\frac{1}{2} \log^{r(n) - 1} n\right)
\end{equation*}
-competitive on $G(n)$ on \zeroonedemands. Let $V(n)$ be the vertex set of $G(n)$. Then, $|V(n)| = \bigO{n \log n}$ and
\begin{equation*}
    |V(n)|^{\left(\frac{1}{2} - \epsilon\right) \alpha^{-1}} = n^{\left(\frac{1}{2} - \epsilon\right) \alpha^{-1}} \bigO{\log n}^{\left(\frac{1}{2} - \epsilon\right) \alpha^{-1}} \leq n^{\left(\frac{1}{2} - \epsilon\right) \alpha^{-1}} \bigO{\log n} = o\left(n^{\left(\frac{1}{2} - \epsilon\right) \alpha^{-1}} \left(\frac{1}{2} \log^{r(n) - 1} n\right)\right).
\end{equation*}
Thus, for large enough $n_0$, for any $n$-vertex graph $G \in \mathcal{G}$, there exists no $\alpha$-sparse integral semi-oblivious routing on $G$ that is $\left(n^{\left(\frac{1}{2} - \epsilon\right) \alpha^{-1}}\right)$-competitive on all permutation demands.
\end{proof}

\bibliographystyle{alpha}
\bibliography{refs} 

\appendix

\section{Deferred Proofs}\label{sec:deferred}

\subsection{Sub-Lemmas of the Main Lemma}\label{sec:sublemmamainlemma}

\correctprocesslemma*
\begin{proof} By definition of $f$, we have $w_{k}(p) \leq w_{k - 1}(p)$. Now,
\begin{itemize}
    \item $d'(s, t) \leq d(s, t)$, as
    \begin{equation*}
        d'(s, t) = \sum_{p \in P(s, t)} w_{|E|}(p) \leq \sum_{p \in P(s, t)} w_{0}(p) = d(s, t).
    \end{equation*}
    
    \item Fix an edge $e_k \in E$. The congestion on edge $e_k$ in the routing of $d'$ with $R'$ is
    \begin{align*}
        \sum_{(s, t) \in \supp(d')} d'(s, t) \prob{e_k \in R'(s, t)}
        &= \sum_{(s, t) \in \supp(d')} \sum_{p \in P(s, t)} w_{|E|}(p) \ind{e_k \in p}\\
        &= \sum_{p \in \calP} w_{|E|}(p) \ind{e_k \in p}.
    \end{align*}
    If $\text{cong}_k \leq \gamma$, we have
    \begin{equation*}
        \sum_{p \in \calP} w_{|E|}(p) \ind{e_k \in p} \leq \sum_{p \in \calP} w_{k - 1}(p) \ind{e_k \in p} = \text{cong}_k \leq \gamma,
    \end{equation*}
    otherwise, $w_{k}(p) = w_{k - 1}(p) \ind{e_k \not\in p}$, and
    \begin{equation*}
        \sum_{p \in \calP} w_{|E|}(p) \ind{e_k \in p} \leq \sum_{p \in \calP} w_{k - 1}(p) \ind{e_k \not\in p} \ind{e_k \in p} = 0.
    \end{equation*}
    
    \item $\sum_{s, t} d'(s, t) = D - \sum_k \Delta_k$, as
    \begin{equation*}
        \sum_{s, t} d'(s, t) = \sum_{p \in \calP} w_{|E|}(p) = \sum_{p \in \calP} w_{0}(p) - \sum_{k} \Delta_k = \sum_{s, t} d(s, t) - \sum_k \Delta_k = D - \sum_k \Delta_k
    \end{equation*}
\end{itemize}
\end{proof}

\badcountlemma*
\begin{proof}
First, recall that $d(s, t) \in \{0, \cou(s, t)\}$. 
\begin{itemize}
    \item $\gamma \geq \alpha \obl(R, d) \geq \frac{\alpha d(s, t)}{\cut(s, t)} \geq \alpha$ for $(s, t) \in \supp(d)$, 
    \item $D = \sum_{(s, t)} d(s, t) = \sum_{(s, t) \in \supp(d)} \cou(s, t) \leq n^2 (\alpha + m) \leq 2n^2 m \leq 2m^3$.
\end{itemize}

The number of bad patterns is at most $\sum_{s = 1}^{\lfloor D / \gamma \rfloor} \binom{s + m - 1}{s}$, where $s$ goes over $\sum_k b_k$, and $\binom{s + m - 1}{s}$ is the number of ways to select $b_k$ given $s = \sum_k b_k$. Now,
\begin{equation*}
    \sum_{s = 1}^{\lfloor D / \gamma \rfloor} \binom{s + m - 1}{s} \leq \lfloor D / \gamma \rfloor \binom{\lfloor D / \gamma \rfloor + m}{\lfloor D / \gamma \rfloor} \leq (m + \lfloor D / \gamma \rfloor)^{\lfloor D / \gamma \rfloor} \leq (m + 2m^3)^{\lfloor D / \gamma \rfloor}.
\end{equation*}
Finally, $m + 2m^3 \leq 3m^3 \leq m^4$ and $\lfloor D / \gamma \rfloor \leq \lfloor D / \alpha \rfloor \leq D / \alpha$ as $\gamma \geq \alpha$, thus there are at most $m^{4D / \alpha}$ bad patterns.
\end{proof}

\subsection{Lemmas from \Cref{sec:finishing-the-reduction}}\label{sec:fullproofsfinishingreduction}

\demandsumlemma*

\begin{proof}
  For $(s, t) \in \supp(d)$, we let
  \begin{align*}
    \prob{R(s,t) = p} := \frac{d_1(s, t) \prob{R_1(s,t) = p} + d_2(s, t) \prob{R_2(s,t) = p}}{d(s, t)} && \text{for all $p$.}
  \end{align*}
  For other $(s, t)$, let $R(s, t) = R_1(s, t)$. If $R_1 = R_2$, we have $R = R_1$. Now,
\begin{align*}
    \obl(R, d)
    &= \max_{e \in E} \sum_{(s, t) \in \supp(d)} d(s, t) \ind{e \in R(s, t)}\\
    &= \max_{e \in E} \sum_{(s, t) \in \supp(d)} \left(d_1(s, t) \ind{e \in R_1(s, t)} + d_2(s, t) \ind{e \in R_2(s, t)}\right)\\
    &\leq \max_{e \in E} \sum_{(s, t) \in \supp(d)} d_1(s, t) \ind{e \in R_1(s, t)} + \max_{e \in E} \sum_{(s, t) \in \supp(d)} d_2(s, t) \ind{e \in R_2(s, t)}\\
    &= \obl(R_1, d_1) + \obl(R_2, d_2).
\end{align*}
as desired.
\end{proof}

\boundconglemma*

\begin{proof}
We have
\begin{align*}
    \obl(R, d)  &= \max_{e \in E} \sum_{(s, t) \in \supp(d)} d(s, t) \ind{e \in R(s, t)}\\
                &\geq \frac{1}{|E|} \sum_{e \in E} \sum_{(s, t) \in \supp(d)} d(s, t) \ind{e \in R(s, t)}\\
                &\geq \frac{1}{|E|} \sum_{(s, t) \in \supp(d)} d(s, t)\\
                &= \frac{\siz(d)}{|E|}.
\end{align*}
For the upper bound,
\begin{equation*}
    \obl(R, d) = \max_{e \in E} \sum_{(s, t) \in \supp(d)} d(s, t) \ind{e \in R(s, t)} \leq \max_{e \in E} \sum_{(s, t) \in \supp(d)} d(s, t) = \siz(d).
\end{equation*}
\end{proof}

\capdemandlemma*

\begin{proof}
Let $d'$ be an arbitrary demand. It suffices to show that $\calP$ is $2C$-competitive with $R$ on $d'$.

Let $d = rd'$ be $d$ scaled by a real $r$ such that $\siz(d) = n^2 m$. Let $d_1(s, t) = d(s, t) \ind{d(s, t) \geq 1}$ and $d_2 = d - d_1$. Since $d_1 \in D_{n^2 m}$, there exists an oblivious routing $R'$ on $\calP$, such that $\obl(R', d_1) \leq C \obl(R, d_1) \leq C \obl(R, d)$. By \Cref{lem:boundcomplemma}, we have
\begin{equation*}
    \obl(R', d_2) \leq \text{siz}(d_2) \leq n^2 = \frac{\siz(d)}{m} \leq \obl(R, d) \leq C \obl(R, d).
\end{equation*}
Now, by \Cref{lem:demandsumlemma},
\begin{equation*}
    \obl(R', d) \leq \obl(R', d_1) + \obl(R', d_2) \leq C \obl(R, d) + C \obl(R, d) = 2C \obl(R, d).
\end{equation*}
Thus, since $\obl(R, \cdot)$ is linear, $\obl(R', d') \leq 2 C \obl(R, d')$, as desired.
\end{proof}

\specialtogeneralLemma*

\begin{proof}[Proof of \Cref{lem:specialtogeneral}]
For real $r$, let $D_{r}$ be the set of demands where $d(s, t) \in \{0\} \cup [1, r]$. Fix $d \in D_{n^2 m}$. We'll show that $\calP$ is $\bigO{C \log m}$-competitive with $R$ on $d$. Thus, by \Cref{lem:capdemands}, $\calP$ is $\bigO{C \log m}$-competitive with $R$.

Let $\cou(s, t) = \alpha + \cut(s, t)$ and $l = \lceil \log (n^2 m) \rceil + 1$. For $i \in [2l]$, let
\begin{align*}
    d_i(s, t) &= d(s, t) \ind{\frac{d(s, t)}{\cou(s, t)} \in [2^{i - l - 1}, 2^{i - l})}\\
    d'_i(s, t) &= \cou(s, t) \ind{(s, t) \in \supp(d_i)}.
\end{align*}
Now, $d = \sum_{i = 1}^{2l} d_i$ and $2^{i - l - 1} d'_i \leq d_i < 2^{i - l} d'_i$. For every $i$, since $d'_i \in \spec$, there exists an oblivious routing $R'_i$ on $\calP$ such that $\obl(R'_i, d'_i) \leq C \obl(R, d'_i)$, thus by the linearity of $\obl$ in $d$,
\begin{equation*}
    2^{l-i} \obl(R'_i, d_i) \leq \obl(R'_i, d'_i) \leq C \obl(R, d'_i) \leq C 2^{l-i+1} \obl(R, d_i),
\end{equation*}
thus $\obl(R'_i, d_i) \leq 2C \obl(R, d_i) \leq 2C \obl(R, d)$. Thus, by \Cref{lem:demandsumlemma}, there exists an oblivious routing $R'$ on $\calP$ such that
\begin{equation*}
    \obl(R', d) \leq \sum_{i = 1}^{2l} \obl(R'_i, d_i) = \sum_{i : \supp(d_i) \neq \emptyset} \obl(R'_i, d_i) \leq 2C |\{i : \supp(d_i) \neq \emptyset\}| \obl(R, d).
\end{equation*}
We have $|\{i : \supp(d_i)  \neq \emptyset\}| \leq 2l$, thus $\obl(R', d) \leq 4Cl \obl(R, d) = \bigO{C \log m} \obl(R, d)$.
\end{proof}

\weaktostrongLemma*

\begin{proof}
Fix a demand $d \in D$, and let $d_0 = d$ and $s = [\lceil \log_{3/2} m \rceil]$. For $i \in [s]$, we'll define demands $d_i \in D$ and routings $R_i$ on $\calP$ such that $d_{i} \leq d_{i - 1}$, $\siz(d_i) \leq \left(\frac{2}{3}\right)^{i} \siz(d_0)$ and $\obl(R_i, d_i - d_{i - 1}) \leq 4C \obl(R, d)$. Then, by \Cref{lem:demandsumlemma}, there exists a routing $R'$ on $\calP$ such that $\obl(R', d_0 - d_s) \leq (4Cs) \obl(R, d)$. Finally, by \Cref{lem:boundcomplemma}, $\obl(R', d_s) \leq \siz(d) / m \leq \obl(R, d)$, thus $\obl(R', d) \leq (4Cs + 1) \obl(R, d) \leq \bigO{C \log m} \obl(R, d)$.

Now, fix $i \in [s]$. by the $C$-weak-competitiveness of $\calP$ with $R$ on $D$, there exists a demand $d'$ and a routing $R_i$ such that $d' \leq d_{i - 1}$, $\siz(d') \geq \frac{1}{2} \siz(d_{i - 1})$ and $\obl(R_i, d') \leq C \obl(R, d_{i - 1}) \leq C \obl(R, d)$. Let $d_i(s, t) = \ind{d'(s, t) < \frac{1}{4} d_{i - 1}(s, t)} d_{i - 1}(s, t)$. Now,
\begin{align*}
    \siz(d_i)   &= \sum_{s, t} \ind{d'(s, t) < \frac{1}{4} d_{i - 1}(s, t)} d_{i - 1}(s, t)\\
                &= \sum_{s, t} \frac{4}{3} \ind{d'(s, t) < \frac{1}{4} d_{i - 1}(s, t)} (d_{i - 1}(s, t) - d'(s, t))\\
                &\leq \sum_{s, t} \frac{4}{3} (d_{i - 1}(s, t) - d'(s, t))\\
                &= \frac{4}{3}(\siz(d_{i - 1}) - \siz(d'))\\
                &\leq \frac{2}{3} \siz(d_{i - 1})
\end{align*}
and
\begin{equation*}
    d_i(s, t) - d_{i - 1}(s, t) = d_i(s, t) \ind{d'(s, t) \geq \frac{1}{4} d_{i - 1}(s, t)} \leq 4 d'(s, t),
\end{equation*}
thus $\obl(R_i, d_i - d_{i - 1}) \leq \obl(R_i, 4d') \leq 4C\obl(R, d)$, as desired.
\end{proof}

\weakspecialdemandCorollary*

\begin{proof}[Proof of \Cref{cor:weakspecialdemands}]
Let $\calP$ be an $\alpha$-sample of $R$. For a fixed special demand $d \in \spec$, let $E_{d}$ be the event that $\calP$ is $\left(\alpha + m^{16(h + 7) \alpha^{-1}}\right)$-weakly competitive with $R$ on $d$. Let $E = \bigcap_{d \in \spec} E_d$ be the event that $\calP$ is $\left(\alpha + m^{16(h + 7) \alpha^{-1}}\right)$-weakly competitive with $R$ on $\spec$. By \Cref{lem:samplingorder}, $\prob{\overline{E_b}} = 1 - \prob{E_b} \leq m^{-(h + 3) |\supp(d)|}$, thus by a union bound,
\begin{equation*}
    \prob{\overline{E}} = \prob{\bigcup_{d \in \spec} \overline{E_d}} \leq \sum_{d \in \spec} \prob{\overline{E_b}} = \sum_{k} \sum_{\substack{d \in \spec\\|\supp(d)| = k}} \prob{\overline{E_b}} \leq \sum_{k} \sum_{\substack{d \in \spec\\|\supp(d)| = k}} m^{-(h + 3)k}
\end{equation*}
where $k$ goes from $1$ to $n^2$, as on an empty demand, $\calP$ is $0$-competitive with $R$ with probability $1$. Let $s_k := |\{d \in \spec : |\supp(d)| = k\}|$. We have $s_k \leq \binom{n^2}{k} \leq n^{2k} \leq m^{2k}$, since a demand $d \in \spec$ is uniquely defined by its support and $\supp(d) \subset V \times V$. Thus,
\begin{equation*}
     \sum_{k} \sum_{\substack{d \in \spec\\|\supp(d)| = k}} m^{-(h + 3)k} = \sum_{k} s_k m^{-(h + 3)k} \leq \sum_{k} m^{-(h + 1)k} \leq m^{-h}
\end{equation*}
as $m \geq n \geq 3$ and $h \geq 1$. Thus $\prob{E} = 1 - \prob{\overline{E}} \geq 1 - m^{-h}$, as desired.
\end{proof}

\subsection{$\alpha$-Sample Corollary and the Rounding Lemma}\label{sec:alphasampleroundinglemma}

\actualmaincorollary*

\begin{proof}
We may assume that $\alpha \geq 2$, as no guarantees for $\alpha = 1$ are made. Let $G_2 = (V_2, E_2)$ be the $n + 2n^2$-vertex graph that is $G$ with two added vertices $v_{s, t, 1}$ and $v_{s, t, 2}$ for every vertex pair in the original graph and two added edges $(v_{s, t, 1}, s)$ and $(t, v_{s, t, 2})$ connecting the new vertices to $s$ and $t$ respectively.

Let $g$ be a function mapping paths in $G$ to paths in $G_2$, such that for $p \in P(s, t)$, $g(p)$ is the $(v_{s, t, 1}, v_{s, t, 2})$-path $(v_{s, t, 1}, s) + p + (t, v_{s, t, 2})$ where $+$ for paths denotes concatenation. Let $R_2$ be an arbitrary oblivious routing on $\calP_2$ such that $R_2(v_{s, t, 1}, v_{s, t, 2} = g(R(s, t))$. With slight abuse of notation, for a demand $d$ on $G$, let $d_2$ be a demand on $G_2$ such that $d_2(v_{s, t, 1}, v_{s, t, 2}) = d(s, t)$ and $d_2$ is zero for all other vertex pairs. Now, for every demand $d$ on $G$, we have
\begin{equation*}
    \text{cong}_{G_2}(R_2, d_2) = \max\left(\obl(R, d), \max_{s, t} d(s, t)\right) \leq \left(\obl(R, d) + \max_{s, t} d(s, t)\right).
\end{equation*}

Now, let $D$ be the set of \zeroonedemands on $G$ and $D_2 = \{d_2 : d \in D\}$ be a set of demands on $G_2$. Then, $D_2$ satisfies the condition for the subset of \zeroonedemands that \Cref{thm:mainthm} requires. We apply \Cref{thm:mainthm} to $G_2$, $R_2$ and $\alpha - 1$ to obtain a $(\alpha + \text{cut}_{G_2})$-sample $\calP_2$ such that, with high probability, $\calP_2$ is
\begin{equation*}
    \bigO{\log^2(n) \left(\alpha + n^{\bigO{\alpha^{-1}}}\right)}
\end{equation*}
-competitive with $R_2$ and
\begin{equation*}
    \bigO{\log(n) \left(\alpha + n^{\bigO{\alpha^{-1}}}\right)}
\end{equation*}
-competitive with $R_2$ on $D_2$. From this point on, assume both hold.

Let $\calP$ be the semi-oblivious routing on $G$ such that $P(s, t) = \{g^{-1}(p) : p \in P_2(v_{s, t, 1}, v_{s, t, 2})\}$. The distribution of $\calP$ is identical to that of an $\alpha$-sample of $R$, since $P_2(v_{s, t, 1}, v_{s, t, 2})$ consists of $\alpha - 1 + \text{cut}_{G_2}(v_{s, t, 1}, v_{s, t, 2}) = \alpha$ samples from $R_2(v_{s, t, 1}, v_{s, t, 2}) = g(R(s, t))$. Additionally, for every demand $d$ on $G$, we have
\begin{align*}
    \oblreal(\calP, d) &\leq \oblreal(\calP_2, d_2)\\
                        &\leq \bigO{\log^2(n) \left(\alpha + n^{\bigO{\alpha^{-1}}}\right)} \obl(R_2, d_2)\\
                        &= \bigO{\log^2(n) \left(\alpha + n^{\bigO{\alpha^{-1}}}\right)} (\obl(R, d) + \max_{s, t} d(s, t)),
\end{align*}
and for $d \in D$,
\begin{align*}
    \oblreal(\calP, d) &\leq \oblreal(\calP_2, d_2)\\
                        &\leq \bigO{\log(n) \left(\alpha + n^{\bigO{\alpha^{-1}}}\right)} \obl(R_2, d_2)\\
                        &= \bigO{\log(n) \left(\alpha + n^{\bigO{\alpha^{-1}}}\right)} (\obl(R, d) + 1),
\end{align*}
as desired.
\end{proof}

\roundlemma*

\begin{proof}
For $(s, t) \in V \times V$, $i \in [d(s, t)]$ and $p \in \supp(R(s, t))$, let $X(s, t)_{i, p}$ be 0/1-random variables such that $\sum_i X(s, t)_{i, p} = 1$, and define $R'$ such that $\prob{R'(s,t) = p} := \frac{1}{d(s, t)} \sum_{i} X(s, t)_{i, p}$. Then, $R'$ is a routing on $\supp(R)$ that is integral on $d$. For $e \in E$, let
\begin{equation*}
    Y_{e} = \sum_{(s, t)} \sum_{\substack{p \in \supp(R((s, t))\\e \in p}} \sum_i X(s, t)_{i, p}.
\end{equation*}
Now, for an edge $e$, we have
\begin{equation*}
    \obl(R', d, e) = \sum_{(s, t)} \sum_{\substack{p \in \supp(R((s, t))\\e \in p}} d(s, t) \prob{R'(s,t) = p} = \sum_{(s, t)} \sum_{\substack{p \in \supp(R((s, t))\\e \in p}} \sum_{i} X(s, t)_{i, p} = Y_e.
\end{equation*}
We have $\expec{\sum_{i} X(s, t)_{i, p}} = d(s, t) \prob{R(s,t) = p}$, thus \begin{equation*}
    \obl(R, d, e) = \sum_{(s, t)} \sum_{\substack{p \in \supp(R((s, t))\\e \in p}} d(s, t) \prob{R(s,t) = p} = \expec{\sum_{(s, t)} \sum_{\substack{p \in \supp(R((s, t))\\e \in p}} \sum_{i} X(s, t)_{i, p}} = \expec{Y_e}
\end{equation*}
thus $\expec{Y_e} \leq \obl(R, d)$, and by a union bound,
\begin{align*}
    \prob{\obl(R', d) \geq 2 \obl(R, d) + 3 \ln m}
        &\leq \sum_{e \in E} \prob{Y_e \geq 2 \obl(R, d) + 3\ln m}\\
        &\leq \sum_{e \in E} \prob{Y_e \geq 2\expec{E}\left[Y_e\right] + 3\ln m}.
\end{align*}

By \Cref{lem:sumnegativeassociation} and \Cref{lem:indnegativeassociation}, the variables $X(s, t)_{i, p}$ are negatively associated, thus $Y_e$ is the sum of negatively associated 0/1-random variables. Letting $\delta_e = 1 + \frac{3 \ln m}{\expec{Y_e}} \geq 2$, by Chernoff (\Cref{lem:secondchernoffnegativeassociation}),
\begin{equation*}
    \prob{Y_e \geq 2\expec{Y_e} + 3 \ln m} = \prob{Y_e \geq (1 + \delta_e) \expec{Y_e}} \leq \exp\left(-\frac{\delta_e^2 \expec{Y_e}}{2 + \delta_e}\right) < m^{-1}.
\end{equation*}
Thus,
\begin{equation*}
    \prob{\obl(R', d) \geq 2 \obl(R, d) + 3 \ln m} \leq \sum_{e \in E} \prob{Y_e > \expec{Y_e} + 3\ln m} < \sum_{e \in E} m^{-1} < 1.
\end{equation*}
Thus, $R'$ has the required properties with positive probability, thus a $R'$ satisfying the properties exists.
\end{proof}

\subsection{Results of \Cref{sec:formal-results}}\label{sec:fullproofsformalresults}

\basiccorollary*

\begin{proof}
Let $R$ be a $\bigO{\log n}$-competitive oblivious routing on $G$ that exists by \cite{Racke08} and let $\calP$ be an $\alpha$-sample of $R$ for $\alpha$ we'll determine later. By \Cref{cor:actualmaincorollary}, with high probability, for every \zeroonedemand $d$,
\begin{align*}
    \oblreal(\calP, d)
        &\leq \bigO{\log(n) \left(\alpha + n^{\bigO{\alpha^{-1}}}\right)} \left(\obl(R, d) + 1\right)\\
        &\leq \bigO{\log(n) \left(\alpha + n^{\bigO{\alpha^{-1}}}\right)} \bigO{\log n} \optint(d)\\
        &= \bigO{\log(n)^2 \left(\alpha + n^{\bigO{\alpha^{-1}}}\right)} \optint(d).
\end{align*}
Assume it is. Now, by \Cref{cor:integraliseasy},
\begin{equation*}
    \oblint(\calP, d) \leq 2 \oblreal(\calP, d) + 3\ln m \leq \bigO{\log(n)^2 \left(\alpha + n^{\bigO{\alpha^{-1}}}\right)} \optint(d).
\end{equation*}
Thus, since $n^{\frac{\log \log n}{2 \log n}} = 2^{\frac{1}{2} \log \log n} = \sqrt{\log n}$, there exists $\alpha = \bigO{\frac{\log n}{\log \log n}}$ for which $n^{\bigO{\alpha^{-1}}} = \bigO{\sqrt{\log n}}$. For such $\alpha$, we obtain a $\bigO{\frac{\log n}{\log \log n}}$-sparse integral semi-oblivious routing that is $\bigO{\frac{\log^3 n}{\log \log n}}$-competitive on \zeroonedemands.

\end{proof}

\basicalphacorollary*

\begin{proof}
Let $R$ be a $\bigO{\log n}$-competitive oblivious routing on $G$ that exists by \cite{Racke08}, and let $\calP$ be an $\alpha$-sample of $R$. By \Cref{cor:actualmaincorollary}, with high probability, for every \zeroonedemand $d$,
\begin{align*}
    \oblreal(\calP, d)
        &\leq \bigO{\log(n) \left(\alpha + n^{\bigO{\alpha^{-1}}}\right)} \left(\obl(R, d) + 1\right)\\
        &\leq \bigO{\log(n) \left(\alpha + n^{\bigO{\alpha^{-1}}}\right)} \bigO{\log n} \optint(d)\\
        &= \bigO{\log(n)^2 \left(\alpha + n^{\bigO{\alpha^{-1}}}\right)} \optint(d).
\end{align*}
Assume it is. Now, by \Cref{cor:integraliseasy},
\begin{equation*}
    \oblint(\calP, d) \leq 2 \oblreal(\calP, d) + 3\ln m \leq \bigO{\log(n)^2 \left(\alpha + n^{\bigO{\alpha^{-1}}}\right)} \optint(d).
\end{equation*}

Since $\alpha = o\left(\frac{\log n}{\log \log n}\right)$, we have $n^{\bigO{\alpha^{-1}}} = n^{\omega\left(\frac{\log \log n}{\log n}\right)} = \log^{\omega(1)} n$. Thus, for large enough $n$, we have
\begin{equation*}
    \oblint(\calP, d) \leq \bigO{\log(n)^2 \left(\alpha + n^{\bigO{\alpha^{-1}}}\right)} \optint(d) \leq n^{2 \bigO{\alpha^{-1}}} \optint(d) 
\end{equation*}
which is still $n^{\bigO{\alpha^{-1}}} \optint(d)$. Since we can select the function $\bigO{\alpha^{-1}}$ freely, we can make the claim hold for small $n$ as well.
\end{proof}

\cutcorollary*

\begin{proof}
Let $R$ be a $\bigO{\log n}$-competitive oblivious routing on $G$ that exists by \cite{Racke08} and let $\calP$ be an $(\alpha + \cut)$-sample of $R$ for $\alpha$ we'll determine later. By \Cref{thm:mainthm}, with high probability, $\calP$ is $\bigO{\log^2(n) \left(\alpha + n^{\bigO{\alpha^{-1}}}\right)}$-competitive with $R$. Assume it is. Now, by \Cref{cor:integraliseasy}, for every integral demand $d$,
\begin{align*}
    \oblint(\calP, d)   &\leq 2 \oblreal(\calP, d) + 3\ln m\\
                        &\leq \bigO{\log(n)^2 \left(\alpha + n^{\bigO{\alpha^{-1}}}\right)} \obl(R, d) + \bigO{\log n}\\
                        &\leq \bigO{\log(n)^2 \left(\alpha + n^{\bigO{\alpha^{-1}}}\right)} \bigO{\log n} \optint(d)\\
                        &= \bigO{\log(n)^3 \left(\alpha + n^{\bigO{\alpha^{-1}}}\right)} \optint(d).
\end{align*}
Thus, since $n^{\frac{\log \log n}{2 \log n}} = 2^{\frac{1}{2} \log \log n} = \sqrt{\log n}$, there exists $\alpha = \bigO{\frac{\log n}{\log \log n}}$ for which $n^{\bigO{\alpha^{-1}}} = \bigO{\sqrt{\log n}}$. For such $\alpha$, we obtain a $\bigO{\frac{\log n}{\log \log n}}$-sparse integral semi-oblivious routing that is $\bigO{\frac{\log^3 n}{\log \log n}}$-competitive on \zeroonedemands.

\end{proof}

\congdilationcorollary*

\begin{proof}
Assume without loss of generality that $n \geq 4$ and let $s = \left\lceil \frac{\log n}{\log \log n} + 1\right\rceil$. Let $h_1 = 1$ and $h_i = \lceil h_{i - 1} \log n \rceil$ for $i > 1$. For $i \in [s]$, let $R_i$ be a $h_i$-hop oblivious routing with hop stretch $\bigO{\log^7 n}$ and congestion approximation $\bigO{\log^2(n) \log^2(h_i \log n)} = \bigO{\log^4 n}$. By \cite{GhaffariHZ21}, such an oblivious routing exists.

Let $\calP_i$ be the $\alpha$-sparse semi-oblivious routing we obtain by applying \Cref{cor:actualmaincorollary} to $R_i$ and $\alpha = \bigO{\frac{\log n}{\log \log n}}$. Define $\calP$ as $P(s, t) := \bigcup_{i \in [s]} P_i(s, t)$. Now, $\calP$ is $\bigO{\left(\frac{\log n}{\log \log n}\right)^2}$-sparse, and for any demand $d$ and routing $R$, for the minimum integer $j$ such that $h_j \geq \dil(R, d)$ (guaranteed to exist as $h_s \geq n$ and any routing can be made vertex-simple while not increasing congestion or dilation),
\begin{align*}
    \oblreal(\calP_i, d)
        &\leq \bigO{\log(n) \left(\alpha + n^{\bigO{\alpha^{-1}}}\right)} \left(\obl(R_i, d) + 1\right)\\
        &\leq ( \poly \log(n) ) \optint^{(h_i)}(d)\\
        &\leq ( \poly \log(n) ) \obl(R, d).
\end{align*}
Thus, by \Cref{cor:integraliseasy},
\begin{equation*}
    \oblint(\calP_i, d) \leq 2 \oblreal(\calP_i, d) + 3\ln m \leq ( \poly \log(n) ) \obl(R, d).
\end{equation*}
Thus, there exists a routing $R'$ on $\calP_i$ (thus also on $\calP$) that is integral on $d$ such that $\obl(R', d) \leq ( \poly \log(n) ) \obl(R, d)$. But since $R'$ is on $\calP_i$, we have
\begin{equation*}
    \dil(R', d) \leq h_i \poly \log n \leq (\dil(R, d) \log n) \poly \log n.
\end{equation*}
\end{proof}

\congdilationsparsecorollary*

\begin{proof}
Assume without loss of generality that $n \geq 4$ and let $s = \left\lceil \alpha + 1\right\rceil$. Let $h_1 = 1$ and $h_i = \lceil h_{i - 1} n^{\alpha^{-1}} \rceil$ for $i > 1$. For $i \in [s]$, let $R_i$ be a $h_i$-hop oblivious routing with hop stretch $\bigO{\log^7 n}$ and congestion approximation $\bigO{\log^2(n) \log^2(h_i \log n)} = \bigO{\log^4 n}$. By \cite{GhaffariHZ21}, such an oblivious routing exists.

Let $\calP_i$ be the $\alpha$-sparse semi-oblivious routing we obtain by applying \Cref{cor:actualmaincorollary} to $R_i$. Define $\calP$ as $P(s, t) := \bigcup_{i \in [s]} P_i(s, t)$. Now, $\calP$ is $\bigO{\alpha^2}$-sparse, and for any demand $d$ and routing $R$, for the minimum integer $j$ such that $h_j \geq \dil(R, d)$ (guaranteed to exist as $h_s \geq n$ and any routing can be made vertex-simple while not increasing congestion or dilation),
\begin{align*}
    \oblreal(\calP_i, d)
        &\leq \bigO{\log(n) \left(\alpha + n^{\bigO{\alpha^{-1}}}\right)} \left(\obl(R_i, d) + 1\right)\\
        &\leq n^{\bigO{\alpha^{-1}}} \optint^{(h_i)}(d)\\
        &\leq n^{\bigO{\alpha^{-1}}} \obl(R, d).
\end{align*}
Thus, by \Cref{cor:integraliseasy},
\begin{equation*}
    \oblint(\calP_i, d) \leq 2 \oblreal(\calP_i, d) + 3\ln m \leq n^{\bigO{\alpha^{-1}}} \obl(R, d).
\end{equation*}
Thus, there exists a routing $R'$ on $\calP_i$ (thus also on $\calP$) that is integral on $d$ such that $\obl(R', d) \leq n^{\bigO{\alpha^{-1}}} \obl(R, d)$. But since $R'$ is on $\calP_i$, we have
\begin{equation*}
    \dil(R', d) \leq h_i \poly \log n \leq (\dil(R, d) n^{\bigO{\alpha^{-1}}}) \poly \log n = \dil(R, d) n^{\bigO{\alpha^{-1}}}.
\end{equation*}
where the $n^{\bigO{\alpha^{-1}}}$-term absorbs polylogarithmic factors.
\end{proof}

\section{Negative Association}\label{sec:negativeassociation}

A set of random variables being negatively associated is a weaker guarantee than the random variables being independent which still lets us prove properties for the variables similar to those of independent random variables. For an overview of negative association, see \cite{Kum83}.

\begin{definition}[Negatively associated random variables]

Random variables $X = (X_1, \dots, X_n)$ are \textit{negatively associated} if for every two functions $f(X)$ and $g(X)$ that depend on disjoint sets of indices $I$ and $J$ of $[n]$, that are either both monotone increasing or monotone decreasing, we have
\begin{equation*}
    \expec{f(X) g(X)} \leq \expec{f(X)} \expec{g(X)}.
\end{equation*}
\end{definition}

The following two lemmas let us easily show the negative-associatedness of random variables.
\begin{lemma}\label{lem:sumnegativeassociation}
If $X = (X_1, \dots, X_n)$ are zero-one random variables, exactly one of which is $1$, they are negatively associated.
\end{lemma}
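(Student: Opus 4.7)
The plan is to unpack the definition and directly compute both sides of the inequality $\expec{f(X)g(X)} \leq \expec{f(X)}\expec{g(X)}$ using the very restrictive structure of $X$: the sample space has only $n$ outcomes, one for each possible location $k$ of the unique coordinate equal to $1$. Write $p_k := \Pr[X_k = 1]$, and let $I, J \subseteq [n]$ be the disjoint index sets on which $f$ and $g$ depend. It suffices to treat the case where both $f, g$ are monotone increasing; the decreasing case reduces to this by replacing $(f, g)$ with $(-f, -g)$ since $\expec{(-f)(-g)} = \expec{fg}$.

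Next, I would parametrize the possible values of $f$ and $g$ by which indicator is active. Set $f_I := f(\mathbf{0})$ (the value of $f$ when no variable in $I$ is $1$) and $f_i := f(e_i)$ for $i \in I$ (the value when $X_i = 1$); similarly $g_J$ and $g_j$ for $j \in J$. Monotonicity gives $\Delta_i := f_i - f_I \geq 0$ and $\delta_j := g_j - g_J \geq 0$. By disjointness of $I, J$, the events driving $f$ and $g$ never overlap: on $\{X_i = 1\}$ with $i \in I$ we have $g = g_J$, on $\{X_j = 1\}$ with $j \in J$ we have $f = f_I$, and on all remaining outcomes $f = f_I$, $g = g_J$.

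With this bookkeeping the rest is a one-line calculation. Let $A := \sum_{i \in I} p_i \Delta_i \geq 0$ and $B := \sum_{j \in J} p_j \delta_j \geq 0$. Then
\begin{align*}
\expec{f(X)} &= f_I + A, \qquad \expec{g(X)} = g_J + B, \\
\expec{f(X)g(X)} &= f_I g_J + g_J A + f_I B,
\end{align*}
where the expression for $\expec{fg}$ uses that the $(I\text{-active})$, $(J\text{-active})$, and $(\text{neither-active})$ cases partition the sample space. Subtracting, $\expec{f}\expec{g} - \expec{fg} = AB \geq 0$, which is exactly the desired inequality.

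The only mild obstacle is being careful with the bookkeeping in the third display: one must verify that $\expec{fg}$ really decomposes cleanly into only three types of contributions, which is where the disjointness of $I$ and $J$ (combined with the ``exactly one is $1$'' hypothesis) does all the work; without both, a cross-term $f_i g_j$ could appear and the argument would collapse. Once that observation is in place, the inequality follows from nothing more than the non-negativity of $A$ and $B$.
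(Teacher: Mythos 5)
Your proof is correct: the ``exactly one coordinate equals $1$'' hypothesis collapses the sample space to $n$ outcomes, your three-case decomposition of $\expec{f(X)g(X)}$ is valid precisely because disjointness of $I$ and $J$ rules out any cross-term $f_ig_j$, the reduction of the decreasing case to the increasing case via $(-f,-g)$ is sound, and the algebra $\expec{f}\expec{g}-\expec{fg}=AB\ge 0$ checks out. Note that the paper itself does not prove this lemma at all --- it is quoted as a standard fact from the negative-association literature (\cite{Kum83}) --- so your argument is a welcome self-contained justification rather than a variant of an in-paper proof. For comparison, the classical proof of this ``zero--one'' lemma is a slicker version of exactly your idea: after the same reduction to increasing $f,g$, replace $f$ by $f-f(\mathbf{0})$ and $g$ by $g-g(\mathbf{0})$ (which changes neither side's covariance, i.e.\ the quantity $\expec{fg}-\expec{f}\expec{g}$, and preserves monotonicity and nonnegativity); then $f$ is nonzero only on $I$-active outcomes and $g$ only on $J$-active outcomes, so $f(X)g(X)\equiv 0$ pointwise and $\expec{fg}=0\le\expec{f}\expec{g}$ immediately. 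Your explicit bookkeeping with $A$ and $B$ proves the same thing, just with a little more computation; in fact your identity $\expec{f}\expec{g}-\expec{fg}=AB$ is slightly stronger, as it gives the exact value of the covariance gap.
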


\begin{lemma}\label{lem:indnegativeassociation}
If $X = (X_1, \dots, X_n)$ are negatively associated, $Y = (Y_1, \dots, Y_m)$ are negatively associated, and $X$ and $Y$ are independent, then $Z = (X_1, \dots, X_n, Y_1, \dots, Y_m)$ are negatively associated.
\end{lemma}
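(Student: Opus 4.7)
The plan is to prove the lemma directly from the definition of negative association by conditioning, peeling off the $X$-coordinates first and then the $Y$-coordinates. Fix two functions $f, g$ on $Z = (X, Y) = (X_1,\ldots,X_n, Y_1,\ldots,Y_m)$ that depend on disjoint index sets and are simultaneously monotone (WLOG both nondecreasing; the nonincreasing case is symmetric). Split the supporting index sets as $I = I_X \cup I_Y$ and $J = J_X \cup J_Y$ according to whether each index is in $[n]$ (an $X$-coordinate) or in $\{n+1,\ldots,n+m\}$ (a $Y$-coordinate). Disjointness of $I$ and $J$ forces disjointness of $I_X$ from $J_X$ and of $I_Y$ from $J_Y$.

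The first step is to apply negative association of $X$ pointwise in $Y$. For every fixed outcome $y$, the functions $x \mapsto f(x, y)$ and $x \mapsto g(x, y)$ are both nondecreasing and depend on the disjoint index sets $I_X$ and $J_X$, so
\begin{equation*}
    \mathbb{E}\bigl[f(X, y)\, g(X, y)\bigr] \;\leq\; \mathbb{E}\bigl[f(X, y)\bigr] \cdot \mathbb{E}\bigl[g(X, y)\bigr].
\end{equation*}
Define $F(y) := \mathbb{E}[f(X, y)]$ and $G(y) := \mathbb{E}[g(X, y)]$. Since $f$ is nondecreasing jointly in $(x, y)$, $F$ is a nondecreasing function of $y$, and because $f$ does not depend on coordinates outside $I_X \cup I_Y$, the marginalization leaves $F$ depending only on the $I_Y$-coordinates of $y$. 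The same holds for $G$ and $J_Y$.

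The second step uses independence of $X$ and $Y$ together with negative association of $Y$. Independence gives
\begin{equation*}
    \mathbb{E}[f(Z) g(Z)] \;=\; \mathbb{E}_Y\bigl[\mathbb{E}_X[f(X, Y)\, g(X, Y)]\bigr] \;\leq\; \mathbb{E}_Y\bigl[F(Y) G(Y)\bigr],
\end{equation*}
using the pointwise inequality from the first step inside the outer expectation. Now $F$ and $G$ are nondecreasing functions of $Y$ depending on the disjoint index sets $I_Y$ and $J_Y$, so by negative association of $Y$,
\begin{equation*}
    \mathbb{E}_Y[F(Y) G(Y)] \;\leq\; \mathbb{E}_Y[F(Y)] \cdot \mathbb{E}_Y[G(Y)] \;=\; \mathbb{E}[f(Z)] \cdot \mathbb{E}[g(Z)],
\end{equation*}
where the last equality again invokes independence of $X$ and $Y$. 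Chaining these inequalities yields the desired bound $\mathbb{E}[fg] \leq \mathbb{E}[f]\mathbb{E}[g]$, proving $Z$ is negatively associated.

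The only real subtlety is verifying that the two derived functions $F(Y)$ and $G(Y)$ are (i) monotone in the same direction as $f$ and $g$ and (ii) still depend only on the correct disjoint index sets $I_Y, J_Y$; both properties follow immediately from the fact that marginalizing out an independent coordinate preserves monotonicity of expectations and cannot introduce dependence on coordinates that did not appear in $f$ or $g$ to begin with. No other machinery is needed; the argument is a two-step conditioning, and the case where $f, g$ are both nonincreasing is handled identically (or by replacing $f$ with $-f$ and $g$ with $-g$, both nondecreasing, noting that negative association is preserved under multiplication by constants and additive shifts).
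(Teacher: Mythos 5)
Your proof is correct. The paper does not actually prove this lemma --- it is stated as a known closure property of negative association and delegated to the cited literature --- and your two-step conditioning argument (apply negative association of $X$ pointwise in $y$, observe that $F(y)=\mathbb{E}[f(X,y)]$ and $G(y)=\mathbb{E}[g(X,y)]$ remain monotone and supported on the disjoint sets $I_Y, J_Y$, then apply negative association of $Y$ and independence to identify $\mathbb{E}_Y[F(Y)]=\mathbb{E}[f(Z)]$) is exactly the standard proof of this property. The only cosmetic remark is that the final parenthetical about negating $f$ and $g$ is unnecessary: the definition used here already covers pairs of nonincreasing functions, so the decreasing case goes through verbatim, as you also note.
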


Applying the definition of negative association to indicator functions on the sum of the relevant indices, we get
\begin{lemma}\label{lem:negativeassociationindep}
Let $X = (X_1, \dots, X_n)$ be negatively associated, $I_1, \dots, I_m$ be disjoint subsets of $[n]$, $X_{I} = \sum_{i \in I} X_i$ and $\gamma_1, \dots, \gamma_m$ be real numbers. Then,
\begin{equation*}
    \prob{\bigcap_{j} X_{I_j} \geq \gamma_j} \leq \prod_{j} \prob{X_{I_j} \geq \gamma_j}.
\end{equation*}
\end{lemma}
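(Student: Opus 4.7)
The plan is to rewrite each event as the indicator $f_j(X) := \ind{X_{I_j} \geq \gamma_j}$, which is a monotone non-decreasing function of the coordinates $(X_i)_{i \in I_j}$ and takes values in $\{0,1\}$. Then the two sides of the claimed inequality become
\begin{equation*}
    \prob{\bigcap_j X_{I_j} \geq \gamma_j} = \expec{\prod_j f_j(X)}, \qquad \prod_j \prob{X_{I_j} \geq \gamma_j} = \prod_j \expec{f_j(X)},
\end{equation*}
so it suffices to show $\expec{\prod_j f_j(X)} \leq \prod_j \expec{f_j(X)}$ whenever $f_1,\dots,f_m$ are non-negative, monotone non-decreasing, and depend on pairwise disjoint index sets $I_1,\dots,I_m$.

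I would prove this by induction on $m$. The base case $m=1$ is trivial, and $m=2$ is exactly the defining inequality of negative association applied to the functions $f_1,f_2$ on the disjoint sets $I_1,I_2$ (both monotone non-decreasing). For the inductive step, set $F(X) := \prod_{j=1}^{m-1} f_j(X)$, which depends only on indices in $I^* := I_1 \cup \dots \cup I_{m-1}$ and is disjoint from $I_m$. Apply the $m=2$ instance to $F$ and $f_m$ to get $\expec{F(X) f_m(X)} \leq \expec{F(X)} \expec{f_m(X)}$, and then apply the inductive hypothesis to $F$ to bound $\expec{F(X)} \leq \prod_{j=1}^{m-1} \expec{f_j(X)}$.

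The one subtlety, which I expect to be the main (small) obstacle, is verifying the hypotheses needed to invoke negative association on the pair $(F, f_m)$: we need $F$ to be monotone non-decreasing in its coordinates. This follows because a product of non-negative monotone non-decreasing functions is monotone non-decreasing. Concretely, each $f_j \in \{0,1\}$ is non-negative and non-decreasing in $(X_i)_{i \in I_j}$; since the $I_j$'s are disjoint, varying any single coordinate of $X$ restricted to $I^*$ affects at most one factor $f_j$, and increasing that coordinate can only increase that factor from $0$ to $1$ while all other factors are non-negative, so the product cannot decrease. With $F$ monotone non-decreasing and non-negative on $I^*$, and $f_m$ monotone non-decreasing and non-negative on $I_m$, the two-function negative association inequality applies and closes the induction, yielding the lemma.
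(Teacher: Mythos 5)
Your proof is correct and follows the same route the paper intends: the paper simply asserts the lemma by ``applying the definition of negative association to indicator functions on the sum of the relevant indices,'' and your argument spells this out, with the indicator functions $\ind{X_{I_j} \geq \gamma_j}$ being nonnegative, nondecreasing, and supported on disjoint index sets, plus the routine induction on $m$ that the paper leaves implicit.
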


Chernoff bounds hold for negatively associated variables just like they hold for independent random variables. We use the two following variants.
\begin{restatable}[Chernoff bound 1]{lemma}{firstchernoffnegativeassociation}\label{lem:firstchernoffnegativeassociation}
Let $X = (X_1, \dots, X_n)$ be negatively associated 0/1-random variables, $I \subseteq [n]$ be a subset of indices, $X_I = \sum_{i \in I} X_i$ and $\mu = \expec{X_I}$. Then, for all $\delta \geq 2$,
\begin{equation*}
    \prob{X_I \geq \delta \mu} \leq \exp\left(-\frac{1}{4} \delta \mu \ln (\delta)\right).
\end{equation*}
\end{restatable}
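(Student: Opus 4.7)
The plan is to reduce to the standard multiplicative Chernoff inequality, then simplify algebraically to the stated form. The only novelty is the negative-association assumption, but this costs nothing because Chernoff's derivation only uses the moment generating function via $\expec{\exp(t X_I)} = \expec{\prod_{i \in I} \exp(t X_i)}$, and since $x \mapsto \exp(t x)$ is monotone increasing in each coordinate (for $t \ge 0$), negative association gives $\expec{\prod_{i \in I} \exp(t X_i)} \le \prod_{i \in I} \expec{\exp(t X_i)}$. This is exactly the point at which the usual proof uses independence, so the identical derivation yields the standard multiplicative Chernoff bound
\begin{equation*}
    \prob{X_I \ge (1 + \eps) \mu} \le \left( \frac{e^{\eps}}{(1 + \eps)^{1 + \eps}} \right)^{\mu}
\end{equation*}
for all $\eps \ge 0$. (If I want to avoid re-deriving Chernoff for negatively associated variables, I can cite this as a standard fact from \cite{Kum83}.)

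Next, I would instantiate this with $1 + \eps = \delta$ (so $\eps = \delta - 1 \ge 1$ since $\delta \ge 2$). Taking logarithms, this gives
\begin{equation*}
    \prob{X_I \ge \delta \mu} \le \exp\bigl( \mu (\delta - 1 - \delta \ln \delta) \bigr).
\end{equation*}
It remains to show $\mu(\delta - 1 - \delta \ln \delta) \le -\tfrac{1}{4} \delta \mu \ln \delta$, equivalently $\tfrac{3}{4} \delta \ln \delta \ge \delta - 1$. For $\delta = 2$ this reads $\tfrac{3}{2} \ln 2 \approx 1.04 \ge 1$, and I would verify it for all $\delta \ge 2$ by checking that $f(\delta) := \tfrac{3}{4} \delta \ln \delta - (\delta - 1)$ satisfies $f(2) > 0$ and $f'(\delta) = \tfrac{3}{4}(\ln \delta + 1) - 1 > 0$ for $\delta \ge 2$ (since $\ln 2 + 1 > 4/3$). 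Combining gives the claimed bound.

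The main (small) obstacle is the tightness of the constant: $\delta = 2$ is essentially the boundary where $\tfrac{3}{4} \delta \ln \delta \ge \delta - 1$ just barely holds, so I need to be slightly careful that the constants line up. If I wanted a cleaner presentation I could also note that the statement is trivial when $\mu = 0$ (both sides are $0$/$1$ respectively in the degenerate interpretation) and otherwise the above derivation goes through with no edge cases, since $\delta \ge 2$ rules out the small-deviation regime where the sharper $\exp(-\delta \mu \ln \delta / 4)$ bound would be wrong.
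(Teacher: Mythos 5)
Your proof is correct. Note that the paper does not actually prove this lemma: it is stated as a standard fact, with the surrounding text (and the reference \cite{Kum83}) asserting that Chernoff bounds carry over verbatim to negatively associated variables because the only step in the usual derivation that uses independence is the factorization of the moment generating function --- which is precisely the reduction you make, so your argument is the one the paper implicitly relies on. Your algebra also checks out: substituting $1+\varepsilon=\delta$ gives exponent $\mu(\delta-1-\delta\ln\delta)$, and the needed inequality $\delta-1\leq\tfrac{3}{4}\delta\ln\delta$ holds for all $\delta\geq 2$ (value $\tfrac{3}{2}\ln 2\approx 1.04$ at $\delta=2$, with positive derivative since $\ln\delta+1>\tfrac{4}{3}$), yielding the stated $-\tfrac{1}{4}\delta\mu\ln\delta$; the only trivial nit is that in the degenerate case $\mu=0$ both sides equal $1$, so the bound still holds.
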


% Sketch of proof to make sure this really holds
% \begin{proof}
% \begin{equation*}
%    \prob{X_i \geq \delta \mu} \leq \left(\frac{e^{\delta - 1}}{\delta^\delta}\right)^\mu = \exp(-\mu (1 + \delta \ln(\frac{\delta}{e})) = \exp(-\mu (1 + \delta (\ln(\delta) - 1)))
% \end{equation*}
% We thus need to show $1 + \delta (\ln(\delta) - 1) - \frac{1}{4} \delta \ln \delta \geq 0$. That equals $1 + \delta(\frac{3}{4} \ln(\delta) - 1)$. For $\delta = 2$, we get $1 + 2 (\ln(2) - 1) \geq 0$. The derivative w.r.t. $\delta$ is $\frac{1}{4} (3 \ln(\delta) - 1)$. For $\delta \geq 2$, this is $\geq 0$.
% \end{proof}

\begin{restatable}[Chernoff bound 2]{lemma}{secondchernoffnegativeassociation}\label{lem:secondchernoffnegativeassociation}
Let $X = (X_1, \dots, X_n)$ be negatively associated 0/1-random variables, $I \subseteq [n]$ be a subset of indices, $X_I = \sum_{i \in I} X_i$ and $\mu = \expec{X_I}$. Then, for all $\delta > 0$,
\begin{equation*}
    \prob{X_I \geq (1 + \delta) \mu} \leq \exp\left(-\frac{\delta^2 \mu}{2 + \delta}\right).
\end{equation*}
\end{restatable}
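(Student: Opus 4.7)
The plan is to proceed by the exponential moment (Bernstein/Chernoff) method, carefully invoking negative association at the one step where independence would normally be used. First, for an arbitrary $t > 0$, I would apply Markov's inequality to the nonnegative random variable $e^{tX_I}$, obtaining
\begin{equation*}
\prob{X_I \geq (1+\delta)\mu} \;=\; \prob{e^{tX_I} \geq e^{t(1+\delta)\mu}} \;\leq\; e^{-t(1+\delta)\mu}\,\expec{e^{tX_I}}.
\end{equation*}
Since $X_I = \sum_{i \in I} X_i$, the exponential factorizes: $e^{tX_I} = \prod_{i \in I} e^{tX_i}$.

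The key step, and the main place where negative association must be invoked, is bounding $\expec{\prod_{i \in I} e^{tX_i}}$ by the product of the individual expectations. For $t > 0$, each map $x \mapsto e^{tx}$ is monotone non-decreasing, so any product of such terms over a subset of coordinates is also a coordinate-wise monotone non-decreasing function of the involved variables (and is non-negative). I would argue by induction on $|I|$: split off one index $j$, writing $\prod_{i \in I} e^{tX_i} = e^{tX_j} \cdot \prod_{i \in I \setminus \{j\}} e^{tX_i}$; since the two factors depend on disjoint index sets and are both monotone non-decreasing, the negative-association inequality $\expec{fg} \leq \expec{f}\expec{g}$ applies, after which the inductive hypothesis finishes. (This is exactly the extension used in \Cref{lem:negativeassociationindep} but applied to exponentials.) Combined with the pointwise bound $e^{tX_i} \leq 1 + (e^t - 1)X_i$ (valid since $X_i \in \{0,1\}$), we obtain $\expec{e^{tX_i}} \leq 1 + p_i(e^t - 1) \leq \exp(p_i(e^t - 1))$ where $p_i = \prob{X_i = 1}$, and hence $\expec{e^{tX_I}} \leq \exp(\mu(e^t - 1))$.

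The remainder is pure calculus and identical to the classic independent case. I would substitute $t = \ln(1+\delta)$, which minimizes the resulting bound, yielding the familiar form
\begin{equation*}
\prob{X_I \geq (1+\delta)\mu} \;\leq\; \left(\frac{e^{\delta}}{(1+\delta)^{1+\delta}}\right)^{\mu} \;=\; \exp\!\bigl(-\mu\bigl((1+\delta)\ln(1+\delta) - \delta\bigr)\bigr).
\end{equation*}
To get the stated form, I would appeal to the elementary inequality $(1+\delta)\ln(1+\delta) - \delta \geq \frac{\delta^2}{2+\delta}$ for all $\delta > 0$, which follows from comparing Taylor expansions or verifying that the function $h(\delta) = (1+\delta)\ln(1+\delta) - \delta - \frac{\delta^2}{2+\delta}$ satisfies $h(0) = 0$ and $h'(\delta) \geq 0$.

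The main (and only) conceptual obstacle is the extension of the pairwise negative-association inequality to an arbitrary product of exponentials, but this is a standard and robust induction: the crucial feature is that partial products of monotone non-decreasing non-negative functions remain monotone non-decreasing and non-negative, preserving the hypothesis of the negative-association definition at every inductive step. Everything after that is the textbook Chernoff derivation, carried out verbatim as in the independent case.
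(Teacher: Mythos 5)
Your proof is correct, and it is exactly the standard argument the paper relies on: the paper states this lemma without proof, citing the known fact that Chernoff bounds transfer to negatively associated variables, and the transfer rests precisely on your key step $\expec{e^{tX_I}} \leq \prod_{i \in I}\expec{e^{tX_i}}$, obtained by inductively applying the negative-association inequality to the non-negative, coordinate-wise non-decreasing factors $e^{tX_i}$ (for $t>0$). The remaining steps (Markov, $t=\ln(1+\delta)$, and the elementary bound $(1+\delta)\ln(1+\delta)-\delta \geq \delta^2/(2+\delta)$) are the textbook derivation and are carried out correctly.
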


\end{document}